\documentclass[
aps,
twocolumn,
reprint,
amsmath,
amssymb,
superscriptaddress,
longbibliography]{revtex4-2}
\usepackage{amssymb}
\usepackage{amsmath}
\usepackage{amsthm}
\usepackage[english]{babel}
\usepackage{bm}
\usepackage{bbold}
\usepackage{hyperref}
\usepackage{graphicx}
\usepackage{subfigure}
\usepackage{xcolor}
\usepackage{tabularx}
\usepackage{booktabs}
\newcommand{\ket}[1]{|#1\rangle}
\newcommand{\bra}[1]{\langle #1|}
\newcommand{\inp}[2]{\langle #1|#2\rangle}

\newcommand{\Tr}{\mathrm{Tr}}
\def\<{\langle}  
\def\>{\rangle}  

\newtheorem{lemma}{Lemma}

\hypersetup{
	colorlinks  = true,
	urlcolor    = black,
	citecolor   = blue,
	linkcolor   = blue
}

\begin{document}
	
	\title{Simultaneous Measurement of Multiple Incompatible Observables and Tradeoff in Multiparameter Quantum Estimation}

\author{Hongzhen Chen}
\email{hzchen@szu.edu.cn}
\affiliation{Institute of Quantum Precision Measurement, State Key Laboratory of Radio Frequency Heterogeneous Integration, College of Physics and Optoelectronic Engineering, Shenzhen University, Shenzhen, China}
\affiliation{Quantum Science Center of Guangdong-Hong Kong-Macao Greater Bay Area (Guangdong), Shenzhen, China}
\affiliation{Department of Mechanical and Automation Engineering, The Chinese University of Hong Kong, Shatin, Hong Kong SAR, China}

\author{Lingna Wang}
\email{lnwang@mae.cuhk.edu.hk}
\affiliation{Department of Mechanical and Automation Engineering, The Chinese University of Hong Kong, Shatin, Hong Kong SAR, China}

\author{Haidong Yuan}
\email{hdyuan@mae.cuhk.edu.hk}
\affiliation{Department of Mechanical and Automation Engineering, The Chinese University of Hong Kong, Shatin, Hong Kong SAR, China}





\begin{abstract}
\noindent\textbf{Abstract}\\
\noindent
	How well can multiple incompatible observables be implemented by a single measurement? This is a fundamental problem in quantum mechanics with wide implications for the performance optimization of numerous tasks in quantum information science. While existing studies have been mostly focusing on the approximation of two observables with a single measurement, in practice multiple observables are often encountered, for which the errors of the approximations are little understood. Here we provide a framework to study the implementation of an arbitrary finite number of observables with a single measurement. Our methodology yields novel analytical bounds on the errors of these implementations, significantly advancing our understanding of this fundamental problem. Additionally, we introduce a more stringent bound utilizing semi-definite programming that, in the context of two observables, generates an analytical bound tighter than previously known bounds. The derived bounds have direct applications in assessing the trade-off between the precision of estimating multiple parameters in quantum metrology, an area with crucial theoretical and practical implications. To validate the validity of our findings, we conducted experimental verification using a superconducting quantum processor. This experimental validation not only confirms the theoretical results but also effectively bridges the gap between the derived bounds and empirical data obtained from real-world experiments. Our work paves the way for optimizing various tasks in quantum information science that involve multiple noncommutative observables.
 
\end{abstract}



\maketitle

\noindent\textbf{Introduction}\\
\noindent
One of the distinctive features of quantum mechanics is its noncommutativity, setting it apart from classical physics. 
This noncommutativity is prominently manifested in the properties of observables, leading to phenomena that defy classical expectations. 
When multiple observables commute with each other, their simultaneous measurement is feasible through projective measurements on their shared eigenspaces. However, for noncommuting observables, exact simultaneous measurement becomes unattainable, necessitating approximation. 
A central issue in understanding and harnessing the full potential of quantum systems is then determining the degree to which a single measurement can accurately capture multiple non-commuting observables. 
Intuitively, measuring multiple noncommuting observables comes with an inherent tradeoff.
As we strive to estimate one observable with higher accuracy, the imprecision in determining other incompatible observables tends to increase. 
This tradeoff is deeply rooted in the uncertainty principle \cite{Heisenberg27}, which stands as a fundamental principle in quantum mechanics.

The standard Robertson-Schrödinger uncertainty relation \cite{Robe29,Schrodinger:1930ty}, $(\Delta X_1)^2(\Delta X_2)^2 \geq \frac{1}{4}|\Tr(\rho[X_1,X_2])|^2$ with $(\Delta X_{1/2})^2=\langle X_{1/2}^2\rangle-\langle X_{1/2}\rangle^2$, describes the impossibility of preparing quantum states with sharp distributions for non-commuting observables simultaneously, which is also referred to as the preparation uncertainty relation. 
The uncertainty relations that describe the approximation of non-commuting observables via a single measurement are called measurement uncertainty relations \cite{arthurs1965,Arthurs1988,Ozawa2003,OZAWA2004367,OZAWA2004350,OZAWA200321,Ozawa_2014,Hall2004,Branciard2013,Branciard2014,Lu2014,2014Error,Francesco2014,RevModPhys.86.1261,BUSCH2007155}. 
In the field of measurement uncertainty relations, there are two main approaches: the state-independent approach \cite{BUSCH2007155,RevModPhys.86.1261,Qin2019,Busch2013,Ma2016,Mao2022}, which establish error bounds for measurement devices regardless of the input state; and state-dependent relations \cite{arthurs1965,Arthurs1988,Ozawa2003,OZAWA2004367,OZAWA2004350,OZAWA200321,Ozawa_2014,Hall2004,Branciard2013,Branciard2014,2014Error,Lu2014}, which assess the trade-off between errors in joint measurements on a predetermined quantum state. In this article, we focus on state-dependent measurement uncertainty relations as they are particularly relevant to multi-parameter quantum estimation where the state is often constrained.

Current research in state-dependent measurement uncertainty relations predominantly concentrate on the error-tradeoff relations for approximating pairs of observables \cite{Ozawa2003,OZAWA2004367,OZAWA2004350,OZAWA200321,Ozawa_2014,Hall2004,Branciard2013,Branciard2014,2014Error,Lu2014}, yet practical applications typically require dealing with multiple observables.
Fields such as vector magnetometry \cite{HouMinimal2020,meng2023sep}, reference frame alignment \cite{PhysRevA.64.050302} and quantum imaging \cite{taylor2016quantum,ALBARELLI2020126311} all require a nuanced understanding and manipulation of three or more observables simultaneously.
The inherent uncertainty therein cannot be fully understood or quantified through the lens of pairwise uncertainty relations. 
This gap highlights a critical need for expanding the framework to encompass multiple observables, which is of both fundamental and practical importance.
However, similar to numerous challenges in quantum information science, akin to the complex quantification of multipartite entanglement \cite{Horodecki2009}, extending results from two-party scenarios to multi-partite ones often necessitates novel methodologies.

In this article, we present dual methods capable of yielding error-tradeoff relations for approximating an arbitrary number of observables. 
The first approach delivers analytical tradeoff relations for any number of observables, while the second method offers more stringent bounds through semidefinite programming. 
By merging these strategies, we are able to derive analytical tradeoff relations that are even tighter than any existing trade-off relations for two observables. Subsequently, we apply these approaches to quantum metrology, deriving tighter tradeoff relations for estimating an arbitrary number of parameters---a topic central to contemporary quantum estimation research. Furthermore, we empirically validate our findings through experimentation conducted on a superconducting quantum processor.
Our results provide significant insights into the interplay among multiple observables involved in various quantum information tasks, notably in the calibration of performance for multiparameter quantum metrology.
\bigskip
\noindent\textbf{Results}\\
\noindent\textbf{Analytical error-tradeoff relation}

\noindent
We commence by deriving an analytical measurement uncertainty relation for a general set of $n$ observables. The objective is to use a single Positive Operator-Valued Measurement (POVM), denoted  $\mathcal{M} = \{M_m\}$, to approximate the given $n$ observables ${X_1, X_2, \ldots, X_n}$ when applied to a quantum state $\rho$ and to determine relations that set limits on the minimum cumulative weighted approximation error. 

According to Neumark's dilation theorem \cite{NielC00book}, the POVM, $\mathcal{M}=\{M_m=K_m^\dagger K_m\}$, is equivalent to a projective measurement on $\rho\otimes\sigma$ in an extended Hilbert space $\mathcal{H_S}\otimes\mathcal{H_A}$, here $\sigma=\ket{\xi_0}\bra{\xi_0}$ is an ancillary state such that $( I\otimes\bra{\xi_0})U^\dagger(I\otimes \ket{\xi_m}\bra{\xi_m})U(  I\otimes\ket{\xi_0})=M_m$, where $\{|\xi_m\rangle\}$ is an orthonormal basis for the ancillary system, $U$ is a unitary operator on the extended space such that for any $|\psi\rangle$, $U|\psi\rangle\otimes \ket{\xi_0}=\sum_m K_m |\psi\rangle |\xi_m\rangle$. Denote $V_m=U^\dagger(I\otimes \ket{\xi_m}\bra{\xi_m})U$, we then have  $\Tr[(\rho\otimes\ket{\xi_0}\bra{\xi_0})V_m]=\Tr(\rho M_m)$. From the measurement, we can construct a set of commuting observables, $\{F_j=\sum_m f_j(m)V_m| 1\leq j\leq n\}$, to approximate $\{X_j\otimes I\}$ in the extended Hilbert space (see Fig.\ref{fig:setting}). The mean squared error of the approximation on the state is given by \cite{Ozawa2003,OZAWA2004367,Branciard2013}
\begin{equation}
\epsilon_j^2 = \Tr\left[(F_j-X_j\otimes I)^2\left(\rho\otimes\sigma\right)\right].
\end{equation}

\begin{figure}
    \centering
    \includegraphics[width=0.48\textwidth]{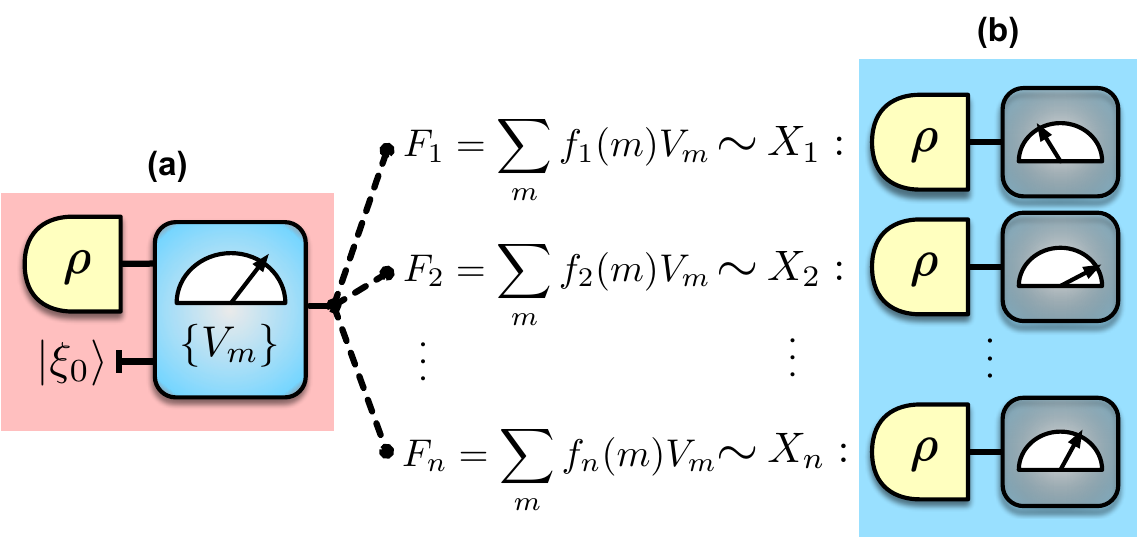}
    \caption{Simultaneous measurement of multiple observables $\{X_1,X_2,...,X_n\}$ via a single measurement. As illustrated in Fig.(a), a Positive Operator-Valued Measure (POVM) applied to $\rho$ effectively acts as a projective measurement $\{V_m\}$ on the extended state $\rho\otimes\ket{\xi_0}\bra{\xi_0}$. This setup enables the construction of a set of commuting observables $\{F_1, F_2, \ldots, F_n\}$, as depicted in Fig. (b), designed to approximate the individual measurements of each observable in the set $\{X_1, X_2, \ldots, X_n\}$.}
    \label{fig:setting}
\end{figure}

In the case of two observables, Ozawa obtained an error-tradeoff relation as \cite{Ozawa2003,OZAWA2004367}
\begin{equation}\label{eq:ozawa}
\epsilon_1\cdot\epsilon_2+\epsilon_1\cdot\Delta X_2 + \Delta X_1\cdot\epsilon_2\geq c_{12},
\end{equation}
here $\Delta X=\sqrt{\langle X^2\rangle-\langle X\rangle^2}$ is the standard deviation of an observable on the state with $\langle \,\cdot\,\rangle=\Tr(\rho \,\cdot\,)$, $c_{12}=\frac{1}{2}\Tr(\rho[X_1,X_2])$. Branciard strengthened this relation as \cite{Branciard2013,Branciard2014}
\begin{equation}\label{eq:branc}
\begin{aligned}
	&\epsilon_1^2\cdot(\Delta X_2)^2+(\Delta X_1)^2\cdot\epsilon_2^2\\
	&+2\sqrt{(\Delta X_1)^2\cdot(\Delta X_2)^2-c_{12}^2}\epsilon_1\cdot \epsilon_2 \geq c_{12}^2,
\end{aligned}
\end{equation}
which is tight for pure states. For mixed states, Ozawa further tightened the relation by replacing $c_{12}$ with $\frac{1}{2}\|\sqrt{\rho}[X_1,X_2]\sqrt{\rho}\|_1$ \cite{2014Error}. However, it is worth noting that even with this improvement, for mixed states the bound is not tight, and the geometrical method employed to derive these relations are not readily extendable to scenarios involving more than two observables. For general $n$ observables, the error-tradeoff relation is little understood.

We now present an approach that can lead to analytical tradeoff relations for an arbitrary number of observables. Let
\begin{equation}
\begin{aligned}
	\mathcal{A}_u=&\begin{pmatrix}
		\bra{u}\sqrt{\rho\otimes\sigma}E_1 \\ \vdots \\ \bra{u}\sqrt{\rho\otimes\sigma}E_n \\  \bra{u}\sqrt{\rho\otimes\sigma}(X_1\otimes I) \\
		\vdots \\ \bra{u}\sqrt{\rho\otimes\sigma}(X_n\otimes I)
	\end{pmatrix}\begin{pmatrix}
		\bra{u}\sqrt{\rho\otimes\sigma}E_1 \\ \vdots \\ \bra{u}\sqrt{\rho\otimes\sigma}E_n \\  \bra{u}\sqrt{\rho\otimes\sigma}(X_1\otimes I) \\
		\vdots \\ \bra{u}\sqrt{\rho\otimes\sigma}(X_n\otimes I)
	\end{pmatrix}^\dagger\\
	=&\begin{pmatrix}
		Q_u & R_u \\
		R_u^\dagger & S_u
	\end{pmatrix}
	\geq 0,
\end{aligned}
\end{equation}
here $E_j=F_j-X_j\otimes I$ is the error operator, $|u\rangle$ is any vector, $Q_u$, $R_u$, $S_u$ are $n\times n$ matrices with the entries given by (note $E_j$ and $X_j$ are all Hermitian operators)
\begin{equation}
\begin{aligned}
	(Q_u)_{jk}&=\bra{u}\sqrt{\rho\otimes\sigma}E_jE_k\sqrt{\rho\otimes\sigma}\ket{u},\\
	(R_u)_{jk}&=\bra{u}\sqrt{\rho\otimes\sigma}E_j(X_k\otimes I)\sqrt{\rho\otimes\sigma}\ket{u},\\
	(S_u)_{jk}&=\bra{u}\sqrt{\rho\otimes\sigma}(X_j\otimes I)(X_k\otimes I)\sqrt{\rho\otimes\sigma}\ket{u}.
\end{aligned}
\end{equation}
We can write these matrices in terms of the real and imaginary parts as $Q_u=Q_{u,\text{Re}}+iQ_{u,\text{Im}}$, $R_u=R_{u,\text{Re}}+iR_{u,\text{Im}}$, $S_u=S_{u,\text{Re}}+iS_{u,\text{Im}}$.

Given any set of states $\{\ket{u_q}\}$ such that $\sum_q\ket{u_q}\bra{u_q}= I$, we can derive a corresponding set of matrices $\{\mathcal{A}_{u_q}\}$. We then construct a matrix $\tilde{{\mathcal{A}}}$ as the sum of these matrices with each $\tilde{\mathcal{A}}_{u_q}$ being either $\mathcal{A}_{u_q}$ or its transpose, $\mathcal{A}_{u_q}^{T}$. Since both $\mathcal{A}_{u_q}$ and $\mathcal{A}_{u_q}^{T}$ are positive semi-definite, it follows that:
\begin{equation}\label{eq:Auq}
\tilde{\mathcal{A}}=\sum_q\tilde{\mathcal{A}}_{u_q}=
\begin{pmatrix}
	\tilde{Q} & \tilde{R}\\
	\tilde{R}^\dagger & \tilde{S}
\end{pmatrix}\geq 0,
\end{equation}
where the components are defined as $\tilde{Q} = \sum_q \tilde{Q}_{u_q}$, $\tilde{R} = \sum_q \tilde{R}_{u_q}$, and $\tilde{S} = \sum_q \tilde{S}_{u_q}$, with every $(\tilde{Q}_{u_q}, \tilde{R}_{u_q}, \tilde{S}_{u_q})$ being either $(Q_{u_q}, R_{u_q}, S_{u_q})$ or their complex conjugate  $(\bar{Q}_{u_q}, \bar{R}_{u_q}, \bar{S}_{u_q})$, here $\bar{M}=M_{\text{Re}}-iM_{\text{Im}}$ and for Hermitian matrix $\bar{M}=M^T$. It's important to highlight that the real parts of the matrix elements in $\tilde{\mathcal{A}}$ are unaffected by the choice between $\mathcal{A}_{u_q}$ and its transpose. Consequently, the real parts of $\tilde{Q}$ and $\tilde{S}$ remain constant and are determined solely by the original components without regard to whether they were chosen as $Q_{u_q}$ or $\bar{Q}_{u_q}$ (and correspondingly for $S_{u_q}$), with their specific values given by
\begin{equation}
\begin{aligned}
(\tilde{Q}_{\text{Re}})_{jk}&=\sum_q(\tilde{Q}_{u_q,\text{Re}})_{jk}
	=\frac{1}{2}\Tr\left[(\rho\otimes\sigma)\{E_j,E_k\}\right],\\
(\tilde{S}_{\text{Re}})_{jk}&=\sum_q(\tilde{S}_{u_q,\text{Re}})_{jk}
	=\frac{1}{2}\Tr(\rho\{X_j,X_k\}).
\end{aligned}
\end{equation}
Specifically, the diagonal elements of $\tilde{Q}$ and $\tilde{S}$ are given as $(\tilde{Q})_{jj} = \epsilon_j^2$ and $(\tilde{S})_{jj} = \Tr(\rho X_j^2)$, respectively.

From Eq.(\ref{eq:Auq}), we can derive an analytical error-tradeoff relation for approximating $n$ observables (see Sec. S1 of the Supplementary Material for a detailed derivation):
\begin{equation}\label{eq:multi_obs_boundmain}
\Tr(S_{\text{Re}}^{-1}Q_{\text{Re}}) \geq \left(\sqrt{\|S_{\text{Re}}^{-\frac{1}{2}}\tilde{S}_{\text{Im}}S_{\text{Re}}^{-\frac{1}{2}}\|_{F}+1}-1\right)^2,
\end{equation}
where $\|\cdot\|_F = \sqrt{\sum_{j,k} |(\cdot)_{jk}|^2}$ represents the Frobenius norm. In this inequality, the term $Q_{\text{Re}}$ is the sole quantity dependent on the measurement strategy and its diagonal entries correspond to the mean-square errors of the approximation. Both $S_{\text{Re}}$ and $\tilde{S}_{\text{Im}}$ are independent of the specific measurement process; instead, they are entirely determined by the inherent properties of the observables when applied to the given quantum state.

The inequality in Eq.(\ref{eq:multi_obs_boundmain}) establishes a fundamental limit on the minimum achievable errors for any POVM that approximates the given set of observables on a quantum state. It provides a bound that holds true for any choice of orthonormal basis ${|u_q\rangle}$, and the tightest bound can be obtained by optimizing over all possible ${|u_q\rangle}$. 

In the case of pure states, the selection of a specific ${|u_q\rangle}$ is not necessary. The derived analytical bound guarantees to be tighter than simply summing up Branciard's bounds for two observables pairwisely when the total number of observables exceeds four. The comparison is detailed in Section S7 of the Supplementary Material.

It is also important to recognize that Eq.(\ref{eq:Auq}) inherently implies $\tilde{S}\geq 0$, which constitutes a refined version of Robertson's preparation uncertainty relation that solely reflects the observables' properties on the state without considering any measurements. The conventional Robertson's preparation uncertainty can be viewed as a specific instance of this refinement by consistently setting $\tilde{S}_{u_q}=S_{u_q}$. This refined formulation thus also paves the way for tighter preparation uncertainty bounds with independent significance.

\bigskip
\noindent\textbf{Error-tradeoff relation via semidefinite programming}

\noindent
We proceed to introduce a secondary approach that yields even tighter tradeoff relations. This method bypasses the need for selecting specific $\{|u_q\rangle\}$ and can be formulated as semi-definite programming (SDP), enabling efficient computation through readily available algorithms such as CVX \cite{cvx,diamond2016cvxpy} and YALMIP \cite{1393890}.

Again for any POVM, $\{M_m\}\in H_S$, it can be realized as projective measurement, $\{V_m\}\in H_S\otimes H_A$, with $(I\otimes\bra{\xi_0})V_m(I\otimes\ket{\xi_0})=M_m$. We can then construct $\{F_j=\sum_mf_j(m)V_m\}$ to approximate $\{X_j\otimes I_A\}$. Let $Q$ be an $n\times n$ Hermitian matrix, with its $jk$th element given as
\begin{equation}
\begin{aligned}
	Q_{jk}=&\Tr\left[(\rho\otimes\sigma)(F_j-X_j\otimes I)(F_k-X_k\otimes I)\right]\\
	=&\Tr\left[\rho \sum_m f_j(m)M_m f_k(m)\right]-\Tr\left(\rho R_jX_k\right)\\
	&-\Tr\left(\rho X_j R_k\right)+\Tr\left(\rho X_j X_k\right),
\end{aligned}
\end{equation}
here 
$R_j=(I\otimes\bra{\xi_0})F_j(I\otimes\ket{\xi_0})=\sum_m f_j(m)M_m$ is a Hermitian matrix in $\mathcal{H}_S$. 
We let $\mathbb{S}$ be a $n\times n$ block operator whose $jk$-th block is  $\mathbb{S}_{jk}=\sum_{m}f_j(m)M_m f_k(m)$, which is itself a Hermitian matrix, and let $\mathbb{R}=\begin{pmatrix}R_1 & R_2 & \cdots & R_n\end{pmatrix}^{\dagger}$, $\mathbb{X}=\begin{pmatrix}X_1 & X_2 & \cdots & X_n\end{pmatrix}^{\dagger}$. $\mathbb{S}$ and $\mathbb{R}$ both depend on the measurement with $\mathbb{S}_{jk}=\mathbb{S}_{jk}^\dagger=\mathbb{S}_{kj}$, $R_j=R_j^\dagger$. We have $\mathbb{S}\geq \mathbb{R}\mathbb{R}^{\dagger}$ (see Sec. S2 of the Supplementary Material). $Q$ can then be rewritten as $Q=\Tr_S\left[(I_n\otimes\rho)(\mathbb{S}-\mathbb{R}\mathbb{X}^{\dagger}-\mathbb{X}\mathbb{R}^{\dagger}+\mathbb{X}\mathbb{X}^{\dagger})\right]$ and the weighted mean squared error can be written as
\begin{equation}
\begin{aligned}
	\mathcal{E}=&\Tr(WQ)=\Tr\left[(W\otimes\rho)(\mathbb{S}-\mathbb{R}\mathbb{X}^{\dagger}-\mathbb{X}\mathbb{R}^{\dagger}+\mathbb{X}\mathbb{X}^{\dagger})\right],
\end{aligned}
\end{equation}
where $W\geq 0$ is a weighted matrix, which typically takes a diagonal form as $W=\mathrm{diag}\{w_1,\cdots,w_n\}$, but can also take other forms.  

Now assume $\{R_j^{\star}\}_{j=1}^n$ and $\mathbb{S}^{\star}$ are the optimal operators 
that lead to the minimal error, 
we then have
\begin{equation}
	\begin{aligned}
		\mathcal{E}\geq 
		&\Tr\left[(W\otimes\rho)(\mathbb{S}^{\star}-\mathbb{R}^{\star}\mathbb{X}^{\dagger}-\mathbb{X}\mathbb{R}^{\star\dagger}+\mathbb{X}\mathbb{X}^{\dagger})\right]\\
		\geq &\min_{\mathbb{S},\{R_j\}_{j=1}^n} \left\{\Tr\left[(W\otimes\rho)(\mathbb{S}-\mathbb{R}\mathbb{X}^{\dagger}-\mathbb{X}\mathbb{R}^{\dagger}+\mathbb{X}\mathbb{X}^{\dagger})\right]\right.\\
		&\left.|\mathbb{S}\geq \mathbb{R}\mathbb{R}^{\dagger},\mathbb{S}_{jk}=\mathbb{S}_{kj}=\mathbb{S}_{jk}^{\dagger}, R_j=R_j^{\dagger}\right\}.
	\end{aligned}
\end{equation}
The minimization can be formulated as a semi-definite programming with
\begin{equation}\label{main:SDP0}
\begin{aligned}
	\mathcal{E}_0=\min_{\mathbb{S},\{R_j\}_{j=1}^n} &\Tr\left[(W\otimes\rho)(\mathbb{S}-\mathbb{R}\mathbb{X}^{\dagger}-\mathbb{X}\mathbb{R}^{\dagger}+\mathbb{X}\mathbb{X}^{\dagger})\right]\\
	\text{subject to}\quad &\mathbb{S}_{jk}=\mathbb{S}_{kj}=\mathbb{S}_{jk}^{\dagger},\ \forall j,k\\
	&R_j=R_j^{\dagger},\ \forall j\\
	&\begin{pmatrix}
		I & \mathbb{R}^{\dagger}\\
		\mathbb{R} & \mathbb{S}
	\end{pmatrix}\geq 0.
\end{aligned}
\end{equation}
The derived lower bound, $\mathcal{E} \geq \mathcal{E}_0$, offers a tighter constraint than the analytical bounds from the previous section for any selection of $\{|u_q\rangle\}$ (refer to Sec. S3 of the Supplementary Material). Furthermore, an explicit construction detailing the optimal approximation strategy that attains this bound for pure states is provided in Sec. S4 of the Supplementary Material, which demonstrates the tightness of the bound for any number of observables when applied to pure states.
For mixed states, however, the bound is in general not tight (see Sec. S4 of the Supplementary Material for an example).



\bigskip
\noindent\textbf{Tighter analytical relation for two observables}

\noindent
By leveraging the SDP bound provided in Eq.(\ref{main:SDP0}) and employing a judicious selection of ${|u_q\rangle}$ analogues to the analytical bound in Eq.(\ref{eq:multi_obs_boundmain}), we can derive analytical bounds on mixed states for two observables that are tighter than the Ozawa's relation, the tightest analytical bound previously known. 

We first show that (see Sec. S5 of the Supplementary Material for details) when $\rho=|\psi\rangle\langle \psi|$ is a pure state and $W=\mathrm{diag}\{w_1,w_2\}$,  Eq.(\ref{main:SDP0}) can be analytically solved as
\begin{equation}\label{eq:mainanalytical}
w_1\epsilon_1^2+w_2\epsilon_2^2\geq\frac{1}{2}\left(\alpha-\sqrt{\alpha^2-\beta^2}\right),
\end{equation}
where 
\begin{eqnarray}\label{eq:ab}
\begin{aligned}
	\alpha&=w_1(\Delta X_1)^2+w_2(\Delta X_2)^2,\\   \beta&=i\sqrt{w_1w_2}\bra{\psi}[X_1,X_2]\ket{\psi}.     
\end{aligned}
\end{eqnarray}
Since the SDP bound is tight for pure state, this analytical bound is also tight for pure states. We now use it to obtain tighter analytical bounds for two observables on mixed states.  
For a mixed state, $\rho$, we can choose any $\{|u_q\rangle\}$ with $\sum_q |u_q\rangle\langle u_q|=I$ and write 
\begin{equation}
\rho=\sum_q\sqrt{\rho}\ket{u_q}\bra{u_q}\sqrt{\rho}=\sum_q\lambda_q\ket{\phi_q}\bra{\phi_q},
\end{equation}
here
$\lambda_q=\bra{u_q}\rho\ket{u_q}$, $\ket{\phi_q}=\frac{\sqrt{\rho}\ket{u_q}}{\sqrt{\bra{u_q}\rho\ket{u_q}}}.$ For each $|\phi_q\rangle$ we can get a corresponding lower bound $\mathcal{E}_{|\phi_q\rangle}$ by substituting $|\phi_q\rangle\langle \phi_q|$ in Eq.(\ref{main:SDP0}), 
and solve it analytically to get 
$\mathcal{E}_{|\phi_q\rangle}=\frac{1}{2}\left(\alpha_q-\sqrt{\alpha_q^2-\beta_q^2}\right)$,
where $\alpha_q$ and $\beta_q$ are obtained from Eq.(\ref{eq:ab}) by substituting $\ket{\psi}$ with $\ket{\phi_q}$. 
Since for any function that satisfies $f(\sum_q\lambda_q x_q,y)=\sum_q \lambda_qf(x_q,y)$, we have $\min_yf(\sum_q\lambda_q x_q,y)\geq \sum_q \lambda_q \min_yf(x_q,y)$, by substitute $f$ with $\Tr\left[(W\otimes\rho)(\mathbb{S}-\mathbb{R}\mathbb{X}^{\dagger}-\mathbb{X}\mathbb{R}^{\dagger}+\mathbb{X}\mathbb{X}^{\dagger})\right]$, $x_q$ with $\ket{\phi_q}\bra{\phi_q}$ and $\sum_q\lambda_q x_q$ with $\rho$ in Eq.(\ref{main:SDP0}), we can get
$\mathcal{E}_0\geq\sum_q\lambda_q\mathcal{E}_{|\phi_q\rangle}$. This then leads to an analytical bound
\begin{equation}\label{eq:mixbound}
	\aligned
	w_1\epsilon_1^2+w_2\epsilon_2^2&\geq
	\mathcal{E}_0\geq\sum_q\lambda_q\mathcal{E}_{|\phi_q\rangle}\\
	&=
	\sum_q \frac{\lambda_q}{2}\left(\alpha_q-\sqrt{\alpha_q^2-\beta_q^2}\right)\equiv \mathcal{E}_A.    
	\endaligned
\end{equation}

In comparison, the minimal weighted error corresponding to the Ozawa's relation is given by(see Sec. S6 of the Supplementary Material)
\begin{equation}\label{eq:ozawamix}
	w_1\epsilon_1^2+w_2\epsilon_2^2\geq \frac{1}{2}\left(\alpha_{\rho}-\sqrt{\alpha_{\rho}^2-\beta_{\rho}^2}\right)\equiv\mathcal{E}_{\text{Ozawa}},
\end{equation} 
with $\alpha_{\rho}=w_1(\Delta X_1)^2+w_2(\Delta X_2)^2$ and $\beta_{\rho}=i\sqrt{w_1w_2}\|\sqrt{\rho}[X_1,X_2]\sqrt{\rho}\|_1$, here $(\Delta X_{1/2})^2=\Tr(\rho X_{1/2}^2)-\Tr(\rho X_{1/2})^2$.

By choosing $\{|u_q\rangle\}$ as the eigenstates of $\sqrt{\rho}[X_1,X_2]\sqrt{\rho}$, we can get 
(see Sec. S6 of the Supplementary Material for detail)
\begin{equation}
	\sum_q \frac{\lambda_q}{2}\left(\alpha_q-\sqrt{\alpha_q^2-\beta_q^2}\right)\geq \frac{1}{2}\left(\alpha_{\rho}-\sqrt{\alpha_{\rho}^2-\beta_{\rho}^2}\right).   
\end{equation}
The analytical bound in Eq.(\ref{eq:mixbound}) is thus tighter than the bound obtained from the Ozawa's relation. 
An illustrative example is shown in Fig. \ref{fig:example-main} with
\begin{equation}\label{eq:example-main}
    \rho=\begin{pmatrix}
			\frac{p}{2} & 0 & 0\\
			0 & 1-p & 0\\
			0 & 0 & \frac{p}{2}
		\end{pmatrix},
    X_1=\begin{pmatrix}
			0 & 1 & 0\\
			1 & 0 & 1\\
			0 & 1 & 0
		\end{pmatrix},
    X_2=\begin{pmatrix}
			0 & -i & 0\\
			i & 0 & -i\\
			0 & i & 0
		\end{pmatrix}.
\end{equation}
The presented framework thus not only extends to scenarios involving an arbitrary number of observables but also provides improved analytical bounds in the case of two observables.

\begin{figure}[htb]
	\centering
	\includegraphics[width=0.45\textwidth]{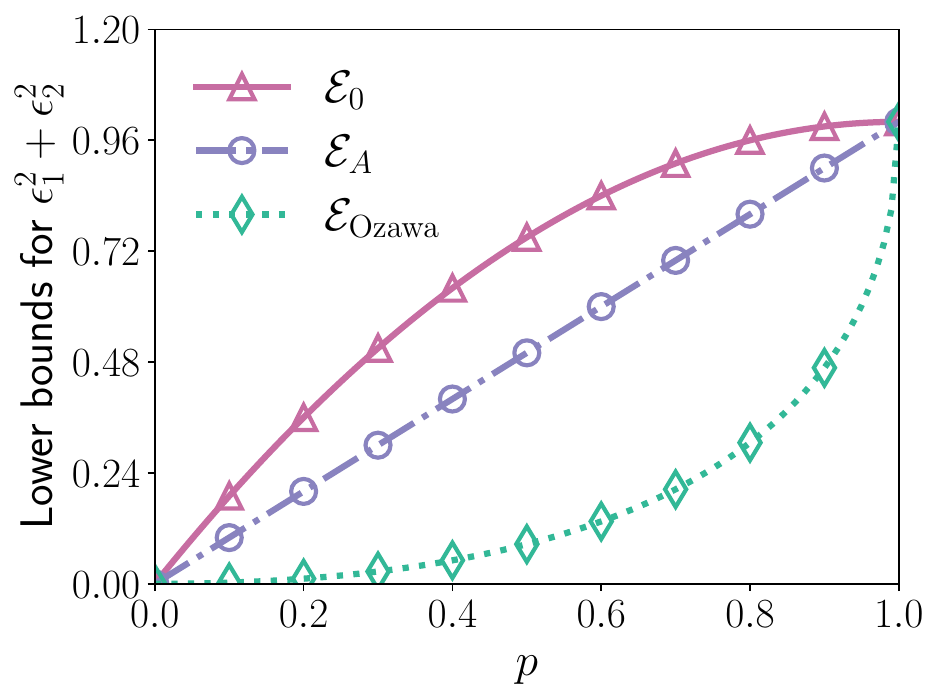}
	\caption{Lower bounds for the errors of the simultaneous measurement of two observables $X_1,X_2$ on state $\rho$ with different $p$, where $X_1,X_2$, and $\rho$ are given in Eq.(\ref{eq:example-main}). Here $\mathcal{E}_0$ is computed with SDP in Eq.(\ref{main:SDP0}), $\mathcal{E}_A$ is analytically obtained from Eq.(\ref{eq:mixbound}) and $\mathcal{E}_{\text{Ozawa}}$ is obtained from the Ozawa's relation (Eq.(\ref{eq:ozawamix})). It can be seen that for the simultaneous measurement of $X_1,X_2$, we have $\mathcal{E}_0\geq\mathcal{E}_A\geq\mathcal{E}_{Ozawa}$.}
	\label{fig:example-main}
\end{figure}

    \bigskip
\noindent\textbf{Tradeoff relations for multiparameter quantum estimation}

\noindent
We proceed to apply the derived error-tradeoff relations to the field of quantum metrology, thereby elucidating tradeoffs in the precision limits for estimating multiple parameters, which is currently a central focus in quantum metrology.
	
Given a quantum state $\rho_x$, where $x=(x_1,x_2,\ldots,x_n)$ are $n$ unknown parameters to be estimated, by performing a POVM, $\{M_{m}\}$, on the state, we can get the measurement result, $m$, with a probability $p_m(x)=\Tr(\rho_xM_{m})$. For any locally unbiased estimator, $\hat{x}=(\hat{x}_1,\cdots,\hat{x}_n)$, the Cram\'er-Rao bound \cite{Cram46, Fish22} provides an achievable bound
	$ Cov(\hat{x})\geq \frac{1}{\nu}F_C^{-1}$,
	here $Cov(\hat{x})$ is the covariance matrix for the estimators with the $jk$-th entry given by $Cov(\hat{x})_{jk}=E[(\hat{x}_j-x_j)(\hat{x}_k-x_k)]$, $E[\cdot]$ denotes the expectation, $\nu$ is the number of the measurement repeated independently, $F_C$ is the Fisher information matrix of a single measurement whose $jk$-th entry is given by $(F_C)_{jk}=\sum_{m} \frac{\partial_{x_j} p_m(x)\partial_{x_k}p_m(x)}{p_m(x)}$ \cite{Fish22}. 
	Regardless of the choice of the measurement, the covariance matrix is always lower bounded by the quantum Cram\'er-Rao bound as \cite{Hels76book,Hole82book}
	\begin{equation}
		Cov(\hat{x})\geq \frac{1}{\nu}F_C^{-1}\geq \frac{1}{\nu}F_Q^{-1},
	\end{equation}
	here $F_Q$
	is the quantum Fisher information matrix with the $jk$-th entry given by
	$   (F_Q)_{jk}=\Tr(\rho_x\frac{L_jL_k+L_kL_j}{2}),$
	where $L_q$ is the symmetric logarithmic derivative (SLD) corresponding to the parameter $x_q$, which satisfies $\partial_{x_q}\rho_x=\frac{1}{2}(\rho_xL_q+L_q\rho_x)$. When there is only one parameter, the QCRB can be saturated. In particular it can be saturated with the projective measurement on the eigen-spaces of the SLD, i.e., the SLD is the optimal observable for the estimation of the corresponding parameter.

	When there are multiple parameters, the QCRB is in general not saturable since the SLDs typically do not commute with each other. 
	A central task in multi-parameter quantum estimation is to understand the tradeoff induced by such incompatibility \cite{genoni2013jan,steinlechner2013aug,zhuang2017oct,huang2021jul,MatsumotoThesis,HongzhenPra,HongzhenPRL,GillM00,Zhu2018universally,Lu2021,Suzuki2016,Sidhu2021,Nagaoka1,Conlon2021,ALBARELLI2020126311,Federico2021,Carollo_2019,Ragy2016,Chen_2017,Liu_2019,ChenHZ2019,Rafal2020,vidrighin2014,crowley2014,Yue2014,Zhang2014,Liu2017,Roccia_2017,e22111197,Candeloro_2021,Yuxiang2019,HouMinimal2020,HouSuper2021,Houeabd2986,Ragy2016,Szczykulska2016,FrancescoPRL}.
	In a recent seminal work \cite{Lu2021}, by applying Ozawa's uncertainty relation, Lu and Wang obtained analytical tradeoff relations for the estimation of a pair of parameters. For multiple ($n>2$) parameters, however, if we simply add the tradeoff for each pair of parameters directly, the obtained tradeoff relation is typically loose, which restricts the scope of its applications. 
	
    \bigskip
    
\noindent\textit{Analytical tradeoff relation for multiparameter quantum estimation}

\noindent
By directly applying the tradeoff relation from Eq.(\ref{eq:multi_obs_boundmain}), we can readily obtain a tradeoff relation for estimating multiple parameters by simply substituting the $n$ observables with the $n$ SLDs, $\{L_1, \cdots, L_n\}$. In this context, $S_{\text{Re}}$ corresponds to the quantum Fisher information matrix, $F_Q$. For any POVM $\{M_m\}$, we construct $\{F_j = \sum_m f_j(m)V_m\}$ to approximate the SLDs. As previously defined, each $V_m$ represents a projective measurement in the extended space that projects onto $M_m$ when acting on the system. Given that the error-tradeoff relation in Eq.(\ref{eq:multi_obs_boundmain}) holds for any choice of the functions $\{f_j(m)\}$, we can specifically select 
 \begin{equation}
		f_j(m)=\frac{1}{2}\frac{\Tr[\{L_j\otimes I,\rho_x\otimes\sigma\}V_m]}{\Tr[(\rho_x\otimes\sigma) V_m]}=\frac{\partial_{x_j}p_m(x)}{p_m(x)},
	\end{equation}
	where $p_m(x)=\Tr[(\rho_x\otimes\sigma) V_m]=\Tr[\rho_x M_m]$. This choice minimizes $\epsilon_j^2 = \Tr\left[(F_j-L_j\otimes I)^2\left(\rho_x\otimes\sigma\right)\right]$ under the given measurement (see Methods). With this choice we have $Q_{\text{Re}}=F_Q-F_C$, 
	Eq.(\ref{eq:multi_obs_boundmain}) then becomes
	\begin{equation}
		\Tr[F_Q^{-1}(F_Q-F_C)]\geq \left(\sqrt{\|F_Q^{-\frac{1}{2}}\tilde{S}_{\text{Im}}F_Q^{-\frac{1}{2}}\|_{F}+1}-1\right)^2,
	\end{equation}
	where $\tilde{S}_{\text{Im}}=\sum_q \tilde{S}_{u_q,\text{Im}}$ with each
	$\tilde{S}_{u_q,\text{Im}}$ equals to either $S_{u_q,\text{Im}}$ or $S_{u_q,\text{Im}}^T$, here $(S_{u_q,\text{Im}})_{jk}=\frac{1}{2i}\bra{u_q}(\sqrt{\rho_x}\otimes \ket{\xi_0}\bra{\xi_0}) ([L_j,L_k]\otimes  I)(\sqrt{\rho_x}\otimes \ket{\xi_0}\bra{\xi_0})\ket{u_q}$ with $\{\ket{u_q}\}$ as any set of vectors that satisfies $\sum_q \ket{u_q}\bra{u_q}=I$. 
	This then provides an upper bound on the achievable classical Fisher information matrix as
	\begin{equation}\label{eq:precision}
		\Tr[F_Q^{-1}F_C]\leq n-\left(\sqrt{\|F_Q^{-\frac{1}{2}}\tilde{S}_{\text{Im}}F_Q^{-\frac{1}{2}}\|_F+1}-1\right)^2.
	\end{equation}

    In comparison with the previous metrological bounds in \cite{HongzhenPra,HongzhenPRL}, the current bound is less stringent. This is because the previous metrological bounds exploit the properties of locally unbiased estimators, which introduce additional constraints that are not present in the simultaneous measurement of multiple observables. The bound here is based only on the characteristics of the SLD observables, reflecting the inherent uncertainties of the observables. The difference in constraints is what creates the difference in strictness between the previous bounds and the current one. This also means that previous metrological bounds cannot be directly generalized to the uncertainty relations for incompatible observables.

    \bigskip
    \noindent\textit{SDP-based tradeoff relation for multiparameter quantum estimation}
    
    \noindent
	In the scenario of estimating multiple parameters encapsulated in a quantum state, $\rho_x$, with $x=(x_1,\cdots,x_n)$, we can substitute the matrix $\mathbb{X}$ in Eq.(\ref{main:SDP0}) with $\mathbb{L}=\begin{pmatrix}L_1 & L_2 & \cdots & L_n\end{pmatrix}^{\dagger}$, here $L_q$ is the SLD for the parameter $x_q$. Upon choosing the optimal functions $f_j(m)$, the real part of the error matrix becomes $Q_{\text{Re}}=F_Q-F_C$. The derived bound $\mathcal{E}\geq \mathcal{E}_0$ then gives a tradeoff relation in multiparameter quantum estimation 
	\begin{equation}		\mathcal{E}=\Tr(WQ)=\Tr(WQ_{\text{Re}})=\Tr[W(F_Q-F_C)]\geq \mathcal{E}_0,
	\end{equation}
	where we used the property that $\Tr(WQ_{\text{Im}})=\Tr(W^{\frac{1}{2}}Q_{\text{Im}}W^{\frac{1}{2}})=0$ since $Q_{\text{Im}}$ is anti-symmetric, and here
	\begin{equation}
		\begin{aligned}
			\mathcal{E}_0=\min_{\mathbb{S},\{R_j\}_{j=1}^n} &\Tr\left[(W\otimes\rho_x)(\mathbb{S}-\mathbb{R}\mathbb{L}^{\dagger}-\mathbb{L}\mathbb{R}^{\dagger}+\mathbb{L}\mathbb{L}^{\dagger})\right]\\
			\text{subject to}\quad &\mathbb{S}_{jk}=\mathbb{S}_{kj}=\mathbb{S}_{jk}^{\dagger},\ \forall j,k\\
			&R_j=R_j^{\dagger},\ \forall j\\
			&\begin{pmatrix}
				I & \mathbb{R}^{\dagger}\\
				\mathbb{R} & \mathbb{S}
			\end{pmatrix}\geq 0.
		\end{aligned}
	\end{equation}
	We can perform a reparameterization by setting 
 $\tilde{x}=F_Q^{-\frac{1}{2}}x$ which leads to $\tilde{F}_Q=I$. Under this transformation, we have that $\Tr[I\otimes \rho_x)\tilde{\mathbb{L}}\tilde{\mathbb{L}}^\dagger]=\Tr(\tilde{F}_Q)=n$, where $\tilde{\mathbb{L}}=\begin{pmatrix}\tilde{L}_1 & \tilde{L}_2 & \cdots & \tilde{L}_n\end{pmatrix}^{\dagger}$ with $\tilde{L}_j=\sum_k (F_Q^{-\frac{1}{2}})_{jk}L_k$. By setting $W=I$ we can derive an upper bound for $\Tr(\tilde{F}_C)=\Tr(F_Q^{-1}F_C)$ as
	\begin{equation}\label{eq:precisionSDP}
		\begin{aligned}
			\Tr(F_Q^{-1}F_C)\leq 
			\max_{\mathbb{S},\{R_j\}_{j=1}^n} &\Tr\left[(I\otimes\rho_x)(-\mathbb{S}+\mathbb{R}\tilde{\mathbb{L}}^{\dagger}+\tilde{\mathbb{L}}\mathbb{R}^{\dagger})\right]\\
			\text{subject to}\quad &\mathbb{S}_{jk}=\mathbb{S}_{kj}=\mathbb{S}_{jk}^{\dagger},\ \forall j,k\\
			&R_j=R_j^{\dagger},\ \forall j\\
			&\begin{pmatrix}
				I & \mathbb{R}^{\dagger}\\
				\mathbb{R} & \mathbb{S}
			\end{pmatrix}\geq 0.
		\end{aligned}
	\end{equation}
	This resembles the Nagaoka-Hayashi bound \cite{Conlon2021} but are different. The Nagaoka-Hayashi bound quantifies $\Tr[WCov(\hat{x})]$, where $\hat{x}$ is required to be locally unbiased estimators to satisfy the classical Cramer-Rao bound as $Cov(\hat{x})\geq \frac{1}{\nu}F_C^{-1}$. The Nagaoka-Hayashi bound thus has the locally unbiased condition included in the constraints. And it does not have a direct connection to the approximation of the observables, which is reflected in the fact that its objective function does not contain the SLDs. While the bound here quantifies directly the relation between $F_Q$ and $F_C$, without the intermediate step of estimators, it thus does not have the locally unbiased condition in the constraints. The presence of the SLD operators in the objective function underscores an intrinsic connection to observable approximation, which is absent in the Nagaoka-Hayashi bound.
	
    \bigskip
    \noindent\textit{Sharper tradeoff relation for two-parameter quantum estimation}
    
    \noindent
	In the case of estimating two parameters, a tighter analytical bound can be derived by substituting $(X_1,X_2)$ in Eq.(\ref{eq:mixbound}) with $(L_1,L_2)$, which results in

\begin{equation}\label{eq:precision2}
		\begin{aligned}
			&w_1[(F_Q)_{11}-(F_C)_{11}]+w_2[(F_Q)_{22}-(F_C)_{22}]\\
			\geq &\sum_q \frac{\lambda_q}{2}\left(\alpha_q-\sqrt{\alpha_q^2-\beta_q^2}\right),
		\end{aligned}
	\end{equation}
	where $\lambda_q=\bra{u_q}\rho_x\ket{u_q}$, $\alpha_q$ and $\beta_q$ are given as
	\begin{equation}
		\begin{aligned}
			\alpha_q&=w_1(\Delta_{\ket{\phi_q}} L_1)^2+w_2(\Delta_{\ket{\phi_q}} L_2)^2,\\   \beta_q&=i\sqrt{w_1w_2}\bra{\phi_q}[L_1,L_2]\ket{\phi_q},
		\end{aligned}
	\end{equation}
	here $\ket{\phi_q}=\frac{\sqrt{\rho_x}\ket{u_q}}{\sqrt{\bra{u_q}\rho_x\ket{u_q}}}$ and $\Delta_{\ket{\phi_q}} L_{1/2}$ denotes the standard deviation of the SLD $L_{1/2}$ on the state $\ket{\phi_q}$. 
	By selecting ${|u_q\rangle}$ as the eigenvectors of $\sqrt{\rho_x}[L_1,L_2]\sqrt{\rho_x}$, Eq. (\ref{eq:precision2}) provides a tighter bound than the Lu-Wang bound for mixed states \cite{Lu2021}, which is based on Ozawa's relation(see Section S6 of the supplementary material for detail). 
 
	
	

\bigskip
\noindent\textbf{Experiment validation of the error-tradeoff relations in a superconducting quantum processor}

\noindent
We conducted an experimental verification of the error-tradeoff relations on a superconducting quantum processor, utilizing the Quafu cloud quantum computing platform \cite{Quafu3}. The selected processor, ScQ-P136, consists of 136 qubits with single-qubit gate fidelities surpassing $99\%$ \cite{Quafu1,Quafu2,Quafu3}, and for our analysis, we focused exclusively on the first qubit.
Further details about the processor's architecture and parameters are provided in the Methods section. 

To experimentally quantify the error $\epsilon_j$ for each observable $X_j$ when measured through a specific measurement set $\{M_m\}$, we adopt the ``3-state method''\cite{Jacqueline2012,Ringbauer2014}. The essence of this approach is illustrated in Fig.\ref{fig:experiment}(a), which involves preparing and measuring three distinct quantum states:
\begin{equation}
    \rho_1 = \rho,\ \rho_2 \simeq X_j\rho X_j,\ \rho_3 \simeq (I + X_j)\rho(I + X_j).
\end{equation}
By analyzing the measurement statistics of $\{M_m\}$ on these three states—$\rho_1$, $\rho_2$, and $\rho_3$—we can obtain $\epsilon_j$, the error of the approximation. Detailed information on how to determine the error from these measurements can be found in the Methods section.


\begin{figure}
    \centering
    \includegraphics[width=0.48\textwidth]{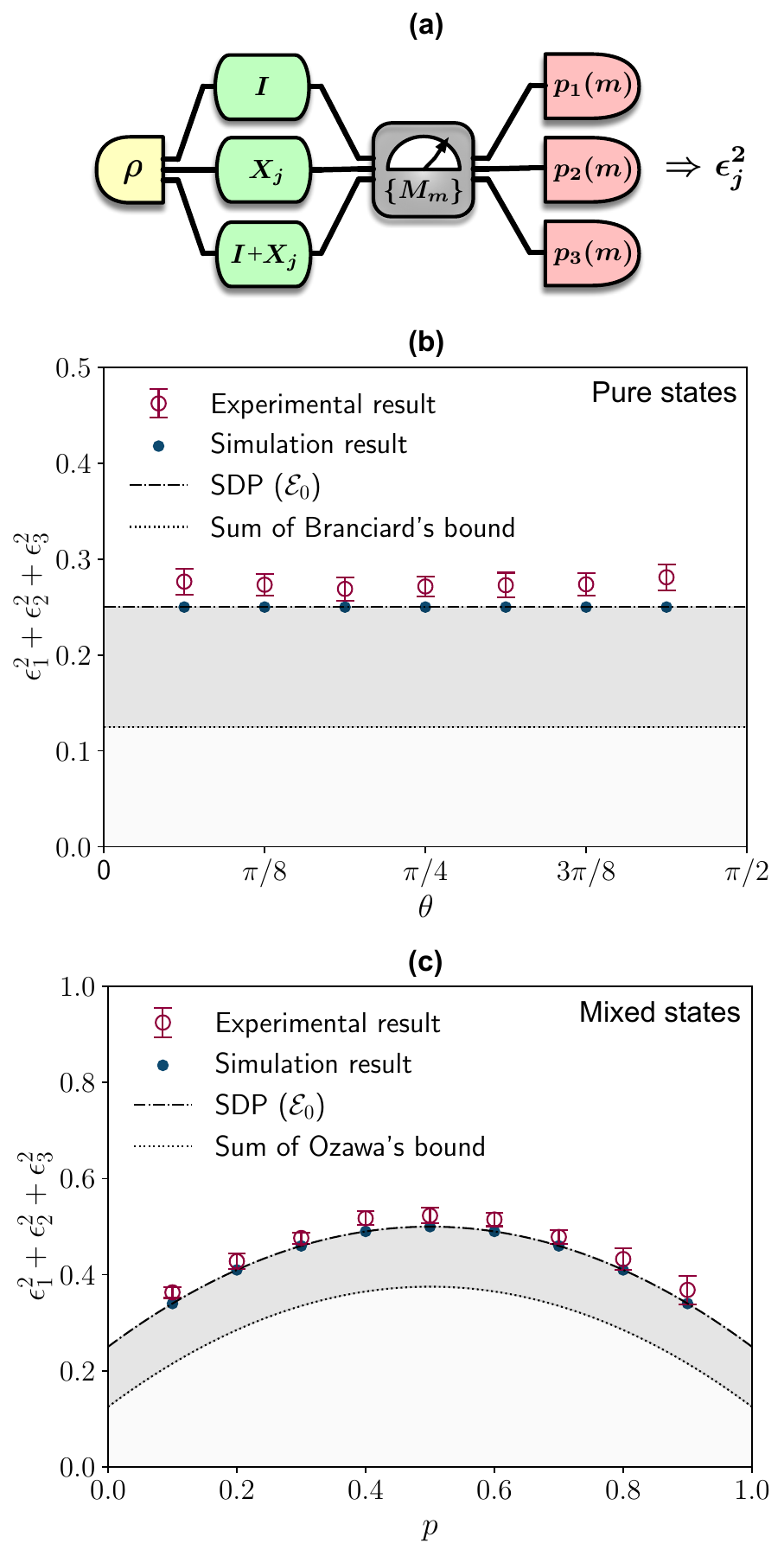}
    \caption{Experimental results for testing the error-tradeoff relations in superconducting quantum processor ``ScQ-P136'' accessed through the Quafu cloud quantum computation platform.
    (a) Scheme diagram to evaluate the mean squared error of each $X_j$ using the ``3-state method''.  
    (b) Error-tradeoff relations for the simultaneous measurement of three spin operators, $\{\frac{1}{2}\sigma_x,\frac{1}{2}\sigma_y,\frac{1}{2}\sigma_z\}$, on a pure state $\ket{\psi}=R_z(\frac{\pi}{2})R_y(\theta)\ket{0}$.
    (c) Error-tradeoff relations for the simultaneous measurement of three spin operators, $\{\frac{1}{2}\sigma_x,\frac{1}{2}\sigma_y,\frac{1}{2}\sigma_z\}$, on a mixed state $\rho=p\ket{0}\bra{0}+(1-p)\ket{1}\bra{1}$.
    For both cases, the statistics of each measurement was obtained through 2000 shots and the error bars represent the standard deviations obtained by repeating the experiment 20 times.
    }
    \label{fig:experiment}
\end{figure}

\bigskip

\noindent
We commence with the simultaneous measurement of the three Pauli spin operators, $\{\frac{1}{2}\sigma_x, \frac{1}{2}\sigma_y, \frac{1}{2}\sigma_z\}$, on a single qubit system. The state of the qubit is first prepared as a pure state, specifically a rotation of the ground state by an angle $\theta$ around the $y$-axis followed by a rotation of $\frac{\pi}{2}$ around the $z$-axis, which can be expressed as $\ket{\psi} = R_z(\frac{\pi}{2})R_y(\theta)\ket{0}$ with $0<\theta\leq\frac{\pi}{2}$.
Before conducting the experiment, we first substitute the given qubit state and spin operators into Eq.(\ref{main:SDP0}) to derive a tight theorectical lower bound for the total mean-squared-error $\epsilon_1^2+\epsilon_2^2+\epsilon_3^2$. The optimal primal variables of $\mathcal{E}_0$ are recorded as $\{R_1^{\star}, R_2^{\star}, R_3^{\star}\}$.
Leveraging these values, we construct the corresponding optimal measurement scheme $\{M_m^{\star}\}$ (see Sec. S4 of the Supplementary Material for details).

In the experimental phase, we apply the optimal measurement scheme $\{M_m^{\star}\}$ to the prepared pure state $\rho = \ket{\psi}\bra{\psi}$ to approximate the joint measurement of the three Pauli operators $\{\frac{1}{2}\sigma_x,\frac{1}{2}\sigma_y,\frac{1}{2}\sigma_z\}$. The errors for each operator are then estimated using the ``3-state method'' by preparing auxiliary states $\rho_2 \simeq X_j\rho X_j$ and $\rho_3 \simeq (I + X_j)\rho(I + X_j)$. We carry out this procedure for various values of the rotation angle $\theta$. The corresponding total mean-squared-errors are plotted in Fig.\ref{fig:experiment}(b), where simulated results are also presented for comparison.

\noindent
Next, we proceed to verify our tradeoff relations for mixed states by simultaneously measuring the three spin operators $\{\frac{1}{2}\sigma_x,\frac{1}{2}\sigma_y,\frac{1}{2}\sigma_z\}$ on a mixed state of the form $\rho = p\ket{0}\bra{0} + (1-p)\ket{1}\bra{1}$, where $0 \leq p \leq 1$. Similarly, prior to the experimental phase, we input the mixed state and observables into Eq.(\ref{main:SDP0}) to derive a theoretical lower bound for the total mean-squared-error $\epsilon_1^2+\epsilon_2^2+\epsilon_3^2$. This is achieved by solving the SDP problem and computing its optimal solution. In the context of mixed states, we resort to numerical methods to determine the corresponding optimal measurements $\{M_m^{\star}\}$ instead of analytical solutions. These numerically derived optimal measurements are then applied in the experimental setup to approximate the joint measurement of the spin operators $\{\frac{1}{2}\sigma_x,\frac{1}{2}\sigma_y,\frac{1}{2}\sigma_z\}$ on the given mixed state. Once more, we utilize the ``3-state method'' to calculate the total errors by preparing auxiliary states: $\rho_2 \simeq X_j\rho X_j$ and $\rho_3 \simeq (I + X_j)\rho(I + X_j)$. This process is repeated for a range of different mixing parameter values $p$, and the corresponding total mean-squared-errors are presented graphically in Fig.\ref{fig:experiment}(c), along with accompanying simulation data for comparison.
	
    \bigskip
    \noindent\textbf{Discussion}
    
    \noindent
    We have developed methodologies that establish tradeoff relations for approximating any number of observables using a single measurement, and we have also refined the existing analytical bounds in scenarios involving two observables. Each derived bound has its unique benefits and drawbacks:

1. The inequality $$\Tr(S_{\text{Re}}^{-1}Q_{\text{Re}}) \geq \left(\sqrt{\|S_{\text{Re}}^{-\frac{1}{2}}\tilde{S}_{\text{Im}}S_{\text{Re}}^{-\frac{1}{2}}\|_{F}+1}-1\right)^2$$ offers an analytical constraint applicable to any number of observables. For pure states, the selection of $\{|u_q\rangle\}$ is not necessary and the bound outperforms the sum of Ozawa's relations when estimating more than four observables.
However, for mixed states, the tightness depends on the choice of the set $\{|u_q\rangle\}$ where $|u_q\rangle\langle u_q|=I$ (see Section S9.A of the Supplementary Material for an explicit example). Although every selection of $\{|u_q\rangle\}$ produces a valid bound, there is currently no systematic method available for identifying the optimal choice. This analytical expression provides a universal benchmark but might not always yield the most stringent limitation, especially in complex mixed-state scenarios.

2. The error tradeoff relation $\mathcal{E} = \Tr(WQ) \geq \mathcal{E}_0$ provides the most stringent bound, where $\mathcal{E}_0$ can be efficiently computed using semidefinite programming (SDP),
\begin{equation*}
    \begin{aligned}
        \mathcal{E}_0 = \min_{\mathbb{S}, \{R_j\}_{j=1}^n} &\Tr\left[(W \otimes \rho)(\mathbb{S} - \mathbb{R}\mathbb{X}^\dagger - \mathbb{X}\mathbb{R}^\dagger + \mathbb{X}\mathbb{X}^\dagger)\right]\\
        \text{subject to:}\quad &\mathbb{S}_{jk} = \mathbb{S}_{kj} = \mathbb{S}_{jk}^{\dagger}, \ \forall j,k \\
        &R_j = R_j^{\dagger}, \ \forall j \\
        &\begin{pmatrix}
            I & \mathbb{R}^{\dagger}\\
            \mathbb{R} & \mathbb{S}
        \end{pmatrix} \geq 0.
    \end{aligned}
\end{equation*}
This SDP-based approach is universally applicable for any number of observables and offers a tighter bound than the first analytical relation. For pure states, this bound is exact, and it outperforms the first bound for mixed states regardless of the choice of the set $\{|u_q\rangle\}$ satisfying $\sum_q |u_q\rangle\langle u_q| = I$. Furthermore, it consistently provides a stricter constraint compared to the sum of Ozawa's relations in all scenarios. This SDP-based bound does not have a general analytical expression, necessitating numerical methods for its computation.

3. For a pair of observables $X_j$ and $X_k$ acting on the state $\rho$, we have an analytical bound given by:
\begin{equation*}
    \aligned
    w_j\epsilon_j^2+w_k\epsilon_k^2 &\geq \sum_q \frac{\lambda_q}{2}\left(\alpha_q-\sqrt{\alpha_q^2-\beta_q^2}\right),   
    \endaligned
\end{equation*}
where $\lambda_q=\bra{u_q}\rho\ket{u_q}$, $\ket{\phi_q}=\frac{\sqrt{\rho}\ket{u_q}}{\sqrt{\bra{u_q}\rho\ket{u_q}}},$ $\alpha_q=w_j[\bra{\phi_q} X_j^2\ket{\phi_q}-\bra{\phi_q} X_j\ket{\phi_q}^2]+w_k[\bra{\phi_q} X_k^2\ket{\phi_q}-\bra{\phi_q} X_k\ket{\phi_q}^2]$,   $\beta_q=i\sqrt{w_jw_k}\bra{\phi_q}[X_j,X_k]\ket{\phi_q}$. 
This inequality holds for any choice of $\{|u_q\rangle\}$ satisfying $\sum_q |u_q\rangle\langle u_q| = I$. Notably, if one selects $\{|u_q\rangle\}$ to be the eigenvectors of $\sqrt{\rho}[X_j,X_k]\sqrt{\rho}$, this bound is tighter than Ozawa's relation for mixed states. However, it only applies to pairs of observables.
Each of the error-tradeoff relations can be directly applied to assess the precision tradeoffs in multi-parameter quantum metrology. These relations are valid for local measurements, which involve independently measuring each copy of the state $\rho_x$. When considering $p$-local measurements, where the measurement process may involve collective actions on up to $p$ copies of $\rho_x$, analogous tradeoff relations can be derived by substituting $\rho_x$ and its set of SLDs $\{L_j\}$ with $\rho_x^{\otimes p}$ and the corresponding SLDs for $\rho_x^{\otimes p}$. In Sec. S8 of the Supplementary Materials, we present an illustrative example that showcases the tradeoff relation under collective measurements. This paves the way for further exploration into state-dependent measurement uncertainty relations for multiple observables. Moreover, it reinforces the connection between measurement uncertainty and the incompatibility inherent to multi-parameter quantum estimation, thereby promoting deeper investigations across both domains. 

\bigskip
\noindent\textbf{Methods}
\noindent\textbf{Tradeoff in multi-parameter quantum estimation}

\noindent
By taking the set of SLDs, $\{L_q\}$, as the observables, we can obtain the trade-off relations in multiparameter quantum estimation. 

For any POVM $\mathcal{M}=\{M_m\}$, it can be equivalently written as a projective measurement $\{V_m=U^\dagger(I\otimes \ket{\xi_m}\bra{\xi_m})U\}$ on an extended state $\rho_x\otimes \sigma$ with $\sigma=\ket{\xi_0}\bra{\xi_0}$ such that $ (I\otimes\bra{\xi_0})V_m  (I\otimes\ket{\xi_0})=M_m$. Commuting observables, $\{F_1, F_2, \ldots, F_n\}$ with $F_j=\sum_m f_j(m)V_m$, are then constructed from this measurement on the extended Hilbert space to approximate $\{L_1\otimes I,\cdots, L_n\otimes I\}$. Note that the tradeoff relations for the approximate measurement holds for any choice of $\{f_j(m)\}$, here we make a particular choice of $\{f_j(m)\}$ to minimize the root-mean-squared error $\{\epsilon_j^2=\Tr\left[(F_j-L_j\otimes I)^2\left(\rho_x\otimes\sigma\right)\right]\}$. As
\begin{equation}
    \begin{aligned}
        & \Tr\left[(F_j-L_j\otimes I)^2\left(\rho_x\otimes\sigma\right)\right] \\
        = &\Tr\left[\left(\sum_m f_j(m)V_m-L_j\otimes I\right)^2\left(\rho_x\otimes\sigma\right)\right]\\
        =& \Tr(\rho_x L_j^2) - \sum_{m} p_m(x) \left(\frac{1}{2}\frac{\Tr[\{L_j\otimes I,\rho_{x}\otimes\sigma\}V_m]}{\Tr[(\rho_{x}\otimes\sigma) V_m]}\right)^2\\
        &+\sum_{m} p_m(x) \left(f_j(m)-\frac{1}{2}\frac{\Tr[\{L_j\otimes I,\rho_{x}\otimes\sigma\}V_m]}{\Tr[(\rho_{x}\otimes\sigma) V_m]}\right)^2,
    \end{aligned}
\end{equation}
here $p_m(x)=\Tr[(\rho_x\otimes\sigma) V_m]=\Tr(M_m\rho_x)$ is the probability for the measurement result $m$. The optimal $f_j(m)$ that minimizes the root-mean-squared error is then given by
\begin{equation}\label{eq:opt_fj_mix}
    f_j(m)=\frac{1}{2}\frac{\Tr[\{L_j\otimes I,\rho_{x}\otimes\sigma\}V_m]}{\Tr[(\rho_{x}\otimes\sigma) V_m]}.
\end{equation}
We note that when $p_m(x)=0$, $f_j(m)$ can take the form as 0/0, which should be computed as a multivariate limit with $x'\rightarrow x$.

So far we have not used any properties of the SLDs, the formula for the optimal choice of $f_j(m)$ works for any observables \cite{Branciard2013}. For the SLDs in particular, we have 
\begin{equation}
    \begin{aligned}
        f_j(m)&=\frac{1}{2}\frac{\Tr[\{L_j\otimes I,\rho_{x}\otimes\sigma\}V_m]}{\Tr[(\rho_{x}\otimes\sigma) V_m]}\\
        &=\frac{\Tr[ (\frac{1}{2}(L_j\rho_{x}+\rho_{x} L_j)\otimes \sigma) V_m]}{p_m(x)}\\
        &=\frac{\partial_{x_j} \Tr[(\rho_{x}\otimes \sigma) V_m]}{p_m(x)}
        =\frac{\partial_{x_j} p_m(x)}{p_m(x)}.
    \end{aligned}
\end{equation}
With this optimal choice, we have $F_j=\sum_{m} \frac{\partial_{x_j} p_m(x)}{p_m(x)}V_m$. The entries of $Q_{\text{Re}}$ can then be obtained as 
\begin{equation}
    \begin{aligned}
        &(Q_{\text{Re}})_{jk}=\text{Re}\Tr[(F_j-L_j\otimes  I)(F_k-L_k\otimes  I)(\rho_x\otimes \sigma)]\\
        &=\text{Re}\{\Tr[(F_jF_k)(\rho_x\otimes \sigma)]-\Tr[F_j (L_k\otimes  I)(\rho_x\otimes \sigma)]\\
        &\quad-\Tr[(L_j\otimes  I)F_k(\rho_x\otimes \sigma)]+\Tr[(L_jL_k\otimes  I)(\rho_x\otimes \sigma)]\}.
    \end{aligned}
\end{equation}
Here the first term,  
\begin{equation}
    \begin{aligned}
        &\text{Re}\{\Tr[F_jF_k(\rho_x\otimes \sigma)]\}\\
        &=\text{Re}\{\Tr[\sum_{m}  \frac{\partial_{x_j} p_m(x)}{p_m(x)}\frac{\partial_{x_k} p_m(x)}{p_m(x)}V_m(\rho_{x}\otimes \sigma)]\}\\
        &=\sum_{m} \frac{\partial_{x_j} p_m(x)\partial_{x_k} p_m(x)}{p_m(x)},
    \end{aligned}
\end{equation}
equals to the $jk$-th entry of the classical Fisher information matrix, $(F_C)_{jk}$. 
While the second term, 
\begin{eqnarray}
    \aligned
    &\text{Re}\{\Tr[F_j (L_k\otimes  I)(\rho_x\otimes \sigma)]\}\\
    &=\frac{1}{2}\Tr[(F_j (L_k\otimes  I)+(L_k\otimes I)F_j) (\rho_x\otimes \sigma)]\\
    &=\sum_{m}\frac{\partial_{x_j}p_m(x)}{p_m(x)} \Tr[(\frac{1}{2}(L_k\rho_{x}+\rho_{x}L_k)\otimes \sigma) V_m]\\
    &=\sum_{m}\frac{\partial_{x_j}p_m(x)}{p_m(x)} \partial_{x_k} \Tr[ (\rho_{x}\otimes \sigma) V_m]\\
    &=\sum_{m}\frac{\partial_{x_j}p_m(x)\partial_{x_k}p_m(x)}{p_m(x)},
    \endaligned
\end{eqnarray}
also equals to $(F_C)_{jk}$. It can be similarly shown that the third term equals to $(F_C)_{jk}$ as well. For the last term we have 
\begin{equation}
    \begin{aligned}
        &\text{Re}\{\Tr[(L_jL_k\otimes  I)(\rho_x\otimes \sigma)]\}=\text{Re}\{\Tr[L_jL_k \rho_x]\}\\
        &=\frac{1}{2}\Tr[(L_jL_k+L_kL_j)\rho_x],
    \end{aligned}
\end{equation}
which is just $(F_Q)_{jk}$. Put the four terms together, we can get $(Q_{\text{Re}})_{jk}=(F_Q)_{jk}-(F_C)_{jk}$. It is also straightforward to see that with the SLDs as the observables, we have $S_{\text{Re}}=F_Q$. The error tradeoff for the approximate measurement of the SLDs then leads to a tradeoff relation 
\begin{equation}
    \Tr[F_Q^{-1}(F_Q-F_C)]\geq \left(\sqrt{\|F_Q^{-\frac{1}{2}}\tilde{S}_{\text{Im}}F_Q^{-\frac{1}{2}}\|_F+1}-1\right)^2,
\end{equation}
where $\tilde{S}_{\text{Im}}=\sum_q \tilde{S}_{u_q,\text{Im}}$ with each 
$\tilde{S}_{u_q,\text{Im}}$ equals to either $S_{u_q,\text{Im}}$ or $S_{u_q,\text{Im}}^T$, here  
$(S_{u_q,\text{Im}})_{jk}=\frac{1}{2i}\bra{u_q}(\sqrt{\rho_x}\otimes \ket{\xi_0}\bra{\xi_0}) ([L_j,L_k]\otimes  I)(\sqrt{\rho_x}\otimes \ket{\xi_0}\bra{\xi_0})\ket{u_q}$ with $\{\ket{u_q}\}$ as any set of vectors that satisfies $\sum_q \ket{u_q}\bra{u_q}=I$. This can be rewritten as
\begin{equation}
    \Tr[F_Q^{-1}F_C]\leq n-\left(\sqrt{\|F_Q^{-\frac{1}{2}}\tilde{S}_{\text{Im}}F_Q^{-\frac{1}{2}}\|_F+1}-1\right)^2.
\end{equation}

\bigskip
\noindent\textbf{Evaluate the mean-squared-error using ``3-state method''}

\noindent
To experimentally evaluate the errors $\epsilon_j$, note that for each $j$,
\begin{equation}
    \begin{aligned}
        \epsilon_j^2=&\Tr[(\rho\otimes\sigma)(F_j-X_j\otimes I)^2]\\
        =&\Tr(\rho X_j^2)+\sum_m f_j(m)^2\Tr(\rho M_m)\\
        &-2\sum_mf_j(m)\operatorname{Re}\Tr(\rho M_m X_j).
    \end{aligned}
\end{equation}
Here, $\Tr(\rho M_m)$ represents the probability of obtaining outcome $m$ in the measurement, which can be directly obtained from experimental data.
To evaluate the quantities $\operatorname{Re}\Tr(\rho M_m X_j)$, the "3-state method" can be employed \cite{Jacqueline2012, Ringbauer2014}. We can express the terms as
\begin{equation}
    \begin{aligned}
        \operatorname{Re}\Tr(\rho M_m X_j)=&\frac{1}{2}\left(\Tr\left[M_m(I+X_j)\rho(I+X_j)\right]\right.\\
        &\left.-\Tr\left(M_m X_j\rho X_j\right)-\Tr\left(M_m\rho\right)\right).
    \end{aligned}
\end{equation}
By performing the measurement ${M_m}$ on three different states $\rho_1=\rho$, $\rho_2=\frac{X_j\rho X_j}{\Tr(X_j\rho X_j)}$, and $\rho_3=\frac{(I+X_j)\rho(I+X_j)}{\Tr\left[(I+X_j)\rho(I+X_j)\right]}$, we can estimate the three terms in the expression above. 
Assuming that all three terms are non-zero without loss of generality, we can calculate the optimal value for $f_j(m)$ as $f_j(m)=\frac{\operatorname{Re}\Tr(\rho M_m X_j)}{\Tr(\rho M_m)}$. This value can be directly computed using the obtained values of $\Tr(\rho M_m)$ and $\operatorname{Re}\Tr(\rho M_m X_j)$.


We generalize the method in the previous work \cite{Ringbauer2014} to bound the value of $\operatorname{Re}\Tr(\rho M_m X_j)$ from noisy experimetal data.
Specifically, by taking the probabilities ${p_l(m) = \Tr(\rho_l M_m)|l=1,2,3}$ as constraints on the POVM, ${M_m}$, we can bound the value of $\operatorname{Re}\Tr(\rho M_m X_j)$ within a small interval by computing its minimum and maximum, which can be efficiently computed via semi-definite programmings as
\begin{equation}
    \begin{aligned}
        \min /\max\ &\operatorname{Re}\Tr(\rho M_m X_j)\\
        \text{subject to}\ &\sum_m M_m=I,\ M_m\geq 0,\ \forall m\\
        &\mathcal{S}(\{\Tr(\rho_l M_m)=p_l(m),\ \forall m|l=1,2,3\})
    \end{aligned}
\end{equation}
here $\mathcal{S}(\mathcal{A})$ denotes the largest subset of $\mathcal{A}$ such that all the constraints therein are independent.

Using the method described, we can evaluate the errors for each observable using the following equations:
\begin{equation}\label{apdx:err_experiment}
    \begin{aligned}
        \alpha_{jm}^{\min/\max}=&\min/\max\ \operatorname{Re}\Tr(\rho M_m X_j),\\
        f_j^{\min/\max}(m)=&\alpha_{jm}^{\min/\max}/p_1(m),\\
        (\epsilon_j^{A})^2=&\Tr(\rho X_j^2)+\sum_m f_j^{\min}(m)^2p_1(m)\\&-2\sum_mf_j^{\min}(m)\alpha_{jm}^{\min},\\
        =&\Tr(\rho X_j^2)-\sum_m (\alpha_{jm}^{\min})^2/p_1(m)\\
        (\epsilon_j^{B})^2=&\Tr(\rho X_j^2)+\sum_m f_j^{\max}(m)^2p_1(m)\\&-2\sum_mf_j^{\max}(m)\alpha_{jm}^{\max},\\
        =&\Tr(\rho X_j^2)-\sum_m (\alpha_{jm}^{\max})^2/p_1(m).\\
    \end{aligned}
\end{equation}
Denoting $\epsilon_j^{\min/\max}=\min/\max\ \{\epsilon_j^{A},\epsilon_j^{B}\}$, the error $\epsilon_j$ is then bounded in a small interval $[\epsilon_j^{\min},\epsilon_j^{\max}]$ based on the experimental observations.
In our experiment on a qubit system, the interval is typically too small to be visible and the errors can be approximately determined as $\epsilon_j\approx\epsilon_j^{\min}\approx\epsilon_j^{\max}$.


\bigskip
\noindent\textbf{Characterization of the superconducting platform ``ScQ-P136''.}

\noindent
The experiments are performed on the ``ScQ-P136" backend of the Quafu cloud quantum computing platform \cite{Quafu3}.
The parameters of the used qubit are shown in Table.\ref{tab:ScQ-P136qubit}.
The parameters and architecture of the processor can be found in Table.\ref{tab:ScQ-P136} and Fig.\ref{fig:ScQ-P136}.

\begin{table}[t]
    \centering
    \setlength{\tabcolsep}{0.05mm}
    \begin{tabular}{c@{\hskip 2em}c}
        \toprule
        Qubit index & 1 \\[.06cm]
        Qubit frequency (GHz) & 4.524 \\[.06cm]
        Readout frequency (GHz) & 6.832 \\[.06cm]
        Anharmonicity (MHz) & 0.294 \\[.06cm]
        Relaxation time, T1 ($\mu s$) & 22.88 \\[.06cm]
        Coherence time, T2 ($\mu s$) & 19.9 \\[.06cm]
        \bottomrule
    \end{tabular}
    \caption{Parameters of the used qubit in ``ScQ-P136''.} 
    \label{tab:ScQ-P136qubit}
    \vspace{1em}
    \setlength{\tabcolsep}{0.05mm}
    \begin{tabular}{c@{\hskip 1em}c@{\hskip 1em}c@{\hskip 1em}c@{\hskip 1em}c}
        \toprule
        T1 ($\mu s$) & T2 ($\mu s$) & Fidelity$_{\mathrm{CNOT}}$ & Fidelity$_{\mathrm{Qubit}}$ \\
        \midrule
        Avg: 34.208 & Avg: 17.485 & Avg: 0.946 & ~ \\
        min: 15.17 & min: 0.95 & min: 0.83 & $>$0.99 \\
        max: 59.1 & max: 53.41 & max: 0.996 & ~ \\
        \bottomrule
    \end{tabular}
    \caption{Parameters of the quantum processor ``ScQ-P136''.} 
    \label{tab:ScQ-P136}
\end{table}

\begin{figure}
    \centering
    \includegraphics[width=0.47\textwidth]{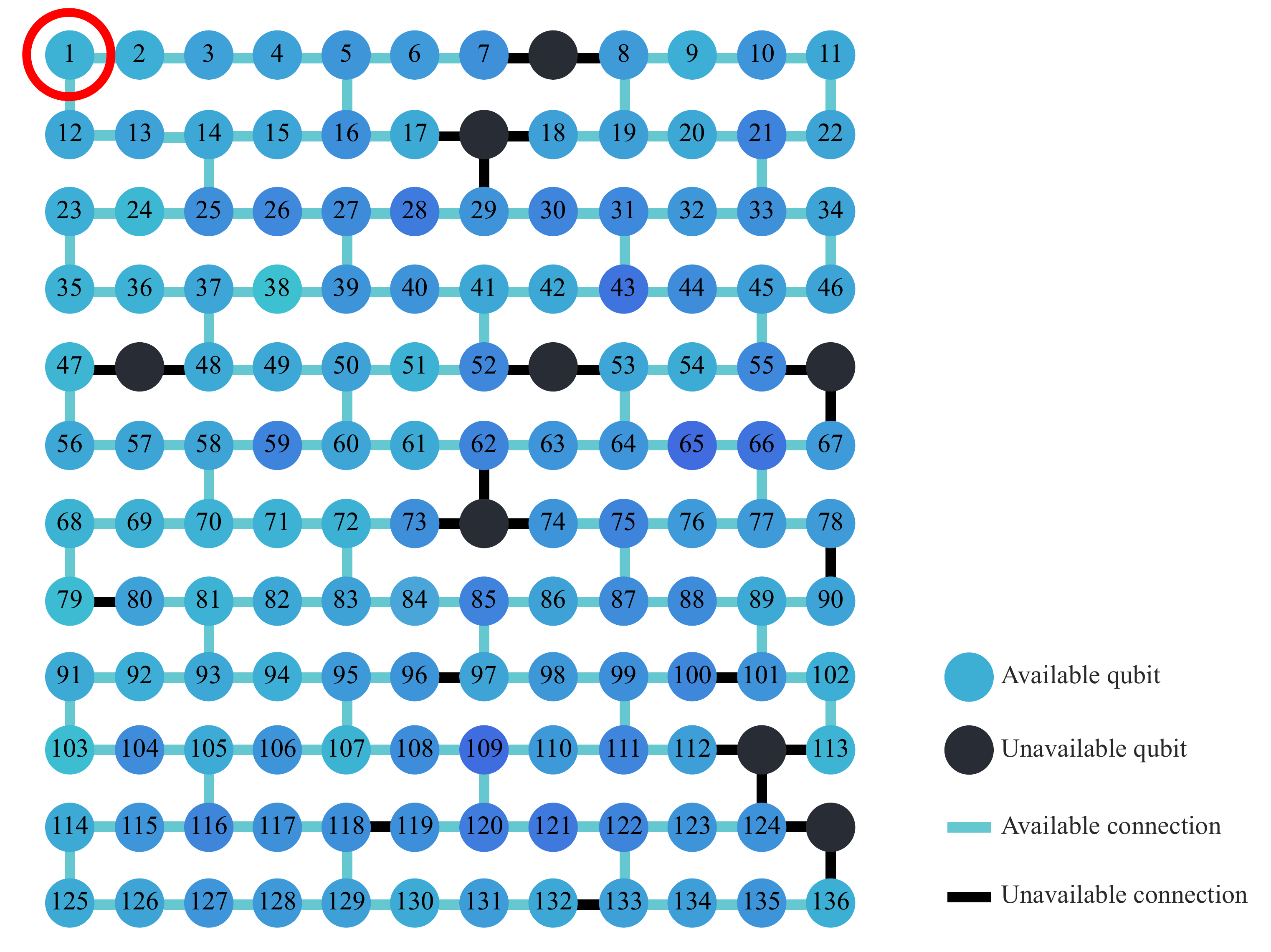}
    \caption{Architecture of the quantum processor ``ScQ-P136''. The used qubit is circled.}
    \label{fig:ScQ-P136}
\end{figure}

\bigskip

\noindent \textbf{Data availability}

\noindent
All relevant source codes are available from the authors upon request.

\noindent \textbf{Code availability}

\noindent
All relevant source codes are available from the authors upon request.

\noindent \textbf{Acknowledgements}\\
We acknowledge the use of Quafu cloud quantum computation platform for this work.
H.Y. acknowledges partial support from the Research Grants Council of Hong Kong with Grant No. 14307420, 14308019, 14309022, Ministry of Science and Technology, China (MOST2030 with Grant No 2023200300600, 2023200300603), the Guangdong Provincial Quantum Science Strategic Initiative (Grant No.GDZX2303007).
H.C. acknowledges the support from Shenzhen University with Grant No. 000001032510.

\noindent \textbf{Author contributions}

\noindent
H.Y. conceived the project. H.C. and H.Y. derived the theoretical results. H.C. and L.W. ran the Quafu experiment. H.C. and H.Y. wrote the manuscript. H.Y. supervised the project.

\noindent \textbf{Competing interests}

\noindent
The authors declare no competing interests.



\newpage
\setcounter{equation}{0}
\renewcommand{\thesection}{S\arabic{section}}
\renewcommand{\theequation}{S\arabic{equation}}
\renewcommand{\thefigure}{S\arabic{figure}}
\renewcommand{\thetable}{S\arabic{table}}
\begin{widetext}

\begin{center}
    \Large Supplementary Information for \\ Simultaneous Measurement of Multiple Incompatible Observables and Tradeoff in Multiparameter Quantum Estimation
\end{center}





\noindent
Section \ref{apdx:analyticaln}. Analytical error-tradeoff relation for multiple observables\\
Section \ref{apdx:errorasSDP}. Error-tradeoff relation with semidefinite programming\\
Section \ref{apdx:comparSDPanaly}. Comparison of the SDP and analytical bound\\
Section \ref{apdx:tightpure}. Tight error-tradeoff relation and optimal measurement for pure states\\
Section \ref{apdx:tightpuretwo}. Analytical error-tradeoff relation and optimal measurement for two observables on pure states\\
Section \ref{apdx:tightermixed}. Tighter analytical bound for two observables on mixed states\\
Section \ref{apdx:comparAnalyOzawa}. Comparison of the analytical bound for more than two observables with the Ozawa’s bound on pure states\\
Section \ref{apdx:comparSDPOzawa}. Comparison of the SDP bound for more than two observables with the Ozawa’s bound on mixed states\\
Section \ref{apdx:example}. Examples\\

\section{Analytical error-tradeoff relation for multiple observables}\label{apdx:analyticaln}
In this section we derive the tradeoff relation
\begin{equation}
    \Tr(S_{\text{Re}}^{-1}Q_{\text{Re}})\geq \left(\sqrt{\|S_{\text{Re}}^{-\frac{1}{2}}\tilde{S}_{\text{Im}}S_{\text{Re}}^{-\frac{1}{2}}\|_{F}+1}-1\right)^2.
\end{equation}

Recall given the observables $\{X_j|1\leq j\leq n\}$, we can construct $\{F_j=\sum_m f_j(m)V_m| 1\leq j\leq n\}$ to approximate the observables from a set of projective measurement $\{V_m\}$ on the extended space with an acilla. The mean squared error of the approximation is given by
\begin{equation}
    \epsilon_j^2 = \Tr\left[(F_j-X_j\otimes I)^2\left(\rho\otimes\sigma\right)\right],
\end{equation}
where $\sigma=|\xi_0\rangle\langle \xi_0|$ is a state of the ancilla.
We then let 
\begin{equation}
    \begin{aligned}
        \mathcal{A}_u=&\begin{pmatrix}
            \bra{u}\sqrt{\rho\otimes\sigma}E_1 \\ \vdots \\ \bra{u}\sqrt{\rho\otimes\sigma}E_n \\  \bra{u}\sqrt{\rho\otimes\sigma}(X_1\otimes I) \\ 
            \vdots \\ \bra{u}\sqrt{\rho\otimes\sigma}(X_n\otimes I)
        \end{pmatrix}\begin{pmatrix}
            \bra{u}\sqrt{\rho\otimes\sigma}E_1 \\ \vdots \\ \bra{u}\sqrt{\rho\otimes\sigma}E_n \\  \bra{u}\sqrt{\rho\otimes\sigma}(X_1\otimes I) \\ 
            \vdots \\ \bra{u}\sqrt{\rho\otimes\sigma}(X_n\otimes I)
        \end{pmatrix}^\dagger\\
        =&\begin{pmatrix}
            Q_u & R_u \\
            R_u^\dagger & S_u
        \end{pmatrix}
        \geq 0,
    \end{aligned}
\end{equation}
here $E_j=F_j-X_j\otimes I$, $|u\rangle$ is any vector, $Q_u$, $R_u$, $S_u$ are $n\times n$ matrices with the entries given by (note $E_j$ and $X_j$ are all Hermitian operators)
\begin{equation}
    \begin{aligned}
        (Q_u)_{jk}&=\bra{u}\sqrt{\rho\otimes\sigma}E_jE_k\sqrt{\rho\otimes\sigma}\ket{u},\\
        (R_u)_{jk}&=\bra{u}\sqrt{\rho\otimes\sigma}E_j(X_k\otimes I)\sqrt{\rho\otimes\sigma}\ket{u},\\
        (S_u)_{jk}&=\bra{u}\sqrt{\rho\otimes\sigma}(X_j\otimes I)(X_k\otimes I)\sqrt{\rho\otimes\sigma}\ket{u}.
    \end{aligned}
\end{equation}
We can write these matrices in terms of the real and imaginary parts as $Q_u=Q_{u,\text{Re}}+iQ_{u,\text{Im}}$, $R_u=R_{u,\text{Re}}+iR_{u,\text{Im}}$, $S_u=S_{u,\text{Re}}+iS_{u,\text{Im}}$, where
\begin{equation}
    \begin{aligned}
        (Q_{u,\text{Re}})_{jk}&=\frac{1}{2}\bra{u}\sqrt{\rho\otimes\sigma}\{E_j,E_k\}\sqrt{\rho\otimes\sigma}\ket{u},\\
        (R_{u,\text{Re}})_{jk}&=\frac{1}{2}\bra{u}\sqrt{\rho\otimes\sigma}\{E_j,X_k\otimes I\}\sqrt{\rho\otimes\sigma}\ket{u},\\
        (S_{u,\text{Re}})_{jk}&=\frac{1}{2}\bra{u}\sqrt{\rho\otimes\sigma}\{X_j\otimes I,X_k\otimes I\}\sqrt{\rho\otimes\sigma}\ket{u},
    \end{aligned}
\end{equation}
and
\begin{equation}
    \begin{aligned}
        (Q_{u,\text{Im}})_{jk}&=\frac{1}{2i}\bra{u}\sqrt{\rho\otimes\sigma}[E_j,E_k]\sqrt{\rho\otimes\sigma}\ket{u},\\
        (R_{u,\text{Im}})_{jk}&=\frac{1}{2i}\bra{u}\sqrt{\rho\otimes\sigma}[E_j,X_k\otimes I]\sqrt{\rho\otimes\sigma}\ket{u},\\
        (S_{u,\text{Im}})_{jk}&=\frac{1}{2i}\bra{u}\sqrt{\rho\otimes\sigma}[X_j\otimes I,X_k\otimes I]\sqrt{\rho\otimes\sigma}\ket{u}.
    \end{aligned}
\end{equation}
For any set of $\{\ket{u_q}\}$ with $\sum_q\ket{u_q}\bra{u_q}= I$, we can obtain a corresponding set of $\{\mathcal{A}_{u_q}\}$.
We then let $\tilde{{\mathcal{A}}}=\sum_q\tilde{\mathcal{A}}_{u_q}$ with each $\tilde{\mathcal{A}}_{u_q}$ equals to either $\mathcal{A}_{u_q}$ or $\mathcal{A}_{u_q}^{T}$.
Since $\mathcal{A}_{u_q}\geq 0$, and $\mathcal{A}_{u_q}^{T}\geq 0$, we have
\begin{equation}\label{apdx:Auq}
    \tilde{\mathcal{A}}=\sum_q\tilde{\mathcal{A}}_{u_q}=
    \begin{pmatrix}
        \tilde{Q} & \tilde{R}\\
        \tilde{R}^\dagger & \tilde{S}
    \end{pmatrix}\geq 0,
\end{equation}
where $\tilde{Q}=\sum_q \tilde{Q}_{u_q}$, $\tilde{R}=\sum_q \tilde{R}_{u_q}$, $\tilde{S}=\sum_q \tilde{S}_{u_q}$ with each $(\tilde{Q}_{u_q},\tilde{R}_{u_q}, \tilde{S}_{u_q})$ equals to either $(Q_{u_q},R_{u_q}, S_{u_q})$ or $(\bar{Q}_{u_q},\bar{R}_{u_q}, \bar{S}_{u_q})$, here $\bar{M}=M_{\text{Re}}-iM_{\text{Im}}$. 
We assume $\tilde{S}$ is nonsingular, which means $\{X_j\}$ are linearly independent.

Since $\tilde{\mathcal{A}}\geq 0$, with the Schur complement, we have
\begin{equation}
    \tilde{Q}-\tilde{R}\tilde{S}^{-1}\tilde{R}^{\dagger}\geq 0.
\end{equation}
This can be rewritten as $\tilde{Q}-(i\tilde{R})\tilde{S}^{-1}(i\tilde{R})^{\dagger}\geq 0$, which is the Schur complement of 
\begin{equation}
    \tilde{\mathcal{B}}=
    \begin{pmatrix}
        \tilde{Q} & i\tilde{R}\\
        (i\tilde{R})^{\dagger} & \tilde{S}
    \end{pmatrix}.
\end{equation}
Since $\tilde{Q}-(i\tilde{R})\tilde{S}^{-1}(i\tilde{R})^{\dagger}\geq 0$ and $\tilde{S}>0$, we then have $\tilde{\mathcal{B}}\geq 0$. This implies $\tilde{\mathcal{B}}_{\text{Re}}=\frac{1}{2}(\tilde{\mathcal{B}}+\tilde{\mathcal{B}}^{T})\geq 0$, which can be written as
\begin{equation}
    \tilde{\mathcal{B}}_{\text{Re}}=
    \begin{pmatrix}
        \tilde{Q}_{\text{Re}} & -\tilde{R}_{\text{Im}}\\
        -\tilde{R}_{\text{Im}}^{T} & \tilde{S}_{\text{Re}}
    \end{pmatrix}
    \geq 0.
\end{equation}
With the Schur complement of $\tilde{\mathcal{B}}_{\text{Re}}$ we have
\begin{equation}
    \tilde{Q}_{\text{Re}}-\tilde{R}_{\text{Im}}\tilde{S}_{\text{Re}}^{-1}\tilde{R}_{\text{Im}}^{T}\geq 0.
\end{equation}
This implies
\begin{equation}
    \tilde{S}_{\text{Re}}^{-\frac{1}{2}} \tilde{Q}_{\text{Re}}\tilde{S}_{\text{Re}}^{-\frac{1}{2}}-\tilde{S}_{\text{Re}}^{-\frac{1}{2}}\tilde{R}_{\text{Im}}\tilde{S}_{\text{Re}}^{-1}\tilde{R}_{\text{Im}}^{T}\tilde{S}_{\text{Re}}^{-\frac{1}{2}}\geq 0,
\end{equation}
thus
\begin{equation}\label{eq:QR}
    \aligned
    \Tr(\tilde{S}_{\text{Re}}^{-\frac{1}{2}} \tilde{Q}_{\text{Re}}\tilde{S}_{\text{Re}}^{-\frac{1}{2}})&\geq\Tr(\tilde{S}_{\text{Re}}^{-\frac{1}{2}}\tilde{R}_{\text{Im}}\tilde{S}_{\text{Re}}^{-1}\tilde{R}_{\text{Im}}^{T}\tilde{S}_{\text{Re}}^{-\frac{1}{2}})\\
    &=\|\tilde{S}_{\text{Re}}^{-\frac{1}{2}}\tilde{R}_{\text{Im}}\tilde{S}_{\text{Re}}^{-\frac{1}{2}}\|_{F}^2,
    \endaligned
\end{equation}
where $\|\cdot\|_F$ is the Frobenius norm.
Here the right side of the inequality depends on $\tilde{R}_{\text{Im}}$, which depends on the choice of the measurement. To get rid of $\tilde{R}_{\text{Im}}$, we use the fact that $[F_j,F_k]=0$ since they are constructed from the same projective measurement. $[F_j,F_k]=0$ can be rewritten as $[E_j+X_j\otimes  I, E_k+X_k\otimes I]=0$, which can be expanded as
\begin{equation}
    [E_j,E_k]+[E_j,X_k\otimes  I]+[X_j\otimes  I,E_k]+[X_j\otimes  I,X_k\otimes  I]=0.
\end{equation}
From which we have $Q_{u_q,\text{Im}}+R_{u_q,\text{Im}}-R_{u_q,\text{Im}}^{T}+S_{u_q,\text{Im}}=0$. 
Since $(\tilde{Q}_{u_q},\tilde{R}_{u_q}, \tilde{S}_{u_q})$ equal to either $(Q_{u_q},R_{u_q}, S_{u_q})$ or $(\bar{Q}_{u_q},\bar{R}_{u_q}, \bar{S}_{u_q})$, $(\tilde{Q}_{u_q,\text{Im}},\tilde{R}_{u_q,\text{Im}},\tilde{S}_{u_q,\text{Im}})$ then equal to either $(Q_{u_q,\text{Im}},R_{u_q,\text{Im}},S_{u_q,\text{Im}})$ or $(-Q_{u_q,\text{Im}},-R_{u_q,\text{Im}},-S_{u_q,\text{Im}})$. In either case
$\tilde{Q}_{u_q,\text{Im}}+\tilde{R}_{u_q,\text{Im}}-\tilde{R}_{u_q,\text{Im}}^{T}+\tilde{S}_{u_q,\text{Im}}=0$ holds, thus $\sum_q (\tilde{Q}_{u_q,\text{Im}}+\tilde{R}_{u_q,\text{Im}}-\tilde{R}_{u_q,\text{Im}}^{T}+\tilde{S}_{u_q,\text{Im}})=0$. This can be written as
\begin{equation}
    \tilde{Q}_{\text{Im}}+\tilde{R}_{\text{Im}}-\tilde{R}_{\text{Im}}^{T}+\tilde{S}_{\text{Im}}=0.
\end{equation}
By multiplying $\tilde{S}_{\text{Re}}^{-\frac{1}{2}}$ from both left and right, we have $\hat{Q}_{\text{Im}}+\hat{R}_{\text{Im}}-\hat{R}_{\text{Im}}^{T}+\hat{S}_{\text{Im}}=0$, which can be equivalently written as
\begin{equation}\label{eq:ineqImsupp}
    \hat{S}_{\text{Im}}=-\hat{Q}_{\text{Im}}-\hat{R}_{\text{Im}}+\hat{R}_{\text{Im}}^{T},
\end{equation}
where $\hat{\Box}=\tilde{S}_{\text{Re}}^{-\frac{1}{2}}\tilde{\Box}\tilde{S}_{\text{Re}}^{-\frac{1}{2}}$. Applying the triangle inequality we get
\begin{equation}
    \|\hat{S}_{\text{Im}}\|_F\leq \|\hat{Q}_{\text{Im}}\|_F+\|\hat{R}_{\text{Im}}\|_F+\|\hat{R}_{\text{Im}}^{T}\|_F.
\end{equation}
Note that Eq.(\ref{eq:QR}) can be rewritten as $\|\hat{R}_{\text{Im}}^{T}\|_F^2=\|\hat{R}_{\text{Im}}\|_F^2\leq \Tr(\hat{Q}_{\text{Re}})$,
and from $\hat{Q}\geq 0$ we have
\begin{equation}
    \Tr(\hat{Q}_{\text{Re}})\geq \|\hat{Q}_{\text{Im}}\|_1\geq \|\hat{Q}_{\text{Im}}\|_F.
\end{equation}
Combining these inequalities we then have
\begin{equation}
    \begin{aligned}
        \|\hat{S}_{\text{Im}}\|_F&\leq \|\hat{Q}_{\text{Im}}\|_F+\|\hat{R}_{\text{Im}}\|_F+\|\hat{R}_{\text{Im}}^{T}\|_F\\
        &\leq \Tr(\hat{Q}_{\text{Re}})+2\sqrt{\Tr(\hat{Q}_{\text{Re}})}.
    \end{aligned}
\end{equation}
This leads to the bound on $\Tr(\hat{Q}_{\text{Re}})$ as
\begin{equation}
    \Tr(\hat{Q}_{\text{Re}})\geq \left(\sqrt{	\|\hat{S}_{\text{Im}}\|_F+1}-1\right)^2,
\end{equation}
which can be equivalently written as
\begin{equation}\label{eq:multi_obs_bound}
    \Tr(S_{\text{Re}}^{-1}Q_{\text{Re}})\geq \left(\sqrt{\|S_{\text{Re}}^{-\frac{1}{2}}\tilde{S}_{\text{Im}}S_{\text{Re}}^{-\frac{1}{2}}\|_{F}+1}-1\right)^2.
\end{equation}

We note that this tradeoff relation is invariant under linear transformations of the observables, i.e., if we let $Y_j= \sum_k b_{jk}X_k$ with $b_{jk}$ as the $jk$-th entry of a non-singular real matrix $B$, then $Q_{\text{Re}}(Y)=B^TQ_{\text{Re}}(X)B$, $\tilde{S}(Y)=B^T\tilde{S}(x)B$, we then have $\Tr[S_{\text{Re}}^{-1}(Y)Q_{\text{Re}}(Y)]=\Tr[S_{\text{Re}}^{-1}(X)Q_{\text{Re}}(X)]$ and $\|S_{\text{Re}}^{-\frac{1}{2}}(Y)\tilde{S}_{\text{Im}}(Y)S_{\text{Re}}^{-\frac{1}{2}}(Y)\|_{F}=\|S_{\text{Re}}^{-\frac{1}{2}}(X)\tilde{S}_{\text{Im}}(X)S_{\text{Re}}^{-\frac{1}{2}}(X)\|_{F}$. The tradeoff relation thus remain as the same under linear transformation and it is often convenient to choose a linear transformation under which $S_{\text{Re}}=I$.

\section{Error-tradeoff relation with semidefinite programming}\label{apdx:errorasSDP}

Recall that given a set of observables $\{X_j\}_{j=1}^n$ in Hilbert space $\mathcal{H}_S$, we can construct $\{F_j=\sum_m f_j(m)V_m\}_{j=1}^n$ to approximate the observables from a set of projective measurement $\{V_m\}$ on the extended space $\mathcal{H}_S\otimes\mathcal{H}_A$.
We denote the weighted mean squared error of the approximations as
\begin{equation}
    \mathcal{E}=\sum_{j=1}^n w_j \epsilon_j^2=\sum_{j=1}^n w_j\Tr[(\rho\otimes\sigma)(F_j-X_j\otimes I)^2],
\end{equation}
where $\rho\in\mathcal{H}_S$ is the state of the system and $\sigma=\ket{\xi_0}\bra{\xi_0}\in\mathcal{H}_A$ is the state of the ancilla.
To formularize the minimization of $\mathcal{E}$ over all choices of $\{F_j\}_{j=1}^n$, we first define an $n\times n$ Hermitian matrix $Q$, whose $jk$-th element is given by
\begin{equation}
    \begin{aligned}
        Q_{jk}=&\Tr\left[(\rho\otimes\sigma)(F_j-X_j\otimes I)(F_k-X_k\otimes I)\right]\\
        =&\Tr\left[(\rho\otimes\sigma)F_jF_k\right]-\Tr\left[(\rho\otimes\sigma)F_j(X_k\otimes I)\right]
        -\Tr\left[(\rho\otimes\sigma)(X_j\otimes I)F_k\right]
        +\Tr\left[(\rho\otimes\sigma)(X_j\otimes I)(X_k\otimes I)\right]\\
        =&\Tr\left[\rho \sum_m f_j(m)M_m f_k(m)\right]-\Tr\left(\rho R_jX_k\right)
        -\Tr\left(\rho X_j R_k\right)+\Tr\left(\rho X_j X_k\right),
    \end{aligned}
\end{equation}
where $R_j=(I\otimes\bra{\xi_0})F_j(I\otimes\ket{\xi_0})=\sum_m f_j(m)M_m$ for $1\le j\le n$ are Hermitian matrices in $\mathcal{H}_S$,  $\{M_m=(I\otimes\bra{\xi_0})V_m(I\otimes\ket{\xi_0})\}$ forms a POVM.
We denote $\mathbb{S}_{jk}=\sum_{m}f_j(m)M_m f_k(m)$, $\mathbb{R}=\begin{pmatrix}R_1 & R_2 & \cdots & R_n\end{pmatrix}^{\dagger}$ and $\mathbb{X}=\begin{pmatrix}X_1 & X_2 & \cdots & X_n\end{pmatrix}^{\dagger}$,
the Hermitian matrix $Q$ can then be written as $Q=\Tr_S\left[(I_n\otimes\rho)(\mathbb{S}-\mathbb{R}\mathbb{X}^{\dagger}-\mathbb{X}\mathbb{R}^{\dagger}+\mathbb{X}\mathbb{X}^{\dagger})\right]$, and
\begin{equation}
    \begin{aligned}
        \mathcal{E}=&\Tr(WQ)=\Tr\left[(W\otimes\rho)(\mathbb{S}-\mathbb{R}\mathbb{X}^{\dagger}-\mathbb{X}\mathbb{R}^{\dagger}+\mathbb{X}\mathbb{X}^{\dagger})\right],
    \end{aligned}
\end{equation}
where $\mathbb{S}$ is an $n\times n$ block matrix with the $jk$-th block as $\mathbb{S}_{jk}$, which is itself a Hermitian operator, $W\geq 0$ is a weight matrix, which is typically taken as $W=\mathrm{diag}\{w_1,\cdots, w_n\}$ , but it can also be taken as a general positive semidefinite matrix.

We now show that for the defined $\mathbb{S}$ and $\mathbb{R}$, we have $\mathbb{S}\geq \mathbb{R}\mathbb{R}^{\dagger}$. Since $\mathbb{S}$ and $\mathbb{R}\mathbb{R}^{\dagger}$ are $nd_s\times nd_s$ matrices, where $d_S$ is the dimension of $\mathcal{H}_S$, $\mathbb{S}\geq \mathbb{R}\mathbb{R}^{\dagger}$ is equivalent to $b^\dagger(\mathbb{S}- \mathbb{R}\mathbb{R}^{\dagger})b\geq 0$ for any $nd_S\times 1$ vector $b$. 

For any $nd_S\times 1$ vector, $b$, we can write $b^\dagger$ as $b^\dagger=\begin{pmatrix}\bra{b_1} & \bra{b_2} & \cdots & \bra{b_n}\end{pmatrix}$ with $\{\bra{b_j}\}_{j=1}^n$ being $d_S$-dimensional row vectors. We then have
    \begin{equation}
        \begin{aligned}
            &b^{\dagger}\mathbb{S}b-b^{\dagger}\mathbb{R}\mathbb{R}^{\dagger}b=\sum_{j,k} \bra{b_j} (\mathbb{S}_{jk}-R_jR_k) \ket{b_k}\\
            =&\sum_{j,k} \bra{b_j} \left[\sum_{m}f_j(m)M_m f_k(m)-\left(\sum_m f_j(m)M_m\right)\left(\sum_l f_k(l)M_l\right)\right] \ket{b_k}\\
            =&\sum_m \left(\sum_j\bra{b_j}f_j(m)\right) M_m \left(\sum_k f_k(m)\ket{b_k}\right)-  \left[\sum_m \left(\sum_j \bra{b_j} f_j(m)\right)M_m\right]\left[\sum_l M_l  \left(\sum_k f_k(l) \ket{b_k}\right)\right]\\
            =&\sum_m \left[\sum_j\bra{b_j}f_j(m)-\sum_l \left(\sum_j \bra{b_j} f_j(l)\right)M_l\right]M_m\left[\sum_k f_k(m)\ket{b_k}-\sum_l M_l  \left(\sum_k f_k(l) \ket{b_k}\right) \right]
            \\
            =&\sum_m V_m^{\dagger}M_mV_m\geq 0,
        \end{aligned}
    \end{equation}
    where $V_m=\sum_k f_k(m)\ket{b_k}-\sum_l M_l  \left(\sum_k f_k(l) \ket{b_k}\right)=\sum_j f_j(m)\ket{b_j}-\sum_l M_l  \left(\sum_j f_j(l) \ket{b_j}\right)$. We thus have $\mathbb{S}\geq \mathbb{R}\mathbb{R}^{\dagger}$.

    Denote $\{R_j^{\star}\}_{j=1}^n$ and $\mathbb{S}^{\star}$ as the optimal operators that gives the minimal error, $\mathcal{E}^{\star}$, 
    we then have
    \begin{equation}
        \begin{aligned}
            \mathcal{E}\geq \mathcal{E}^{\star}&=\Tr\left[(W\otimes\rho)(\mathbb{S}^{\star}-\mathbb{R}^{\star}\mathbb{X}^{\dagger}-\mathbb{X}\mathbb{R}^{\star\dagger}+\mathbb{X}\mathbb{X}^{\dagger})\right]\\
            &\geq \min_{\mathbb{S}} \left\{\Tr\left[(W\otimes\rho)(\mathbb{S}-\mathbb{R}^{\star}\mathbb{X}^{\dagger}-\mathbb{X}\mathbb{R}^{\star\dagger}+\mathbb{X}\mathbb{X}^{\dagger})\right]|\mathbb{S}_{jk}=\mathbb{S}_{jk}^\dagger=\mathbb{S}_{kj},\mathbb{S}\geq \mathbb{R}^{\star}\mathbb{R}^{\star\dagger}\right\}\\
            &\geq \min_{\mathbb{S},\{R_j\}_{j=1}^n} \left\{\Tr\left[(W\otimes\rho)(\mathbb{S}-\mathbb{R}\mathbb{X}^{\dagger}-\mathbb{X}\mathbb{R}^{\dagger}+\mathbb{X}\mathbb{X}^{\dagger})\right]|\mathbb{S}_{jk}=\mathbb{S}_{jk}^\dagger=\mathbb{S}_{kj},\mathbb{S}\geq \mathbb{R}\mathbb{R}^{\dagger}, R_j=R_j^{\dagger}\right\}.
        \end{aligned}
    \end{equation}
    And the minimization can be formulated as a semidefinite programming as
    \begin{equation}\label{apdx:SDP0}
        \begin{aligned}
            \mathcal{E}_0=\min_{\mathbb{S},\{R_j\}_{j=1}^n} &\Tr\left[(W\otimes\rho)(\mathbb{S}-\mathbb{R}\mathbb{X}^{\dagger}-\mathbb{X}\mathbb{R}^{\dagger}+\mathbb{X}\mathbb{X}^{\dagger})\right]\\
            \text{subject to}\quad &\mathbb{S}_{jk}=\mathbb{S}_{kj}=\mathbb{S}_{jk}^{\dagger},\ \forall j,k\\
            &R_j=R_j^{\dagger},\ \forall j\\
            &\begin{pmatrix}
                I & \mathbb{R}^{\dagger}\\
                \mathbb{R} & \mathbb{S}
            \end{pmatrix}\geq 0.
        \end{aligned}
    \end{equation}

\section{Comparison of the SDP and analytical bound}\label{apdx:comparSDPanaly}

In this section we show that the bound $\mathcal{E}_0$ is tighter than the analytical bounds with any choice of $\{|u_q\rangle\}$.

Recall that for any vector $\ket{u}$ in $\mathcal{H}_S\otimes\mathcal{H}_A$, we have
\begin{equation}\label{apdx:schur}
    \begin{aligned}
        \mathcal{A}_u
        =&\begin{pmatrix}
            \bra{u}\sqrt{\rho\otimes\sigma}(F_1-X_1\otimes I) \\ \vdots \\ \bra{u}\sqrt{\rho\otimes\sigma}(F_n-X_n\otimes I) \\  \bra{u}\sqrt{\rho\otimes\sigma}(X_1\otimes I) \\ 
            \vdots \\ \bra{u}\sqrt{\rho\otimes\sigma}(X_n\otimes I)
        \end{pmatrix}
        \begin{pmatrix}
            \bra{u}\sqrt{\rho\otimes\sigma}(F_1-X_1\otimes I) \\ \vdots \\ \bra{u}\sqrt{\rho\otimes\sigma}(F_n-X_n\otimes I) \\  \bra{u}\sqrt{\rho\otimes\sigma}(X_1\otimes I) \\ 
            \vdots \\ \bra{u}\sqrt{\rho\otimes\sigma}(X_n\otimes I)
        \end{pmatrix}^\dagger\\
        =&\begin{pmatrix}
            Q_u & D_u \\
            D_u^\dagger & J_u
        \end{pmatrix}
        \geq 0,
    \end{aligned}
\end{equation}
where $Q_u$, $D_u$, $J_u$ are $n\times n$ matrices with the $jk$-th elements given by
\begin{equation}
    \begin{aligned}
        (Q_u)_{jk}&=\bra{u}\sqrt{\rho\otimes\sigma}(F_j-X_j\otimes I)(F_k-X_k\otimes I)\sqrt{\rho\otimes\sigma}\ket{u},\\
        (D_u)_{jk}&=\bra{u}\sqrt{\rho\otimes\sigma}(F_j-X_j\otimes I)(X_k\otimes I)\sqrt{\rho\otimes\sigma}\ket{u},\\
        (J_u)_{jk}&=\bra{u}\sqrt{\rho\otimes\sigma}(X_j\otimes I)(X_k\otimes I)\sqrt{\rho\otimes\sigma}\ket{u}.
    \end{aligned}
\end{equation}
Note that $[F_j,F_k]=0$, which gives
\begin{equation}\label{apdx:commtcond}
(Q_u+D_u+D_u^{\dagger}+J_u)_{jk}=(Q_u+D_u+D_u^{\dagger}+J_u)_{kj},
\end{equation}
i.e., the matrix $Q_u+D_u+D_u^{\dagger}+J_u$ is real symmetric.
For any set of vectors $\{\ket{u_q}\}$ that satistifies $\sum_q\ket{u_q}\bra{u_q}=I$, we have $Q=\sum_q Q_{u_q}$.
Together with Eq.(\ref{apdx:schur}) and Eq.(\ref{apdx:commtcond}), we can write the bound as a minimization,
\begin{equation}
\begin{aligned}
    \mathcal{E}_u=\min_{\{F_j\}}\ &\Tr[WQ]=\sum_q\Tr[WQ_{u_q}]\\
    \text{subject to}\quad &\begin{pmatrix}
        Q_{u_q} & D_{u_q} \\
        D_{u_q}^\dagger & J_{u_q}
    \end{pmatrix}
    \geq 0,\forall q\\
    &Q_{u_q}+D_{u_q}+D_{u_q}^{\dagger}+J_{u_q}\ \text{real symmetric},\forall q.
\end{aligned}
\end{equation}
$\mathcal{E}_u$ is tighter than any analytical bounds based on it, unless any other potential constraints could be introduced. In particular it is tighter than the analytical bound obtained in Appendix \ref{apdx:analyticaln}.
We will then show $\mathcal{E}_0$ is tighter than $\mathcal{E}_u$ for any $\{\ket{u_q}\}$.

We first formulate $\mathcal{E}_u$ as a semidefinite programming for a fixed $\{\ket{u_q}\}$, then show that $\mathcal{E}_0\geq\mathcal{E}_u$ for any $\{\ket{u_q}\}$.
First note that since $\sigma=\ket{\xi_0}\bra{\xi_0}$, we have 
\begin{eqnarray}
\aligned
&\sqrt{\rho\otimes\sigma}\ket{u_q}\bra{u_q}\sqrt{\rho\otimes\sigma}\\
&=(\sqrt{\rho}\otimes |\xi_0\rangle) (I\otimes \langle \xi_0|)\ket{u_q}\bra{u_q}(I\otimes |\xi_0\rangle) (\sqrt{\rho}\otimes \langle \xi_0|)\\
&=(\sqrt{\rho}\otimes |\xi_0\rangle) \tilde{\rho}(\sqrt{\rho}\otimes \langle \xi_0|)\\
&=\sqrt{\rho}\tilde{\rho}\sqrt{\rho}\otimes |\xi_0\rangle\langle \xi_0|\\
&=\rho_{u_q}\otimes\sigma,    
\endaligned
\end{eqnarray}
here $\tilde{\rho}=(I\otimes \langle \xi_0|)\ket{u_q}\bra{u_q}(I\otimes |\xi_0\rangle)$ and $\rho_{u_q}=\sqrt{\rho}\tilde{\rho}\sqrt{\rho}$ . 
We note that $\sum_q\rho_{u_q}=\rho$, and each $\rho_{u_q}$ is unnormalized with rank equal to 1. 
With $\mathbb{S}_{jk}=\sum_{m}f_j(m)M_m f_k(m)$, $\mathbb{R}=\begin{pmatrix}R_1 & R_2 & \cdots & R_n\end{pmatrix}^{\dagger}$ and $\mathbb{X}=\begin{pmatrix}X_1 & X_2 & \cdots & X_n\end{pmatrix}^{\dagger}$, we have $Q_{u_q}=\Tr_S\left[(I_n\otimes\rho_{u_q})(\mathbb{S}-\mathbb{R}\mathbb{X}^{\dagger}-\mathbb{X}\mathbb{R}^{\dagger}+\mathbb{X}\mathbb{X}^{\dagger})\right]$, $D_{u_q}=\Tr_S\left[(I_n\otimes\rho_{u_q})(\mathbb{R}\mathbb{X}^{\dagger}-\mathbb{X}\mathbb{X}^{\dagger})\right]$ and $J_{u_q}=\Tr_S\left[(I_n\otimes\rho_{u_q})\mathbb{X}\mathbb{X}^{\dagger}\right]$.
With Schur complement, $\begin{pmatrix}
Q_{u_q} & D_{u_q} \\
D_{u_q}^\dagger & J_{u_q}
\end{pmatrix}\geq 0$ is equivalent to $Q_{u_q}\geq D_{u_q}J_{u_q}^{-1}D_{u_q}^\dagger$, which further gives
\begin{equation}
    \begin{aligned}
        &Q_{u_q}+(D_{u_q}+J_{u_q})+(D_{u_q}+J_{u_q})^{\dagger}-J_{u_q} \geq (D_{u_q}+J_{u_q})J_{u_q}^{-1}(D_{u_q}+J_{u_q})^\dagger\\
        \Leftrightarrow\quad &\Tr_S\left[(I_n\otimes\rho_{u_q})\mathbb{S}\right] \geq \Tr_S\left[(I_n\otimes\rho_{u_q})\mathbb{R}\mathbb{X}^{\dagger}\right] \Tr_S\left[(I_n\otimes\rho_{u_q})\mathbb{X}\mathbb{X}^{\dagger}\right]^{-1} \Tr_S\left[(I_n\otimes\rho_{u_q})\mathbb{R}\mathbb{X}^{\dagger}\right]^{\dagger}\\
        \Leftrightarrow\quad &\begin{pmatrix}
            \Tr_S\left[(I_n\otimes\rho_{u_q})\mathbb{S}\right] & \Tr_S\left[(I_n\otimes\rho_{u_q})\mathbb{R}\mathbb{X}^{\dagger}\right] \\
            \Tr_S\left[(I_n\otimes\rho_{u_q})\mathbb{R}\mathbb{X}^{\dagger}\right]^\dagger & \Tr_S\left[(I_n\otimes\rho_{u_q})\mathbb{X}\mathbb{X}^{\dagger}\right]
        \end{pmatrix}\geq 0\\
        \Leftrightarrow\quad &\Tr_{S}\left[(I_2\otimes I_n\otimes\rho_{u_q})\begin{pmatrix}
            \mathbb{S} & \mathbb{R}\mathbb{X}^{\dagger} \\
            \mathbb{X}\mathbb{R}^\dagger & \mathbb{X}\mathbb{X}^{\dagger}
        \end{pmatrix}\right]\geq 0.
    \end{aligned}
\end{equation}
Also note that the constraint of $Q_{u_q}+D_{u_q}+D_{u_q}^{\dagger}+J_{u_q}$ being real symmetric is equivalent to $\Tr_S\left[(I_n\otimes\rho_{u_q})\mathbb{S}\right]$ being real symmetric.
$\mathcal{E}_u$ can then be formulated as a semidefinite programming for a fixed $\{\ket{u_q}\}$ as
\begin{equation}
\begin{aligned}
    \mathcal{E}_u=\min_{\mathbb{S},\{R_j\}_{j=1}^n} &\Tr\left[(W\otimes\rho)(\mathbb{S}-\mathbb{R}\mathbb{X}^{\dagger}-\mathbb{X}\mathbb{R}^{\dagger}+\mathbb{X}\mathbb{X}^{\dagger})\right]\\
    \text{subject to}\quad &\Tr_{S}\left[(I_2\otimes I_n\otimes\rho_{u_q})\begin{pmatrix}
        \mathbb{S} & \mathbb{R}\mathbb{X}^{\dagger} \\
        \mathbb{X}\mathbb{R}^\dagger & \mathbb{X}\mathbb{X}^{\dagger}
    \end{pmatrix}\right]\geq 0,\forall q\\
    &\Tr_S\left[(I_n\otimes\rho_{u_q})\mathbb{S}\right]\ \text{real symmetric}, \forall q\\
    &R_j=R_j^{\dagger}.\\
\end{aligned}
\end{equation}

Next we show that $\mathcal{E}_0$ is tighter than $\mathcal{E}_u$ for any $\{\ket{u_q}\}$.
Since $\begin{pmatrix}
\mathbb{R}\mathbb{R}^{\dagger} & \mathbb{R}\mathbb{X}^{\dagger} \\
\mathbb{X}\mathbb{R}^\dagger & \mathbb{X}\mathbb{X}^{\dagger}
\end{pmatrix}=\begin{pmatrix}
\mathbb{R}\\
\mathbb{X}
\end{pmatrix}\begin{pmatrix}
\mathbb{R}\\
\mathbb{X}
\end{pmatrix}^{\dagger}\geq 0$, combining with the constraint $\mathbb{S}\geq \mathbb{R}\mathbb{R}^{\dagger}$ in $\mathcal{E}_0$, we can obtain $\begin{pmatrix}
\mathbb{S} & \mathbb{R}\mathbb{X}^{\dagger} \\
\mathbb{X}\mathbb{R}^\dagger & \mathbb{X}\mathbb{X}^{\dagger}
\end{pmatrix}\geq 0$, which is stronger than the first constraint in $\mathcal{E}_u$.
From the constraint $\mathbb{S}_{jk}=\mathbb{S}_{kj}=\mathbb{S}_{jk}^{\dagger}$, $\forall j,k$ in $\mathcal{E}_0$, we can also obtain that $\Tr_S\left[(I_n\otimes\rho_{u_q})\mathbb{S}\right]$ is real symmetric $\forall q$.
The semidefinite programming $\mathcal{E}_0$ thus has the same objective function but stronger constraints than those in $\mathcal{E}_u$, which indicates that $\mathcal{E}_0\geq \mathcal{E}_u$.

\section{Tight error-tradeoff relation and optimal measurement for pure states}\label{apdx:tightpure}

In this section, we show that when $\rho$ is a pure state, $\mathcal{E}\geq \mathcal{E}_0$ is tight. Additionally we provide an explicit construction of the optimal measurement that saturates the bound.

Given any pure state $\rho=\ket{\psi}\bra{\psi}$, we have
\begin{equation}
    \begin{aligned}
        &\Tr\left[(W\otimes\rho)(\mathbb{S}-\mathbb{R}\mathbb{X}^{\dagger}-\mathbb{X}\mathbb{R}^{\dagger}+\mathbb{X}\mathbb{X}^{\dagger})\right]
        =\Tr\left[W\left(S-R^{\dagger}X-X^{\dagger}R+X^{\dagger}X\right)\right],
    \end{aligned}
\end{equation}
where $S$ is an $n\times n$ matrix with its $jk$th element given as $S_{jk}=\bra{\psi}\mathbb{S}_{jk}\ket{\psi}$, $R=\begin{pmatrix}\ket{r_1} & \ket{r_2} & \cdots & \ket{r_n}\end{pmatrix}$ and $X=\begin{pmatrix}\ket{x_1} & \ket{x_2} & \cdots & \ket{x_n}\end{pmatrix}$ are $d_S\times n$ matrices with $\ket{r_j}=R_j\ket{\psi}$, $\ket{x_j}=X_j\ket{\psi}$ for $1\le j\le n$.
$\mathbb{S}_{jk}=\mathbb{S}_{kj}=\mathbb{S}_{jk}^{\dagger}$ indicates that $S$ is real symmetric.
From $\mathbb{S}\geq \mathbb{R}\mathbb{R}^{\dagger}$, we have $S\geq R^{\dagger}R=(R^{\dagger}R)_{\mathrm{Re}}+i(R^{\dagger}R)_{\mathrm{Im}}$.
By taking transpose on both sides, we also have $S^{T}=S\geq (R^{\dagger}R)_{\mathrm{Re}}-i(R^{\dagger}R)_{\mathrm{Im}}$, thus $W^{\frac{1}{2}}(S-(R^{\dagger}R)_{\mathrm{Re}})W^{\frac{1}{2}}\geq\pm iW^{\frac{1}{2}}(R^{\dagger}R)_{\mathrm{Im}}W^{\frac{1}{2}}$ and we get
\begin{equation}
    \Tr(WS)\geq \Tr(W(R^{\dagger}R)_{\mathrm{Re}})+\Tr\left|W^{\frac{1}{2}}(R^{\dagger}R)_{\mathrm{Im}}W^{\frac{1}{2}}\right|,
\end{equation}
where the inequality can be saturated if $S=(R^{\dagger}R)_{\mathrm{Re}}+W^{-\frac{1}{2}}\left|W^{\frac{1}{2}}(R^{\dagger}R)_{\mathrm{Im}}W^{\frac{1}{2}}\right|W^{-\frac{1}{2}}$.
Thus we can further simplify $\mathcal{E}_0$ for pure states as
\begin{equation}
    \begin{aligned}
        \mathcal{E}_0&=\min_{S,R}\left\{\Tr\left[W\left(S-R^{\dagger}X-X^{\dagger}R+X^{\dagger}X\right)\right]|S\ \text{real symmetric}, S\geq R^{\dagger}R\right\}\\
        &=\min_{R}\left\{\Tr\left[W\left((R^{\dagger}R)_{\mathrm{Re}}-R^{\dagger}X-X^{\dagger}R+X^{\dagger}X\right)\right]+\Tr\left|W^{\frac{1}{2}}(R^{\dagger}R)_{\mathrm{Im}}W^{\frac{1}{2}}\right|\right\}.
    \end{aligned}
\end{equation}
The computation of $\mathcal{E}_0$ can be largely simplified by the following lemma.
\begin{lemma}
    $\mathcal{E}_0=\mathcal{E}_0'$, where
    \begin{equation}
        \mathcal{E}_0'=\min_{R}\left\{\Tr\left[W\left((R^{\dagger}R)_{\mathrm{Re}}-R^{\dagger}X-X^{\dagger}R+X^{\dagger}X\right)\right]|(R^{\dagger}R)_{\mathrm{Im}}=0\right\}.
    \end{equation}
\end{lemma}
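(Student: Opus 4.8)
The plan is to prove the two inequalities $\mathcal{E}_0\le\mathcal{E}_0'$ and $\mathcal{E}_0\ge\mathcal{E}_0'$ separately. The first is immediate: whenever $R$ satisfies $(R^\dagger R)_{\mathrm{Im}}=0$, the term $\Tr|W^{\frac12}(R^\dagger R)_{\mathrm{Im}}W^{\frac12}|$ appearing in the simplified expression for $\mathcal{E}_0$ vanishes, so the two objective functions coincide on the feasible set of $\mathcal{E}_0'$, which is contained in the domain of the $\mathcal{E}_0$ minimization; hence $\mathcal{E}_0\le\mathcal{E}_0'$. All the content is in the reverse inequality, which I would obtain by showing that every admissible $R$ for $\mathcal{E}_0$ can be ``dilated'' to an admissible $R$ for $\mathcal{E}_0'$ whose objective value equals the $\mathcal{E}_0$-objective at the original $R$.

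For the construction, I would fix any $R=(\ket{r_1}\ \cdots\ \ket{r_n})$ and recall from just above the lemma the particular real symmetric matrix $S=(R^\dagger R)_{\mathrm{Re}}+W^{-\frac12}\big|W^{\frac12}(R^\dagger R)_{\mathrm{Im}}W^{\frac12}\big|W^{-\frac12}$, which the established facts show satisfies $S\ge R^\dagger R\ge 0$ and $\Tr(WS)=\Tr\!\big(W(R^\dagger R)_{\mathrm{Re}}\big)+\Tr\big|W^{\frac12}(R^\dagger R)_{\mathrm{Im}}W^{\frac12}\big|$. Since $S-R^\dagger R\ge 0$ I can factor $S-R^\dagger R=S'^\dagger S'$ with $S'=(S-R^\dagger R)^{1/2}$, regarded as a matrix whose $n$ columns $\ket{s'_j}$ live in an auxiliary space $\mathcal{H}'$. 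I then append these as extra components, $\ket{\tilde r_j}=\ket{r_j}\oplus\ket{s'_j}\in\mathcal{H}_S\oplus\mathcal{H}'$, and extend $\ket{\psi}$ and each $X_j$ trivially (by $0$) on $\mathcal{H}'$, keeping the $X_j$ Hermitian and $\ket{\psi}$ the same vector. Then $\tilde R^\dagger\tilde R=R^\dagger R+S'^\dagger S'=S$ is real symmetric, so $(\tilde R^\dagger\tilde R)_{\mathrm{Im}}=0$ and $\tilde R$ is feasible for $\mathcal{E}_0'$; moreover $\tilde R^\dagger X=R^\dagger X$, $X^\dagger\tilde R=X^\dagger R$, and $X^\dagger X$ is unchanged, because $X$ has no component along $\mathcal{H}'$. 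Consequently the $\mathcal{E}_0'$-objective at $\tilde R$ equals $\Tr\!\big[W\big(S-R^\dagger X-X^\dagger R+X^\dagger X\big)\big]$, which by the displayed identity for $\Tr(WS)$ is precisely $\Tr\!\big[W\big((R^\dagger R)_{\mathrm{Re}}-R^\dagger X-X^\dagger R+X^\dagger X\big)\big]+\Tr\big|W^{\frac12}(R^\dagger R)_{\mathrm{Im}}W^{\frac12}\big|$, i.e. the $\mathcal{E}_0$-objective at $R$.

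Taking the infimum over $R$ then gives $\mathcal{E}_0'\le\mathcal{E}_0$, and combined with the first step this yields $\mathcal{E}_0=\mathcal{E}_0'$. The one point requiring care is that $\tilde R$ lives in an enlarged Hilbert space: this is harmless because $\mathcal{E}_0$ and $\mathcal{E}_0'$ depend only on the Gram-type data $R^\dagger R$, $R^\dagger X$, $X^\dagger X$ (equivalently, they are infima over all dilations/POVMs), so enlarging $\mathcal{H}_S$ — or, if one prefers a fixed $\mathcal{H}_S$ of sufficiently large dimension, carving out an unused orthogonal block to host the $\ket{s'_j}$ — changes neither value, and if $R$ is required to arise from Hermitian $R_j$ one only needs $\langle\psi|\tilde r_j\rangle=\langle\psi|r_j\rangle$ real, which holds by construction. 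I expect this dimensional bookkeeping, rather than any genuine analytic difficulty, to be the main thing to get right; the algebraic core is simply the observation that the ``cost'' $\Tr|W^{\frac12}(R^\dagger R)_{\mathrm{Im}}W^{\frac12}|$ of the imaginary part in $\mathcal{E}_0$ is exactly the $W$-weighted real trace one must pay to cancel that imaginary part by padding with $S'$.
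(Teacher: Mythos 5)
Your core mechanism is the same one the paper uses: the penalty $\Tr\bigl|W^{\frac12}(R^{\dagger}R)_{\mathrm{Im}}W^{\frac12}\bigr|$ in the penalized form of $\mathcal{E}_0$ is exactly the minimal $W$-weighted cost of enlarging the Gram matrix $R^{\dagger}R$ to a real symmetric $S\geq R^{\dagger}R$, and a penalized minimizer is converted into a feasible point of $\mathcal{E}_0'$ by appending components realizing $S-R^{\dagger}R$. The paper executes this by splitting $\ket{r_j}=\ket{a_j}+\ket{b_j}$ with $\ket{a_j}\in\mathcal{X}=\mathrm{span}\{\ket{\psi},\ket{x_1},\dots,\ket{x_n}\}$ and $\ket{b_j}\in\mathcal{X}^{\bot}\subseteq\mathcal{H}_S$, using a Lagrange-multiplier analysis to show the optimal orthogonal part satisfies $\Tr[W(B^{\dagger}B)_{\mathrm{Re}}]=\Tr\bigl|W^{\frac12}(A^{\dagger}A)_{\mathrm{Im}}W^{\frac12}\bigr|$; you instead realize the padding explicitly as the columns of $(S-R^{\dagger}R)^{1/2}$ placed in an external direct summand. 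Your easy direction $\mathcal{E}_0\leq\mathcal{E}_0'$ is the same as the paper's, and your explicit construction is arguably cleaner than the paper's stationarity computation.

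The substantive gap is the point you set aside as bookkeeping. In the lemma, $R$ is the $d_S\times n$ matrix with columns $\ket{r_j}=R_j\ket{\psi}\in\mathcal{H}_S$, and the optimal-measurement construction that follows (the paper's Lemma 2) works with these vectors inside $\mathcal{H}_S$; so $\mathcal{E}_0'$ is a minimum over configurations in the fixed space. Your claim that enlarging $\mathcal{H}_S$ ``changes neither value'' because the problems ``depend only on Gram-type data'' is not an argument: which Gram data are realizable does depend on the ambient dimension, and it is precisely the constrained problem that is sensitive to it. For the penalized $\mathcal{E}_0$ the invariance can be proved (discard the external components; by $\Tr[WG'_{\mathrm{Re}}]\geq\Tr\bigl|W^{\frac12}G'_{\mathrm{Im}}W^{\frac12}\bigr|$ for the discarded Gram $G'$ and the triangle inequality for the trace norm, the penalized objective does not increase), but that compression destroys the constraint $(R^{\dagger}R)_{\mathrm{Im}}=0$, so for $\mathcal{E}_0'$ the asserted invariance is essentially the hard direction of the lemma itself whenever $d_S$ is small: e.g.\ for a qubit and $\{\tfrac12\sigma_x,\tfrac12\sigma_y,\tfrac12\sigma_z\}$ (the paper's experimental case) $\mathcal{X}$ is all of $\mathcal{H}_S$, there is no unused orthogonal block, and any family $\{\ket{r_j}\}$ with real Gram must lie in a totally real subspace of real dimension at most two. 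The paper keeps the padding inside $\mathcal{X}^{\bot}\subseteq\mathcal{H}_S$, so its constructed point is literally feasible for $\mathcal{E}_0'$ (at the price of implicitly assuming $\mathcal{X}^{\bot}$ is large enough to host it). To complete your argument you should either place the columns of $(S-R^{\dagger}R)^{1/2}$ into $\mathcal{X}^{\bot}$ when its dimension suffices, as the paper does, or actually prove that $\mathcal{E}_0'$ is unchanged under enlarging the space; as written, that step is asserted rather than established.
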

\begin{proof}
    Denote $\mathcal{X}$ as the linear span of $\{\ket{\psi},\ket{x_1},...,\ket{x_n}\}$ and $\mathcal{X}^{\bot}$ as its orthogonal complement.
    Each $\ket{r_j}$ can then be decomposed as $\ket{r_j}=\ket{a_j}+\ket{b_j}$ with $\ket{a_j}\in\mathcal{X}$ and $\ket{b_j}\in\mathcal{X}^{\bot}$.
    Define $A=\begin{pmatrix}\ket{a_1} & \ket{a_2} & \cdots & \ket{a_n}\end{pmatrix}$ and $B=\begin{pmatrix}\ket{b_1} & \ket{b_2} & \cdots & \ket{b_n}\end{pmatrix}$, we then have $R=A+B$ and $R^{\dagger}R=A^{\dagger}A+B^{\dagger}B$.
    Next we consider the minimization in $\mathcal{E}_0'$ over $B$ with any fixed $A$.
    The Lagrangian of the minimization can be written as
    \begin{equation}
        \begin{aligned}
            \mathcal{L}(B,\Lambda)&=\Tr\left[W\left((R^{\dagger}R)_{\mathrm{Re}}-R^{\dagger}X-X^{\dagger}R+X^{\dagger}X\right)\right]+\Tr\left[(R^{\dagger}R)_{\mathrm{Im}}\Lambda\right]\\
            &=\Tr\left[W\left((A^{\dagger}A)_{\mathrm{Re}}+(B^{\dagger}B)_{\mathrm{Re}}-A^{\dagger}X-X^{\dagger}A+X^{\dagger}X\right)\right]+\Tr\left[(A^{\dagger}A)_{\mathrm{Im}}\Lambda\right]+\Tr\left[(B^{\dagger}B)_{\mathrm{Im}}\Lambda\right].
        \end{aligned}
    \end{equation}
    Here since $(R^{\dagger}R)_{\mathrm{Im}}$ is antisymmetric, we have $\Tr\left[(R^{\dagger}R)_{\mathrm{Im}}\Lambda\right]=\Tr\left[(R^{\dagger}R)_{\mathrm{Im}}^{T}\Lambda^{T}\right]=-\Tr\left[(R^{\dagger}R)_{\mathrm{Im}}\Lambda^{T}\right]$, which further gives $\Tr\left[(R^{\dagger}R)_{\mathrm{Im}}\Lambda\right]=\Tr\left[(R^{\dagger}R)_{\mathrm{Im}}(\Lambda-\Lambda^{T})/2\right]$, where $(\Lambda-\Lambda^{T})/2$ is the antisymmetric part of $\Lambda$.
    Thus without loss of generality we assume $\Lambda$ is antisymmetric.
    
    The matrix derivative of $\mathcal{L}(B)$ with respect to $B$ can be calculated as
    \begin{equation}
        \begin{aligned}
            \partial_{B}\mathcal{L}(B,\Lambda)&=\lim_{\varepsilon\rightarrow 0}\frac{\mathcal{L}(B+\varepsilon\cdot\delta B,\Lambda)-\mathcal{L}(B,\Lambda)}{\varepsilon}\\
            &=\Tr\left[W(B^{\dagger}\delta B+\delta B^{\dagger} B)_{\mathrm{Re}}+(B^{\dagger}\delta B+\delta B^{\dagger} B)_{\mathrm{Im}}\Lambda\right]\\
            &=\Tr\left[W(B^{\dagger}\delta B+\delta B^{\dagger} B)_{\mathrm{Re}}-(iB^{\dagger}\delta B+i\delta B^{\dagger} B)_{\mathrm{Re}}\Lambda\right]\\
            &=2\mathrm{Re}\Tr\left[\delta B^{\dagger}B(W-i\Lambda)\right].
        \end{aligned}
    \end{equation}
    Then we let $\partial_{B}\mathcal{L}(B,\Lambda)=0$ for arbitrary $\delta B$ and get $B(W-i\Lambda)=0$, which further gives
    \begin{equation}
        W^{\frac{1}{2}}B^{\dagger}BW^{\frac{1}{2}}=iW^{\frac{1}{2}}B^{\dagger}BW^{\frac{1}{2}}W^{-\frac{1}{2}}\Lambda W^{-\frac{1}{2}}.
    \end{equation}
    By taking the real part and imaginary part on both sides, we have
    \begin{equation}\label{apdx:solveB}
        \begin{aligned}
            W^{\frac{1}{2}}(B^{\dagger}B)_{\mathrm{Re}}W^{\frac{1}{2}}&=-W^{\frac{1}{2}}(B^{\dagger}B)_{\mathrm{Im}}W^{\frac{1}{2}}W^{-\frac{1}{2}}\Lambda W^{-\frac{1}{2}},\\
            W^{\frac{1}{2}}(B^{\dagger}B)_{\mathrm{Im}}W^{\frac{1}{2}}&=W^{\frac{1}{2}}(B^{\dagger}B)_{\mathrm{Re}}W^{\frac{1}{2}}W^{-\frac{1}{2}}\Lambda W^{-\frac{1}{2}}.
        \end{aligned}
    \end{equation}
    Note that $W^{\frac{1}{2}}(B^{\dagger}B)_{\mathrm{Re}}W^{\frac{1}{2}}$ is real symmetric, from the first equation we can get $\left[W^{\frac{1}{2}}(B^{\dagger}B)_{\mathrm{Im}}W^{\frac{1}{2}},W^{-\frac{1}{2}}\Lambda W^{-\frac{1}{2}}\right]=0$. $W^{\frac{1}{2}}(B^{\dagger}B)_{\mathrm{Re}}W^{\frac{1}{2}}$, $W^{\frac{1}{2}}(B^{\dagger}B)_{\mathrm{Im}}W^{\frac{1}{2}}$ and $W^{-\frac{1}{2}}\Lambda W^{-\frac{1}{2}}$ can thus be simultaneously diagonalized.
    Let $W^{\frac{1}{2}}(B^{\dagger}B)_{\mathrm{Re}}W^{\frac{1}{2}}=UD_{\mathrm{Re}}U^{\dagger}$, $W^{\frac{1}{2}}(B^{\dagger}B)_{\mathrm{Im}}W^{\frac{1}{2}}=UD_{\mathrm{Im}}U^{\dagger}$ and $W^{-\frac{1}{2}}\Lambda W^{-\frac{1}{2}}=UD_{\Lambda}U^{\dagger}$, where $D_{\mathrm{Re}}$, $D_{\mathrm{Im}}$ and $D_{\Lambda}$ are diagonal matrices and $U$ is a unitary matrix.
    From Eq.(\ref{apdx:solveB}), we then have $D_{\mathrm{Re}}=-D_{\mathrm{Im}}D_{\Lambda}$ and $D_{\mathrm{Im}}=D_{\mathrm{Re}}D_{\Lambda}$.
    Taking the absolute values on both sides, we further have $|D_{\mathrm{Re}}|=|D_{\mathrm{Im}}||D_{\Lambda}|$ and $|D_{\mathrm{Im}}|=|D_{\mathrm{Re}}||D_{\Lambda}|$.
    This gives
    \begin{equation}
        (|D_{\mathrm{Re}}|-|D_{\mathrm{Im}}|)(I+|D_{\Lambda}|)=0,
    \end{equation}
    which indicates that $|D_{\mathrm{Re}}|=|D_{\mathrm{Im}}|$. 
    Thus we have
    \begin{equation}
        \Tr\left[W(B^{\dagger}B)_{\mathrm{Re}}\right]=\Tr(D_{\mathrm{Re}})=\Tr|D_{\mathrm{Re}}|=\Tr|D_{\mathrm{Im}}|=\Tr\left|W^{\frac{1}{2}}(B^{\dagger}B)_{\mathrm{Im}}W^{\frac{1}{2}}\right|.
    \end{equation}
    Since $(R^{\dagger}R)_{\mathrm{Im}}=0$, we also have $(B^{\dagger}B)_{\mathrm{Im}}=-(A^{\dagger}A)_{\mathrm{Im}}$, thus $\Tr\left[W(B^{\dagger}B)_{\mathrm{Re}}\right]=\Tr\left|W^{\frac{1}{2}}(A^{\dagger}A)_{\mathrm{Im}}W^{\frac{1}{2}}\right|$ and
    \begin{equation}
        \begin{aligned}
            \mathcal{E}_0'&=\min_{R}\left\{\Tr\left[W\left((R^{\dagger}R)_{\mathrm{Re}}-R^{\dagger}X-X^{\dagger}R+X^{\dagger}X\right)\right]|(R^{\dagger}R)_{\mathrm{Im}}=0\right\}\\
            &=\min_{A:\ket{a_j}\in\mathcal{X},B:\ket{b_j}\in\mathcal{X}^{\bot}}\left\{\Tr\left[W\left((A^{\dagger}A)_{\mathrm{Re}}+(B^{\dagger}B)_{\mathrm{Re}}-A^{\dagger}X-X^{\dagger}A+X^{\dagger}X\right)\right]|(A^{\dagger}A)_{\mathrm{Im}}+(B^{\dagger}B)_{\mathrm{Im}}=0\right\}\\
            &=\min_{A:\ket{a_j}\in\mathcal{X}}\left\{\Tr\left[W\left((A^{\dagger}A)_{\mathrm{Re}}-A^{\dagger}X-X^{\dagger}A+X^{\dagger}X\right)\right]+\Tr\left|W^{\frac{1}{2}}(A^{\dagger}A)_{\mathrm{Im}}W^{\frac{1}{2}}\right|\right\}\geq\mathcal{E}_0.
        \end{aligned}
    \end{equation}
    Here the last inequality holds as there exist an additional constraint in $\mathcal{E}_0'$.
    On the other hand, for $\mathcal{E}_0$, we have
    \begin{equation}
        \begin{aligned}
            \mathcal{E}_0=&\min_{S,R}\left\{\Tr\left[W\left(S-R^{\dagger}X-X^{\dagger}R+X^{\dagger}X\right)\right]|S\ \text{real symmetric}, S\geq R^{\dagger}R\right\}\\
            =&\min_{S,A:\ket{a_j}\in\mathcal{X},B:\ket{b_j}\in\mathcal{X}^{\bot}}\left\{\Tr\left[W\left(S-A^{\dagger}X-X^{\dagger}A+X^{\dagger}X\right)\right]|S\ \text{real symmetric}, S\geq A^{\dagger}A+B^{\dagger}B\right\}\\
            \geq&\min_{S,A:\ket{a_j}\in\mathcal{X}}\left\{\Tr\left[W\left(S-A^{\dagger}X-X^{\dagger}A+X^{\dagger}X\right)\right]|S\ \text{real symmetric}, S\geq A^{\dagger}A\right\}\\
            =&\min_{A:\ket{a_j}\in\mathcal{X}}\left\{\Tr\left[W\left((A^{\dagger}A)_{\mathrm{Re}}-A^{\dagger}X-X^{\dagger}A+X^{\dagger}X\right)\right]+\Tr\left|W^{\frac{1}{2}}(A^{\dagger}A)_{\mathrm{Im}}W^{\frac{1}{2}}\right|\right\}=\mathcal{E}_0',
        \end{aligned}
    \end{equation}
    where the inequality holds since $S\geq A^{\dagger}A+B^{\dagger}B$ indicates that $S\geq A^{\dagger}A$.
    Combining with $\mathcal{E}_0'\geq\mathcal{E}_0$, we have $\mathcal{E}_0'=\mathcal{E}_0$.
\end{proof}

The tightness of $\mathcal{E}_0$ can then be proved by the following lemma.

\begin{lemma}\label{apdx:lemma.tightness}
    $\mathcal{E}_0$ is tight, i.e., if $(R^{\dagger}R)_{\mathrm{Im}}=0$, there always exist a POVM $\{M_m\}$ and a set of $\{f_j(m)\}$, such that $\sum_mf_j(m)M_m\ket{\psi}=\ket{r_j}$ for $R=\begin{pmatrix}\ket{r_1} & \ket{r_2} & \cdots & \ket{r_n}\end{pmatrix}$, and the weighted root-mean-square error 
    \begin{equation}     \mathcal{E}=\Tr\left[W\left((R^{\dagger}R)_{\mathrm{Re}}-R^{\dagger}X-X^{\dagger}R+X^{\dagger}X\right)\right].
    \end{equation} 
\end{lemma}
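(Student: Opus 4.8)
The plan is to reduce the statement to the construction of one explicit POVM. Once a POVM $\{M_m\}$ and real coefficients $\{f_j(m)\}$ are fixed, with $R_j=\sum_m f_j(m)M_m$ and $\mathbb{S}_{jk}=\sum_m f_j(m)M_mf_k(m)$, the weighted error on $\rho=\ket{\psi}\bra{\psi}$ is $\mathcal{E}=\Tr(WQ)$ with $Q_{jk}=S_{jk}-(R^{\dagger}X)_{jk}-(X^{\dagger}R)_{jk}+(X^{\dagger}X)_{jk}$ and $S_{jk}=\bra{\psi}\mathbb{S}_{jk}\ket{\psi}$. Hence it suffices to produce $\{M_m\},\{f_j(m)\}$ with
\begin{equation}
\sum_m f_j(m)M_m\ket{\psi}=\ket{r_j}\qquad\text{and}\qquad S_{jk}=\inp{r_j}{r_k}\quad\forall\,j,k .
\end{equation}
The second identity amounts to saturating, on $\ket{\psi}$, the inequality $S\ge R^{\dagger}R$ that descends from $\mathbb{S}\ge\mathbb{R}\mathbb{R}^{\dagger}$ (Sec.~\ref{apdx:errorasSDP}); using the hypothesis $(R^{\dagger}R)_{\mathrm{Im}}=0$ it also makes $S$ real symmetric, as needed for $\mathbb{S}_{jk}=\mathbb{S}_{kj}=\mathbb{S}_{jk}^{\dagger}$. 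Substituting $S_{jk}=(R^{\dagger}R)_{\mathrm{Re},jk}$ into $Q$ then gives exactly $\mathcal{E}=\Tr[W((R^{\dagger}R)_{\mathrm{Re}}-R^{\dagger}X-X^{\dagger}R+X^{\dagger}X)]$.

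The crucial step is a change of basis that exploits the hypothesis. Since each $\ket{r_j}=R_j\ket{\psi}$ with $R_j$ Hermitian, $\inp{\psi}{r_j}=\bra{\psi}R_j\ket{\psi}\in\mathbb{R}$; together with $\inp{r_j}{r_k}\in\mathbb{R}$ this makes the Gram matrix of $\{\ket{\psi},\ket{r_1},\dots,\ket{r_n}\}$ real symmetric and positive semidefinite. I would then invoke the elementary fact that any family of vectors with a real symmetric positive semidefinite Gram matrix is unitarily equivalent to a family of vectors with real coordinates in a fixed orthonormal basis, and that the lemma is covariant under conjugating $\{M_m\}$ and $\{X_j\}$ by a unitary on $\mathcal{H}_S$. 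Thus I may assume $\ket{\psi}$ and all $\ket{r_j}$ are real vectors, spanning a real subspace $\mathcal{V}$.

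The construction is then immediate. Choose a real orthonormal basis $\{\ket{e_m}\}$ of $\mathcal{V}$ with $\inp{e_m}{\psi}\neq0$ for every $m$ — a generic real rotation of any basis of $\mathcal{V}$ has this property — and take the projective measurement $M_m=\ket{e_m}\bra{e_m}$, supplemented by one extra outcome $M_0=I-P_{\mathcal{V}}$. Set $f_j(m)=\inp{e_m}{r_j}/\inp{e_m}{\psi}$, which is real by the choice of basis, and $f_j(0)=0$. Then $\sum_m f_j(m)M_m\ket{\psi}=\sum_m\inp{e_m}{r_j}\ket{e_m}=P_{\mathcal{V}}\ket{r_j}=\ket{r_j}$, and $S_{jk}=\sum_m f_j(m)f_k(m)\,|\inp{e_m}{\psi}|^2=\sum_m\inp{e_m}{r_j}\inp{e_m}{r_k}=\bra{r_j}P_{\mathcal{V}}\ket{r_k}=\inp{r_j}{r_k}$, both using that all these inner products are real. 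This delivers both required identities and hence the claimed value of $\mathcal{E}$.

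I expect the only real obstacle to be recognizing \emph{why} the hypothesis is exactly what is needed: generically $S\ge R^{\dagger}R$ is strict, and forcing equality on $\ket{\psi}$ drives one to a rank-one measurement whose outcome vectors and coefficient functions $f_j(m)$ must all be real, which is possible precisely because $(R^{\dagger}R)_{\mathrm{Im}}=0$ permits a simultaneous real representation of $\ket{\psi}$ and the $\ket{r_j}$. Everything after that — reality of $f_j(m)$, completeness $\sum_m M_m=I$, and the two displayed identities — is routine verification.
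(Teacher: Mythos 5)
Your construction is correct and is essentially the paper's own proof: both exploit that $\inp{\psi}{r_j}\in\mathbb{R}$ (Hermiticity of $R_j$) together with $(R^{\dagger}R)_{\mathrm{Im}}=0$ to realize $\ket{\psi},\ket{r_1},\dots,\ket{r_n}$ as real vectors in an orthonormal basis of their span (the paper does this by real-coefficient Gram--Schmidt plus a real orthogonal rotation with $P_{j1}\neq 0$, which is the same as your ``generic real rotation'' with $\inp{e_m}{\psi}\neq 0$), then take the rank-one projective measurement onto that basis plus the complement with $f_j(m)=\inp{e_m}{r_j}/\inp{e_m}{\psi}$, exactly as in Eqs.~(\ref{apdx:optm})--(\ref{apdx:optfjm}). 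The only cosmetic difference is that you phrase the realification as a WLOG unitary-covariance step and verify $S_{jk}=\inp{r_j}{r_k}$ explicitly, whereas the paper leaves the error evaluation implicit after establishing $\sum_m f_j(m)M_m\ket{\psi}=\ket{r_j}$.
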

\begin{proof}
    For arbitrary $R=\begin{pmatrix}\ket{r_1} & \ket{r_2} & \cdots & \ket{r_n}\end{pmatrix}$, by doing Schmidt's orthonormalization on $\{\ket{\psi},\ket{r_1},\ket{r_2},...,\ket{r_n}\}$, we can obtain an orthonormal basis $\{\ket{u_k}\}_{k=1}^{n_d}$ with $n_d=\min\{n+1, d\}$, and
    \begin{eqnarray}\label{eq:Schmidt}
        \aligned
        |u_1\rangle&=|\psi\rangle,\\
        |u_2\rangle&=\mathcal{N}(|r_1\rangle-\langle \psi|r_1\rangle|\psi\rangle),\\
        &\vdots
        \endaligned
    \end{eqnarray}
    where $\mathcal{N}$ is a normalization factor which is a real number. Since $R_j$ is Hermitian, we have $\langle \psi|r_j\rangle=\langle \psi|R_j|\psi\rangle\in \mathcal{R}$, and from $(R^{\dagger}R)_{\mathrm{Im}}=0$, we also have $\mathrm{Im}\inp{r_j}{r_k}=0$, $\forall j,k$. Thus the coefficients in Eq.(\ref{eq:Schmidt}) are thus all real numbers. Each $\ket{r_j}$ can then be expanded as $\ket{r_j}=\sum_{k=1}^{n_d}\lambda_{jk}\ket{u_k}$with $\lambda_{jk}\in \mathcal{R}$.
    
    We then choose an $n_d\times n_d$ real orthogonal matrix $P$, with $P_{jk}$ being its $jk$th element, to obtain a rotated basis $\ket{u_j'}=\sum_{k=1}^{n_d}P_{jk}\ket{u_k}$. By choosing a $P$ such that $P_{j1}\neq 0$, we have $\inp{\psi}{u_j'}\neq 0$, $\forall j$.
    In the new basis, $\{\ket{u_j'}\}$, $\ket{r_j}$ can be represented as
    \begin{equation}
        \ket{r_j}=\sum_{k=1}^{n_d}\lambda_{jk}\ket{u_k}=\sum_{k=1}^{n_d}\lambda_{jk}\sum_{l=1}^{n_d}P_{lk}\ket{u_l'}=\sum_{l=1}^{n_d}\left(\sum_{k=1}^{n_d}\lambda_{jk}P_{lk}\right)\ket{u_l'}.
    \end{equation}
    We can now choose a measurement 
    \begin{equation}\label{apdx:optm}
        M_m=\ket{u_m'}\bra{u_m'},\ \text{for}\ 1\le m\le n_d,\quad M_0=I-\sum_{m=1}^{n_d}\ket{u_m'}\bra{u_m'},
    \end{equation}
    and
    \begin{equation}\label{apdx:optfjm}
        f_j(m)=\frac{\sum_{k=1}^{n_d}\lambda_{jk}P_{mk}}{\inp{u_m'}{\psi}}\ \text{for}\ 1\le m\le n_d,\quad f_j(m)=0\ \text{for}\ m=0,
    \end{equation}
    for which we have $\sum_{m=0}^{n_d} f_j(m)M_m\ket{\psi}=\sum_{m=1}^{n_d} f_j(m)\inp{u_m'}{\psi}\ket{u_m'}+f_j(0)M_0\ket{\psi}=|r_j\rangle$.
    The bound given by
    \begin{equation}
        \mathcal{E}_0=\min_{R}\left\{\Tr\left[W\left((R^{\dagger}R)_{\mathrm{Re}}-R^{\dagger}X-X^{\dagger}R+X^{\dagger}X\right)\right]|(R^{\dagger}R)_{\mathrm{Im}}=0\right\}
    \end{equation}
    can thus be achieved.
    
\end{proof}

\subsection{Tightness for mixed states}

In this subsection, we provide an explicit example that for mixed states, $\mathcal{E}\geq\mathcal{E}_0$ is in general not tight.

Take the state $\rho$ and observables $X_1$, $X_2$ as
\begin{equation}
    \rho=\frac{1}{3}\begin{pmatrix}
        1-p & 0 & 0 & 0\\
        0 & 1 & 0 & 0\\
        0 & 0 & 1+p & 0\\
        0 & 0 & 0 & 0
    \end{pmatrix},
    X_1=\begin{pmatrix}
        1 & 0 & 0 & 0\\
        0 & 1 & 0 & \frac{1}{2}\\
        0 & 0 & -2 & 0\\
        0 & \frac{1}{2} & 0 & 0
    \end{pmatrix},
    X_2=\begin{pmatrix}
        0 & 1 & 0 & 0\\
        1 & 0 & 1 & 0\\
        0 & 1 & 0 & -\frac{3}{2}\\
        0 & 0 & -\frac{3}{2} & 0
    \end{pmatrix}.
\end{equation}
Besides the calculation of the SDP $\mathcal{E}_0$, we do a brute-force search over all the possible measurements for the error $\epsilon_1^2+\epsilon_2^2$ numerically.
Then a gap between the SDP bound and a brute-force search can be illustrated as Fig.\ref{fig.SDPgap}, indicating that for mixed states, $\mathcal{E}\geq\mathcal{E}_0$ is in general not tight.

\begin{figure}[htb]
	\centering
	\includegraphics[width=0.5\textwidth]{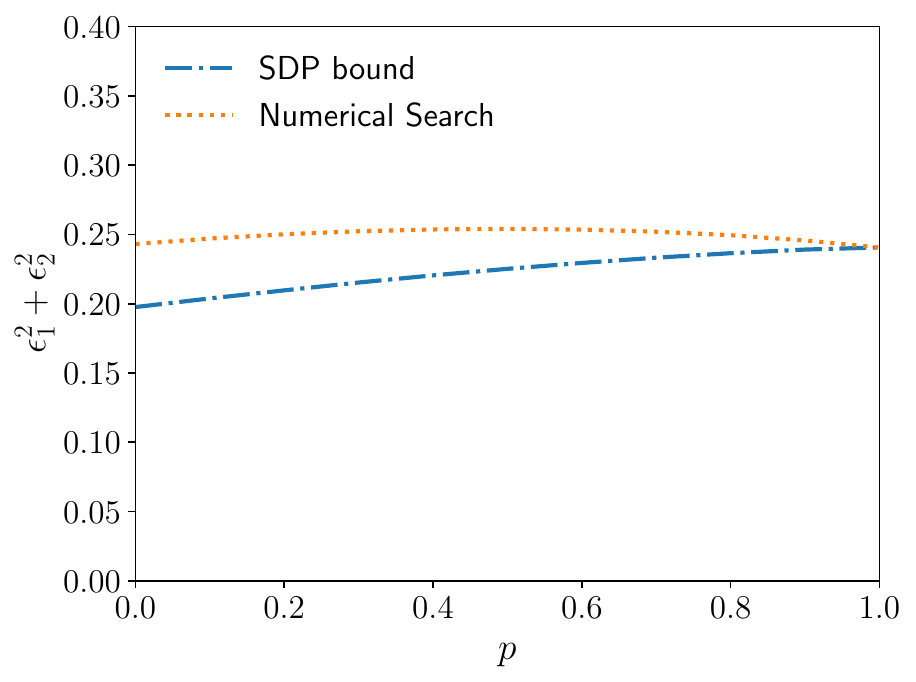}
	\caption{Comparison of the SDP bound and a brute-force search.}
    \label{fig.SDPgap}
\end{figure}

\section{Analytical error-tradeoff relation and optimal measurement for two observables on pure states}\label{apdx:tightpuretwo}

In this section, we analytically solve the optimization in $\mathcal{E}_0$ in the case of two observables with $W=diag(w_1,w_2)$ and $\rho=\ket{\psi}\bra{\psi}$ as a pure state, and provide the optimal measurement that saturates the bound. 

In the case of two observables, the condition, $(R^{\dagger}R)_{\mathrm{Im}}=0$, is equivalent to $\Tr(\rho\left[R_1,R_2\right])=0$ with $\rho=\ket{\psi}\bra{\psi}$.
The minimization in $\mathcal{E}_0$ can thus be equivalently written as
\begin{equation}
    \begin{aligned}
        \mathcal{E}_0&=\min_{R_1,R_2}\left\{\sum_{j=1}^2 w_j\Tr\left[\rho (R_jR_j-R_jX_j-X_jR_j+X_jX_j)\right]|\Tr(\rho\left[R_1,R_2\right])=0,R_1,R_2\ \text{Hermitian}\right\}.
    \end{aligned}
\end{equation}
Denote $\boldsymbol{R}=\sqrt{w_1}\tilde{R}_1+i\sqrt{w_2}\tilde{R}_2$ and $\boldsymbol{X}=\sqrt{w_1}\tilde{X}_1+i\sqrt{w_2}\tilde{X}_2$, $\mathcal{E}_0$ can then be written as
\begin{equation}
    \mathcal{E}_0=\min_{\boldsymbol{R}}\left\{\Tr(\rho \boldsymbol{R}^{\dagger}\boldsymbol{R})-\frac{1}{2}\Tr\left[\rho (\boldsymbol{R}\boldsymbol{X}^{\dagger}+\boldsymbol{X}^{\dagger}\boldsymbol{R}+\boldsymbol{R}^{\dagger}\boldsymbol{X}+\boldsymbol{X}\boldsymbol{R}^{\dagger}-\boldsymbol{X}\boldsymbol{X}^{\dagger}-\boldsymbol{X}^{\dagger}\boldsymbol{X})\right]|\Tr(\rho \boldsymbol{R}^{\dagger}\boldsymbol{R})=\Tr(\rho \boldsymbol{R}\boldsymbol{R}^{\dagger})\right\}.
\end{equation}
The minimization can be solved via the Lagrangian multiplier with 
\begin{equation}\label{apdx:Lagrangian}
    \mathcal{L}(\boldsymbol{R},\mu)=\Tr(\rho \boldsymbol{R}^{\dagger}\boldsymbol{R})-\frac{1}{2}\Tr\left[\rho (\boldsymbol{R}\boldsymbol{X}^{\dagger}+\boldsymbol{X}^{\dagger}\boldsymbol{R}+\boldsymbol{R}^{\dagger}\boldsymbol{X}+\boldsymbol{X}\boldsymbol{R}^{\dagger}-\boldsymbol{X}\boldsymbol{X}^{\dagger}-\boldsymbol{X}^{\dagger}\boldsymbol{X})\right]+\mu\left[\Tr(\rho \boldsymbol{R}\boldsymbol{R}^{\dagger})-\Tr(\rho \boldsymbol{R}^{\dagger}\boldsymbol{R})\right].
\end{equation}
The matrix derivative of $\mathcal{L}(\boldsymbol{R},\mu)$ with respect to $\boldsymbol{R}$ can be calculated as
\begin{equation}
    \begin{aligned}
        \partial_{\boldsymbol{R}}\mathcal{L}(\boldsymbol{R},\mu)&=\lim_{\varepsilon\rightarrow 0}\frac{\mathcal{L}(\boldsymbol{R}+\varepsilon\cdot\delta\boldsymbol{R},\mu)-\mathcal{L}(\boldsymbol{R},\mu)}{\varepsilon}\\
        &=\Tr\left\{\delta\boldsymbol{R}\left[\mu\boldsymbol{R}^{\dagger}\rho+(1-\mu)\rho\boldsymbol{R}^{\dagger}-\frac{1}{2}\left(\boldsymbol{X}^{\dagger}\rho+\rho\boldsymbol{X}^{\dagger}\right)\right]+\delta\boldsymbol{R}^{\dagger}\left[\mu\rho\boldsymbol{R}+(1-\mu)\boldsymbol{R}\rho-\frac{1}{2}\left(\rho\boldsymbol{X}+\boldsymbol{X}\rho\right)\right]\right\}.
    \end{aligned}
\end{equation}
Then we let $\partial_{\boldsymbol{R}}\mathcal{L}(\boldsymbol{R},\mu)=0$ for arbitrary $\delta\boldsymbol{R}$ and obtain
\begin{equation}\label{eq:suppR}
    \mu\rho\boldsymbol{R}+(1-\mu)\boldsymbol{R}\rho-\frac{1}{2}\left(\rho\boldsymbol{X}+\boldsymbol{X}\rho\right)=0.
\end{equation}
By taking trace on both sides of the equation, we have $\bra{\psi}\boldsymbol{R}\ket{\psi}=\bra{\psi}\boldsymbol{X}\ket{\psi}$.
By multiplying $\ket{\psi}$ from either left or right side of Eq.(\ref{eq:suppR}), we get
\begin{equation}\label{apdx:RRdagger}
    \begin{aligned}
        (1-\mu)\boldsymbol{R}\ket{\psi}&=\frac{1}{2}(1-2\mu)\bra{\psi}\boldsymbol{X}\ket{\psi}\ket{\psi}+\frac{1}{2}\boldsymbol{X}\ket{\psi},\\
        \mu\boldsymbol{R}^{\dagger}\ket{\psi}&=\frac{1}{2}(2\mu-1)\bra{\psi}\boldsymbol{X}^{\dagger}\ket{\psi}\ket{\psi}+\frac{1}{2}\boldsymbol{X}^{\dagger}\ket{\psi}.
    \end{aligned}
\end{equation}
Using Eq.(\ref{apdx:RRdagger}) and $\bra{\psi}\boldsymbol{R}^{\dagger}\boldsymbol{R}\ket{\psi}=\bra{\psi}\boldsymbol{R}\boldsymbol{R}^{\dagger}\ket{\psi}$, we can then solve $\mu$. 
We first consider the cases where $\mu\notin \{0,1\}$. In this case we have
\begin{equation}
    \begin{aligned}
        \bra{\psi}\boldsymbol{R}^{\dagger}\boldsymbol{R}\ket{\psi}&=\frac{1}{(1-\mu)^2}\left(\frac{1}{2}(1-2\mu)\bra{\psi}\boldsymbol{X}^{\dagger}\ket{\psi}\bra{\psi}+\frac{1}{2}\bra{\psi}\boldsymbol{X}^{\dagger}\right)\left(\frac{1}{2}(1-2\mu)\bra{\psi}\boldsymbol{X}\ket{\psi}\ket{\psi}+\frac{1}{2}\boldsymbol{X}\ket{\psi}\right)\\
        &=\frac{(1-2\mu)(3-2\mu)}{4(1-\mu)^2}|\bra{\psi}\boldsymbol{X}\ket{\psi}|^2+\frac{1}{4(1-\mu)^2}\bra{\psi}\boldsymbol{X}^{\dagger}\boldsymbol{X}\ket{\psi},\\
        \bra{\psi}\boldsymbol{R}\boldsymbol{R}^{\dagger}\ket{\psi}&=\frac{1}{\mu^2}\left(\frac{1}{2}(2\mu-1)\bra{\psi}\boldsymbol{X}\ket{\psi}\bra{\psi}+\frac{1}{2}\bra{\psi}\boldsymbol{X}\right)\left(\frac{1}{2}(2\mu-1)\bra{\psi}\boldsymbol{X}^{\dagger}\ket{\psi}\ket{\psi}+\frac{1}{2}\boldsymbol{X}^{\dagger}\ket{\psi}\right)\\
        &=\frac{(2\mu-1)(2\mu+1)}{4\mu^2}|\bra{\psi}\boldsymbol{X}\ket{\psi}|^2+\frac{1}{4\mu^2}\bra{\psi}\boldsymbol{X}\boldsymbol{X}^{\dagger}\ket{\psi}.
    \end{aligned}
\end{equation}
$\bra{\psi}\boldsymbol{R}^{\dagger}\boldsymbol{R}\ket{\psi}-\bra{\psi}\boldsymbol{R}\boldsymbol{R}^{\dagger}\ket{\psi}=0$ then leads to
\begin{equation}\label{apdx:eqmu}
    \frac{1}{4}\left(\frac{1}{(1-\mu)^2}-\frac{1}{\mu^2}\right)\alpha+\frac{1}{4}\left(\frac{1}{\mu^2}+\frac{1}{(1-\mu)^2}\right)\beta=0,
\end{equation}
where 
\begin{equation}\label{apdx:alphabeta}
    \begin{aligned}
        \alpha&=\frac{1}{2}\bra{\psi}(\boldsymbol{X}^{\dagger}\boldsymbol{X}+\boldsymbol{X}\boldsymbol{X}^{\dagger})\ket{\psi}-|\bra{\psi}\boldsymbol{X}\ket{\psi}|^2=w_1(\Delta X_1)^2+w_2(\Delta X_2)^2,\\
        \beta&=\frac{1}{2}\bra{\psi}(\boldsymbol{X}^{\dagger}\boldsymbol{X}-\boldsymbol{X}\boldsymbol{X}^{\dagger})\ket{\psi}=i\sqrt{w_1w_2}\bra{\psi}[X_1,X_2]\ket{\psi},
    \end{aligned}
\end{equation}
here $(\Delta X_j)^2=\bra{\psi}X_j^2\ket{\psi}-\bra{\psi}X_j\ket{\psi}^2$ denotes the variance of $X_j$ for $j=1,2$.
From Eq.(\ref{apdx:eqmu}) we get two solutions,
\begin{equation}
    \mu_{\pm}=\frac{-(\alpha-\beta)\pm\sqrt{\alpha^2-\beta^2}}{2\beta}.
\end{equation}
With $\mu=\mu_{\pm}$, and $R_1$, $R_2$ satisfies Eq.(\ref{apdx:RRdagger}) we have
\begin{equation}
    \begin{aligned}
        \bra{\psi}\boldsymbol{R}\boldsymbol{X}^{\dagger}\ket{\psi}&=\frac{1}{\mu}\left(\frac{1}{2}(2\mu-1)\bra{\psi}\boldsymbol{X}\ket{\psi}\bra{\psi}+\frac{1}{2}\bra{\psi}\boldsymbol{X}\right)\boldsymbol{X}^{\dagger}\ket{\psi}=\frac{2\mu-1}{2\mu}|\bra{\psi}\boldsymbol{X}\ket{\psi}|^2+\frac{1}{2\mu}\bra{\psi}\boldsymbol{X}\boldsymbol{X}^{\dagger}\ket{\psi},\\
        \bra{\psi}\boldsymbol{X}^{\dagger}\boldsymbol{R}\ket{\psi}&=\frac{1}{1-\mu}\bra{\psi}\boldsymbol{X}^{\dagger}\left(\frac{1}{2}(1-2\mu)\bra{\psi}\boldsymbol{X}\ket{\psi}\ket{\psi}+\frac{1}{2}\boldsymbol{X}\ket{\psi}\right)=\frac{1-2\mu}{2(1-\mu)}|\bra{\psi}\boldsymbol{X}\ket{\psi}|^2+\frac{1}{2(1-\mu)}\bra{\psi}\boldsymbol{X}^{\dagger}\boldsymbol{X}\ket{\psi},
    \end{aligned}
\end{equation}
and
\begin{equation}
    \begin{aligned}
        \mathcal{E}_0=&\bra{\psi}\boldsymbol{R}^{\dagger}\boldsymbol{R}\ket{\psi}-\frac{1}{2}\bra{\psi}(\boldsymbol{R}\boldsymbol{X}^{\dagger}+\boldsymbol{X}^{\dagger}\boldsymbol{R}+\boldsymbol{R}^{\dagger}\boldsymbol{X}+\boldsymbol{X}\boldsymbol{R}^{\dagger}-\boldsymbol{X}\boldsymbol{X}^{\dagger}-\boldsymbol{X}^{\dagger}\boldsymbol{X})\ket{\psi}\\
        =&\left(\frac{(1-2\mu)(3-2\mu)}{4(1-\mu)^2}|\bra{\psi}\boldsymbol{X}\ket{\psi}|^2+\frac{1}{4(1-\mu)^2}\bra{\psi}\boldsymbol{X}^{\dagger}\boldsymbol{X}\ket{\psi}\right)+\frac{1}{2}\bra{\psi}(\boldsymbol{X}\boldsymbol{X}^{\dagger}+\boldsymbol{X}^{\dagger}\boldsymbol{X})\ket{\psi}\\
        &-\left(\frac{2\mu-1}{2\mu}|\bra{\psi}\boldsymbol{X}\ket{\psi}|^2+\frac{1}{2\mu}\bra{\psi}\boldsymbol{X}\boldsymbol{X}^{\dagger}\ket{\psi}\right)-\left(\frac{1-2\mu}{2(1-\mu)}|\bra{\psi}\boldsymbol{X}\ket{\psi}|^2+\frac{1}{2(1-\mu)}\bra{\psi}\boldsymbol{X}^{\dagger}\boldsymbol{X}\ket{\psi}\right)\\
        =&\left(\frac{(1-2\mu)(3-2\mu)}{4(1-\mu)^2}-\frac{2\mu-1}{2\mu}-\frac{1-2\mu}{2(1-\mu)}\right)|\bra{\psi}\boldsymbol{X}\ket{\psi}|^2+\frac{1}{2}\bra{\psi}(\boldsymbol{X}\boldsymbol{X}^{\dagger}+\boldsymbol{X}^{\dagger}\boldsymbol{X})\ket{\psi}\\
        &+\left(\frac{1}{4(1-\mu)^2}-\frac{1}{2(1-\mu)}\right)\bra{\psi}\boldsymbol{X}^{\dagger}\boldsymbol{X}\ket{\psi}-\frac{1}{2\mu}\bra{\psi}\boldsymbol{X}\boldsymbol{X}^{\dagger}\ket{\psi}\\
        =&\left(1-\frac{2-3\mu}{4\mu(1-\mu)^2}\right)\alpha+\frac{4\mu^2-5\mu+2}{4\mu(1-\mu)^2}\beta\\
        =&\frac{1}{2}\left(\alpha\mp\sqrt{\alpha^2-\beta^2}\right).
    \end{aligned}
\end{equation}
The minimal $\mathcal{E}_0=\frac{1}{2}\left(\alpha-\sqrt{\alpha^2-\beta^2}\right)$ is achieved with $\mu=\mu_{+}$.

In the cases where $\mu\in\{0,1\}$, Eq.(\ref{apdx:RRdagger}) have solutions only if $\alpha=\beta$. For example, if $\mu=0$, Eq.(\ref{apdx:RRdagger}) becomes
\begin{equation}\label{apdx:mu01}
    \begin{aligned}
        \boldsymbol{R}\ket{\psi}&=\frac{1}{2}\bra{\psi}\boldsymbol{X}\ket{\psi}\ket{\psi}+\frac{1}{2}\boldsymbol{X}\ket{\psi},\\
        0&=-\frac{1}{2}\bra{\psi}\boldsymbol{X}^{\dagger}\ket{\psi}\ket{\psi}+\frac{1}{2}\boldsymbol{X}^{\dagger}\ket{\psi}.
    \end{aligned}
\end{equation}
Multiply $\langle \psi|\boldsymbol{X}$ from the left of the second equation, we have $\langle \psi|\boldsymbol{X}\boldsymbol{X}^\dagger|\psi\rangle=|\langle\psi|\boldsymbol{X}|\psi\rangle|^2$, which gives $\alpha=\beta$.  
In this case 
\begin{equation}
    \aligned
    \langle\psi|\boldsymbol{R}^\dagger \boldsymbol{R}|\psi\rangle&=\frac{1}{4}(3|\langle \psi| \boldsymbol{X}|\psi\rangle|^2+ \langle \psi|\boldsymbol{X}^\dagger \boldsymbol{X}|\psi\rangle),\\
    \langle\psi|\boldsymbol{R}\boldsymbol{X}^\dagger |\psi\rangle&=|\langle\psi|\boldsymbol{X}|\psi\rangle|^2,\\
    \langle\psi|\boldsymbol{R}^\dagger \boldsymbol{X}|\psi\rangle&=\frac{1}{2}(\langle \psi|\boldsymbol{X}^\dagger \boldsymbol{X}|\psi\rangle+|\langle\psi|\boldsymbol{X}|\psi\rangle|^2).
    \endaligned
\end{equation}
Then
\begin{equation}
    \aligned
    \mathcal{E}_0=&\bra{\psi}\boldsymbol{R}^{\dagger}\boldsymbol{R}\ket{\psi}-\frac{1}{2}\bra{\psi}(\boldsymbol{R}\boldsymbol{X}^{\dagger}+\boldsymbol{X}^{\dagger}\boldsymbol{R}+\boldsymbol{R}^{\dagger}\boldsymbol{X}+\boldsymbol{X}\boldsymbol{R}^{\dagger}-\boldsymbol{X}\boldsymbol{X}^{\dagger}-\boldsymbol{X}^{\dagger}\boldsymbol{X})\ket{\psi}\\
    =&\frac{1}{4}(\langle\psi|\boldsymbol{X}^\dagger \boldsymbol{X}|\psi\rangle-|\langle\psi|\boldsymbol{X}|\psi\rangle|^2)\\
    =&\frac{1}{4}(\langle\psi|\boldsymbol{X}^\dagger \boldsymbol{X}|\psi\rangle-\langle\psi|\boldsymbol{X}\boldsymbol{X}^\dagger|\psi\rangle)\\
    =&\frac{1}{2}\alpha.
    \endaligned
\end{equation}
Similarly for $\mu=1$, we also have $\alpha=\beta$ and $\mathcal{E}_0=\frac{1}{2}\alpha$. Thus $\mu\in\{0,1\}$ can only happen when $\alpha=\beta$, in which case the minimal $\mathcal{E}_0$ can also be written as 
$\mathcal{E}_0=\frac{1}{2}(\alpha-\sqrt{\alpha^2-\beta^2})$. To summarize, we have the tight bound
\begin{equation}
    w_1\epsilon_1^2+w_2\epsilon_2^2\geq\frac{1}{2}\left(\alpha-\sqrt{\alpha^2-\beta^2}\right),
\end{equation}
where $\alpha=w_1(\Delta X_1)^2+w_2(\Delta X_2)^2,\ \beta=i\sqrt{w_1w_2}\bra{\psi}[X_1,X_2]\ket{\psi}$. 

We note that since Ozawa's relation is tight for two observables on pure states, the tight bound can also be obtained from the Ozawa's relation, which is presented in the next section. The advantage of the derivation here is it can directly use Lemma \ref{apdx:lemma.tightness} in Appendix \ref{apdx:tightpure} to construct the optimal measurement. Without loss of generality, we can assume $\bra{\psi}X_1\ket{\psi}=\bra{\psi}X_2\ket{\psi}=0$, which is equivalent to replacing $X_j=X_j-\bra{\psi}X_j\ket{\psi}I$ for $j=1,2$ and does not affect the root-mean-square error and the optimal measurement.
Then with Eq.(\ref{apdx:RRdagger}), we have
\begin{equation}
    \begin{aligned}
        (1-\mu)(\ket{r_1}+i\ket{r_2})&=\frac{1}{2}(\ket{x_1}+i\ket{x_2}),\\
        \mu(\ket{r_1}-i\ket{r_2})&=\frac{1}{2}(\ket{x_1}-i\ket{x_2}),
    \end{aligned}
\end{equation}
which gives $\ket{r_1}$, $\ket{r_2}$ as
\begin{equation}
    \begin{aligned}
        \ket{r_1}&=\frac{1}{4(1-\mu)}(\ket{x_1}+i\ket{x_2})+\frac{1}{4\mu}(\ket{x_1}-i\ket{x_2})=\frac{1}{4\mu(1-\mu)}\ket{x_1}-i\frac{1-2\mu}{4\mu(1-\mu)}\ket{x_2},\\
        \ket{r_2}&=\frac{1}{i4(1-\mu)}(\ket{x_1}+i\ket{x_2})-\frac{1}{i4\mu}(\ket{x_1}-i\ket{x_2})=i\frac{1-2\mu}{4\mu(1-\mu)}\ket{x_1}+\frac{1}{4\mu(1-\mu)}\ket{x_2},
    \end{aligned}
\end{equation}
with $\ket{x_j}=X_j\ket{\psi}$, $j=1,2$, $\mu=\mu_{+}=\frac{-(\alpha-\beta)+\sqrt{\alpha^2-\beta^2}}{2\beta}$. 
By finding the orthonormal basis $\ket{u_m'}$ of the span $\{\ket{\psi},\ket{r_1},\ket{r_2}\}$ such that $\inp{\psi}{u_m'}\neq 0$, $\forall m$, the optimal measurement $\{M_m\}$ and optimal $\{f_j(m)\}$ can be obtained as Eq.(\ref{apdx:optm}) and Eq.(\ref{apdx:optfjm}).

\section{Tighter analytical bound for two observables on mixed states}\label{apdx:tightermixed}
In this section, we derive a tighter analytical bound for two observables on mixed states than the Ozawa's bound.
\subsection{Minimal weighted error from Ozawa's relation}\label{apdx:ozawa}
For the comparison, we first identify the minimal weighted error, $w_1\epsilon_1^2+w_2\epsilon_2^2$, derived from the Ozawa's relation,
\begin{equation}\label{eq:suppozawa}
    \epsilon_1^2\cdot (\Delta X_2)^2+(\Delta X_1)^2\cdot\epsilon_2^2+2\sqrt{(\Delta X_1)^2\cdot(\Delta X_2)^2-c_{12}^2}\epsilon_1\cdot \epsilon_2 \geq c_{12}^2,
\end{equation}
where $c_{12}=\frac{1}{2}\|\sqrt{\rho}[X_1,X_2]\sqrt{\rho}\|_1$. 

The minimal weighted error can be obtained via the Lagrange multiplier with
\begin{equation}
    \mathcal{L}(\epsilon_1,\epsilon_2,\lambda)=w_1\epsilon_1^2+w_2\epsilon_2^2+\lambda\left(\epsilon_1^2\cdot (\Delta X_2)^2+(\Delta X_1)^2\cdot\epsilon_2^2+2\sqrt{(\Delta X_1)^2\cdot(\Delta X_2)^2-c_{12}^2}\epsilon_1\cdot \epsilon_2 - c_{12}^2\right),
\end{equation}
which gives
\begin{equation}\label{eq:suppLag}
    \begin{aligned}
        \partial_{\epsilon_1}\mathcal{L}(\epsilon_1,\epsilon_2,\lambda)&=2w_1\epsilon_1+\lambda\left(2\epsilon_1\cdot (\Delta X_2)^2+2\sqrt{(\Delta X_1)^2\cdot(\Delta X_2)^2-c_{12}^2} \epsilon_2\right)=0,\\
        \partial_{\epsilon_2}\mathcal{L}(\epsilon_1,\epsilon_2,\lambda)&=2w_2\epsilon_2+\lambda\left(2\epsilon_2\cdot (\Delta X_1)^2+2\sqrt{(\Delta X_1)^2\cdot(\Delta X_2)^2-c_{12}^2} \epsilon_1\right)=0,\\
    \end{aligned}
\end{equation}
and $\lambda(\epsilon_1^2\cdot (\Delta X_2)^2+(\Delta X_1)^2\cdot\epsilon_2^2+2\sqrt{(\Delta X_1)^2\cdot(\Delta X_2)^2-c_{12}^2}\epsilon_1\cdot \epsilon_2 - c_{12}^2)=0$. 
If $\lambda=0$, we have $\epsilon_1=\epsilon_2=0$, which can only satisfy the inequality in Eq.(\ref{eq:suppozawa}) when $c_{12}=0$. If $\lambda\neq 0$, we then have $\epsilon_1^2\cdot (\Delta X_2)^2+(\Delta X_1)^2\cdot\epsilon_2^2+2\sqrt{(\Delta X_1)^2\cdot(\Delta X_2)^2-c_{12}^2}\epsilon_1\cdot \epsilon_2 - c_{12}^2=0$, which, combines with Eq.(\ref{eq:suppLag}), gives the optimal $\epsilon_1$ and $\epsilon_2$ as
\begin{equation}\label{eq:suppoptimal}
    \begin{aligned}
        \epsilon_1^{\star}=&\sqrt{\frac{1}{2}(\Delta X_1)^2-\frac{1}{2\sqrt{\alpha^2-\beta^2}}\left(w_1(\Delta X_1)^4+w_2(\Delta X_1)^2\cdot(\Delta X_2)^2-2w_2c_{12}^2\right)},\\
        \epsilon_2^{\star}=&\sqrt{\frac{1}{2}(\Delta X_2)^2-\frac{1}{2\sqrt{\alpha^2-\beta^2}}\left(w_2(\Delta X_2)^4+w_1(\Delta X_1)^2\cdot(\Delta X_2)^2-2w_1c_{12}^2\right)},
    \end{aligned}
\end{equation}
where $\alpha=w_1(\Delta X_1)^2+w_2(\Delta X_2)^2$, $\beta=2\sqrt{w_1w_2}c_{12}$. Note that when $c_{12}=0$, $\epsilon_1^{\star}=\epsilon_2^{\star}=0$ are also optimal. Eq.(\ref{eq:suppoptimal}) thus give the optimal value for all cases.  
By substituting $\epsilon_1^{\star}$ and $\epsilon_2^{\star}$ in $w_1\epsilon_1^2+w_2\epsilon_2^2$, we then obtain the bound on the minimal weighted error from the Ozawa's relation as
\begin{equation}\label{eq:suppozawabound}
    w_1\epsilon_1^2+w_2\epsilon_2^2\geq\mathcal{E}_{Ozawa}\equiv w_1\epsilon_1^{\star 2}+w_2\epsilon_2^{\star 2}=\frac{1}{2}\left(\alpha-\sqrt{\alpha^2-\beta^2}\right),
\end{equation}
where $\alpha=w_1(\Delta X_1)^2+w_2(\Delta X_2)^2$, $\beta=\sqrt{w_1w_2}\|\sqrt{\rho}[X_1,X_2]\sqrt{\rho}\|_1$. Note that $\epsilon_1^{\star}$ and $\epsilon_2^{\star}$ saturates the Ozawa's relation with $\epsilon_1^{\star 2}\cdot (\Delta X_2)^2+(\Delta  X_1)^2\cdot\epsilon_2^{\star 2}+2\sqrt{(\Delta X_1)^2\cdot(\Delta X_2)^2-c_{12}^2}\epsilon_1^{\star}\cdot \epsilon_2^{\star}=c_{12}^2$, the bound in Eq.(\ref{eq:suppozawabound}) is thus as tight as the Ozawa's relation. It is saturable for pure state, however, for mixed states, however, the bound is not tight in general since the Ozawa's relation is in general not tight for mixed states.

\subsection{Tighter analytical bound for the weighted error}\label{apdx:tightertwo}

We now derive a tighter analytical bound. Given any mixed state $\rho$, the minimal $\mathcal{E}$ is lower bounded by  $\mathcal{E}_0$ in Eq.(\ref{apdx:SDP0}).
Given any $\{\ket{u_q}\}$ that satistifies $\sum_q\ket{u_q}\bra{u_q}=I$, we have $\rho=\sum_q\sqrt{\rho}\ket{u_q}\bra{u_q}\sqrt{\rho}=\sum_q\lambda_q\ket{\phi_q}\bra{\phi_q}$, here
\begin{equation}
    \lambda_q=\bra{u_q}\rho\ket{u_q},\quad \ket{\phi_q}=\frac{\sqrt{\rho}\ket{u_q}}{\sqrt{\bra{u_q}\rho\ket{u_q}}}.
\end{equation}
Then we have $\mathcal{E}_0\geq\sum_q\lambda_q\mathcal{E}_{|\phi_q\rangle}$, with
\begin{equation}
    \begin{aligned}
        \mathcal{E}_{|\phi_q\rangle}=\min_{\mathbb{S},\{R_j\}_{j=1}^n} &\Tr\left[(W\otimes\ket{\phi_q}\bra{\phi_q})(\mathbb{S}-\mathbb{R}\mathbb{X}^{\dagger}-\mathbb{X}\mathbb{R}^{\dagger}+\mathbb{X}\mathbb{X}^{\dagger})\right]\\
        \text{subject to}\quad &\mathbb{S}_{jk}=\mathbb{S}_{kj}=\mathbb{S}_{jk}^{\dagger},\ \forall j,k\\
        &R_j=R_j^{\dagger},\ \forall j\\
        &\begin{pmatrix}
            I & \mathbb{R}^{\dagger}\\
            \mathbb{R} & \mathbb{S}
        \end{pmatrix}\geq 0.
    \end{aligned}
\end{equation}
For two observables this can be analytically solved as in previous section, which gives  
\begin{equation}
    \mathcal{E}_{|\phi_q\rangle}=\frac{1}{2}\left(\alpha_q-\sqrt{\alpha_q^2-\beta_q^2}\right),
\end{equation}
where $\alpha_q=w_1(\Delta_{\ket{\phi_q}}X_1)^2+w_2(\Delta_{\ket{\phi_q}}X_2)^2,\ \beta_q=i\sqrt{w_1w_2}\bra{\phi_q}[X_1,X_2]\ket{\phi_q}$, $(\Delta_{\ket{\phi_q}}X_j)^2=\bra{\phi_q}X_j^2\ket{\phi_q}-\bra{\phi_q}X_j\ket{\phi_q}^2$.
Since $\mathcal{E}_0\geq\sum_q\lambda_q\mathcal{E}_{|\phi_q\rangle}$, we then get an analytical bound $\mathcal{E}\geq \mathcal{E}_0\geq  \mathcal{E}_A$ with 
\begin{equation}
    \mathcal{E}_A=\sum_q\lambda_q\mathcal{E}_{|\phi_q\rangle}=\sum_q\frac{1}{2}\lambda_q\left(\alpha_q-\sqrt{\alpha_q^2-\beta_q^2}\right).    
\end{equation}


Any choice of $\{|u_q\rangle\}$ with $\sum_q |u_q\rangle\langle u_q|=I$ can be used to obtain an analytical bound. Specifically by choosing $\ket{u_q}$ as the eigenvectors of $\sqrt{\rho}[X_1,X_2]\sqrt{\rho}$, we can get an analytical bound that is in general tighter than Ozawa's bound for mixed states.


For the comparison of $\mathcal{E}_A$ with $\mathcal{E}_{Ozawa}$, we first note that
\begin{equation}
    \sum_q\lambda_q|\beta_q|=\sum_q\bra{u_q}\rho\ket{u_q} \left|\sqrt{w_1w_2}\frac{\bra{u_q}\sqrt{\rho}[X_1,X_2]\sqrt{\rho}\ket{u_q}}{\bra{u_q}\rho\ket{u_q}}\right|=\sqrt{w_1w_2}\|\sqrt{\rho}[X_1,X_2]\sqrt{\rho}\|_1=\beta,
\end{equation}
\begin{equation}
    \begin{aligned}
        \sum_q\lambda_q\alpha_q&=\sum_{j=1,2}w_j\sum_q\bra{u_q}\rho\ket{u_q}(\Delta_{\ket{\phi_q}}X_j)^2\\
        &=\sum_{j=1,2}w_j\sum_q\bra{u_q}\rho\ket{u_q}\left(\frac{\bra{u_q}\sqrt{\rho}X_j^2\sqrt{\rho}\ket{u_q}}{\bra{u_q}\rho\ket{u_q}}-\bra{\phi_q}X_j\ket{\phi_q}^2\right)\\
        &\leq\sum_{j=1,2}w_j\sum_q\bra{u_q}\rho\ket{u_q}\left(\frac{\bra{u_q}\sqrt{\rho}X_j^2\sqrt{\rho}\ket{u_q}}{\bra{u_q}\rho\ket{u_q}}\right)=\sum_{j=1,2}w_j\Tr(\rho X_j^2)=\sum_{j=1,2}w_j(\Delta X_j)^2,
    \end{aligned}
\end{equation}
where  without loss of generality we assumed that $\Tr(\rho X_1)=\Tr(\rho X_2)=0$. We thus have $\sum_q\lambda_q\alpha_q\leq\alpha$ and $\sum_q\lambda_q|\beta_q|=\beta$.
We also have (note $\alpha_j\geq |\beta_j|$)
\begin{equation}
    \left(\alpha_1-\sqrt{\alpha_1^2-\beta_1^2}\right)+\left(\alpha_2-\sqrt{\alpha_2^2-\beta_2^2}\right)\geq \left(\alpha_1+\alpha_2-\sqrt{(\alpha_1+\alpha_2)^2-(|\beta_1|+|\beta_2|)^2}\right),
\end{equation}
which can be directly verified as 
\begin{eqnarray}
    \aligned
    \Leftrightarrow\quad    &\sqrt{\alpha_1^2-\beta_1^2}+\sqrt{\alpha_2^2-\beta_2^2}\leq \sqrt{(\alpha_1+\alpha_2)^2-(|\beta_1|+|\beta_2|)^2},\\
    \Leftrightarrow\quad &\alpha_1^2-\beta_1^2+\alpha_2^2-\beta_2^2+2\sqrt{(\alpha_1^2-\beta_1^2)(\alpha_2^2-\beta_2^2)}\leq \alpha_1^2-\beta_1^2+\alpha_2^2-\beta_2^2+2(\alpha_1\alpha_2-|\beta_1\beta_2|),\\
    \Leftrightarrow\quad & (\alpha_1^2-\beta_1^2)(\alpha_2^2-\beta_2^2)\leq (\alpha_1\alpha_2-|\beta_1\beta_2|)^2,\\
    \Leftrightarrow\quad & \alpha_1^2\beta_2^2+\beta_1^2\alpha_2^2\geq 2\alpha_1\alpha_2|\beta_1\beta_2|\\
    \Leftrightarrow\quad &
    (\alpha_1|\beta_2|-|\beta_1|\alpha_2)^2\geq 0.
    \endaligned
\end{eqnarray}
We thus have
\begin{equation}
    \begin{aligned}
        \mathcal{E}_A&=\frac{1}{2}\sum_q\lambda_q\left(\alpha_q-\sqrt{\alpha_q^2-\beta_q^2}\right)=\frac{1}{2}\sum_q\left((\lambda_q\alpha_q)-\sqrt{(\lambda_q\alpha_q)^2-(\lambda_q|\beta_q|)^2}\right)\\
        &\geq \frac{1}{2}\left((\sum_q\lambda_q\alpha_q)-\sqrt{(\sum_q\lambda_q\alpha_q)^2-(\sum_q\lambda_q|\beta_q|)^2}\right)\\
        &\geq \frac{1}{2}\left(\alpha-\sqrt{\alpha^2-\beta^2}\right)=\mathcal{E}_{Ozawa},
    \end{aligned}
\end{equation}
where the last inequality we used the fact that $f(\alpha)=\alpha-\sqrt{\alpha^2-\beta^2}$, $\alpha\in (\beta,\infty)$, is a monotonically non-increasing function since $f'(\alpha)=1-\frac{\alpha}{\sqrt{\alpha^2-\beta^2}}\leq 0$. The analytical bound, 
$\mathcal{E}_A=\sum_q\frac{1}{2}\lambda_q\left(\alpha_q-\sqrt{\alpha_q^2-\beta_q^2}\right)$,    
is thus always tighter than the bound obtained from the Ozawa's relation,
$\mathcal{E}_{Ozawa}= \frac{1}{2}\left(\alpha-\sqrt{\alpha^2-\beta^2}\right).$

\subsection{Example}
Here we provide an example to illustrate the difference among the SDP bound $\mathcal{E}_0$ (Eq.(\ref{apdx:SDP0})), the analytical bound $\mathcal{E}_A$, and the bound obtained from the Ozawa's relation, $\mathcal{E}_{Ozawa}$. 


Take $\rho$ and three observables, $X_1$, $X_2$, $X_3$ as
\begin{equation}
	\begin{aligned}
		\rho=\begin{pmatrix}
			\frac{p}{2} & 0 & 0\\
			0 & 1-p & 0\\
			0 & 0 & \frac{p}{2}
		\end{pmatrix},\ 
		X_1=\begin{pmatrix}
			0 & 1 & 0\\
			1 & 0 & 1\\
			0 & 1 & 0
		\end{pmatrix},\\
		X_2=\begin{pmatrix}
			0 & -i & 0\\
			i & 0 & -i\\
			0 & i & 0
		\end{pmatrix},\ 
		X_3=\begin{pmatrix}
			1 & 0 & 0\\
			0 & 0 & 0\\
			0 & 0 & -1
		\end{pmatrix}.
	\end{aligned}
\end{equation}
Here $X_1$, $X_2$, $X_3$ are the spin-1 operators with $[X_j,X_k]\neq 0$ for all $j\neq k$.
In this case we have $(\Delta X_1)^2=(\Delta X_2)^2=2-p$, $(\Delta X_3)^2=p$, and
\begin{equation}
	\begin{aligned}
		c_{12}&=\frac{1}{2}\|\sqrt{\rho}[X_1,X_2]\sqrt{\rho}\|_1=p,\\
		c_{23}&=\frac{1}{2}\|\sqrt{\rho}[X_2,X_3]\sqrt{\rho}\|_1=\sqrt{p(1-p)},\\
		c_{31}&=\frac{1}{2}\|\sqrt{\rho}[X_3,X_1]\sqrt{\rho}\|_1=\sqrt{p(1-p)}.
	\end{aligned}
\end{equation}

For the simultaneous measurement of $(X_1,X_2)$, $(X_2,X_3)$, $(X_3,X_1)$, the Ozawa's relation gives
\begin{equation}
	\begin{aligned}
		\epsilon_1^2+\epsilon_2^2&\geq \mathcal{E}_{Ozawa}(X_1,X_2)=2-p-2\sqrt{1-p},\\
		\epsilon_2^2+\epsilon_3^2&\geq \mathcal{E}_{Ozawa}(X_2,X_3)=1-\sqrt{1-p(1-p)},\\
		\epsilon_3^2+\epsilon_1^2&\geq \mathcal{E}_{Ozawa}(X_3,X_1)=1-\sqrt{1-p(1-p)}.
	\end{aligned}
\end{equation}
While $\mathcal{E}_A$ can be calculated by substituting $\{\ket{u_q}\}$ with the eigenvectors of $ \sqrt{\rho}[X_j,X_k]\sqrt{\rho}$.
Specifically for $(X_2,X_3)$, we have $\ket{u_0}=(1,\sqrt{2},1)^T/2$, $\ket{u_1}=(1,-\sqrt{2},1)^T/2$, $\ket{u_2}=(1,0,-1)^T/\sqrt{2}$,
which gives
\begin{equation}
	\begin{aligned}
		&\epsilon_2^2+\epsilon_3^2\geq \mathcal{E}_{A}(X_2,X_3)=\sum_{q=0}^2\frac{1}{2}\lambda_q\left(\alpha_q-\sqrt{\alpha_q^2-\beta_q^2}\right)\\
		&=\frac{2-p}{4}\left(\frac{4-3p}{2-p}-\sqrt{\left(\frac{4-3p}{2-p}\right)^2-\left(\frac{4\sqrt{p(1-p)}}{2-p}\right)^2}\right)\\
		&=\frac{1}{4}(4-3p-|4-5p|).
	\end{aligned}
\end{equation}
Similarly for $(X_1,X_2)$ and $(X_3,X_1)$, we have
\begin{equation}
	\begin{aligned}
		\epsilon_1^2+\epsilon_2^2&\geq \mathcal{E}_{A}(X_1,X_2)=p,\\
		\epsilon_3^2+\epsilon_1^2&\geq \mathcal{E}_{A}(X_3,X_1)=\frac{1}{4}(4-3p-|4-5p|).
	\end{aligned}
\end{equation}

In Fig. \ref{fig:example-spin1}, we illustrate the comparisons among the three bounds: (i) $\mathcal{E}_0$ that can be computed with SDP; (ii) $\mathcal{E}_A$ that can be analytically obtained  and (iii) $\mathcal{E}_{Ozawa}$ that obtained from the Ozawa's relation. 
It can be seen that for the simultaneous measurement of two observables, we already have $\mathcal{E}_0>\mathcal{E}_A>\mathcal{E}_{Ozawa}$.
While for the simultaneous measurement of three observables, the SDP bound is tighter than the bound obtained by simply adding the results for each pairs of the observables.

\begin{figure*}[htb]
	\centering
	\includegraphics[width=\textwidth]{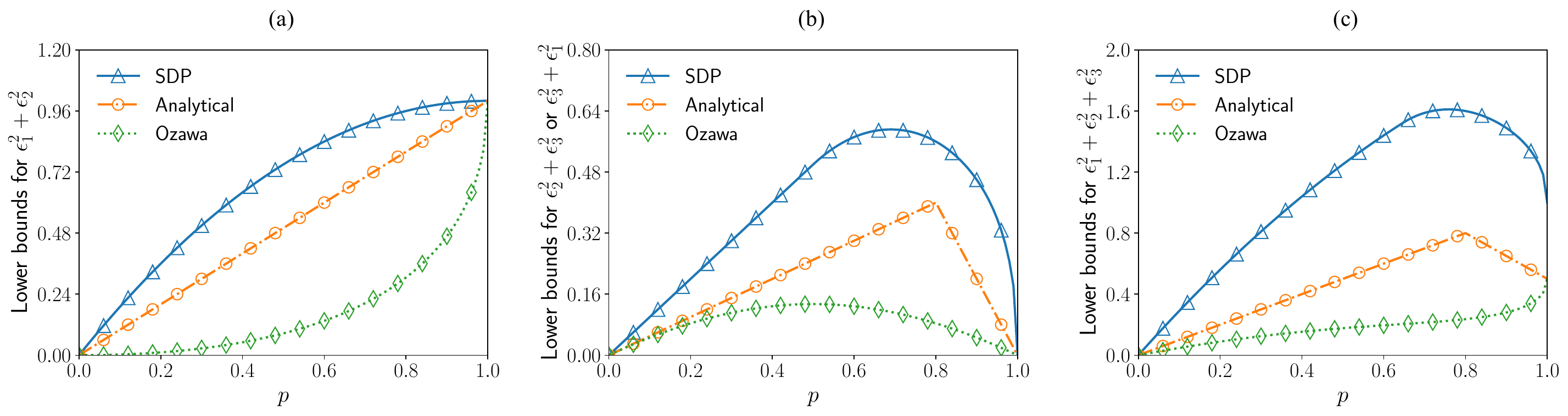}
	\caption{Lower bounds for the errors of the simultaneous measurement of (a) $(X_1,X_2)$, (b) $(X_2,X_3)$ or $(X_3,X_1)$, (c) $(X_1,X_2,X_3)$. The bounds for $(X_2,X_3)$ and $(X_3,X_1)$ are the same, so they are plotted in the same figure (b). The analytical and Ozawa's bounds for $(X_1,X_2,X_3)$ are obtained by summing the bounds for all pairs $(X_j,X_k)$, while the SDP bound can be directly obtained via the SDP for three observables.}
	\label{fig:example-spin1}
\end{figure*}

\section{Comparison of the analytical bound for more than two observables with the Ozawa's bound on pure states}\label{apdx:comparAnalyOzawa}
We have shown that in the case of two observables, $\mathcal{E}_A= \mathcal{E}_{Ozawa}$ for pure states and $\mathcal{E}_A\geq \mathcal{E}_{Ozawa}$ for mixed states. Thus for mixed states we already get tighter analytical bound than the Ozawa's bound. 
In this section we show that for pure states the analytical bound obtained for an arbitrary number of observables, 
\begin{equation}
    \Tr(S_{\text{Re}}^{-1}Q_{\text{Re}})\geq \left(\sqrt{\|S_{\text{Re}}^{-\frac{1}{2}}\tilde{S}_{\text{Im}}S_{\text{Re}}^{-\frac{1}{2}}\|_{F}+1}-1\right)^2,
\end{equation}
is always tighter than the simple summation of the Ozawa's bound when the number of observables is bigger than 4 ($n\geq 5$). 

Since $\Tr(S_{\text{Re}}^{-1}Q_{\text{Re}})$ does not change under linear transformations, without loss of generality, we can assume $S_{\text{Re}}=I$, the bound then becomes  
\begin{equation}\label{eq:suppSre}
    \sum_j \epsilon_j^2\geq \left(\sqrt{\|\tilde{S}_{\text{Im}}\|_{F}+1}-1\right)^2.
\end{equation}
For pure state $\rho=|\psi\rangle\langle\psi|$, with arbitrary $\{\ket{u_q}\}$ we have $(S_{u_q})_{jk}=\bra{\psi}X_jX_k\ket{\psi}|\inp{\psi}{u_q}|^2$,
and $(S_{u_q,\text{Im}})_{jk}=\frac{1}{2i}\bra{\psi}[X_j,X_k]\ket{\psi}|\inp{\psi}{u_q}|^2$.
The matrices $S_{u_q,\text{Im}}$ for different $q$ then equal to a common antisymmetric matrix multiplied by non-negative numbers $|\inp{\psi}{u_q}|^2$.
In this case the optimal choice is not to take any transpose.
With arbitrary $\{\ket{u_q}\}$ with $\sum_q \ket{u_q}\bra{u_q}=I$, we have $\sum_q |\inp{\psi}{u_q}|^2=1$, and 
\begin{equation}
    (\tilde{S}_{\text{Im}})_{jk}=\sum_q (S_{u_q,\text{Im}})_{jk}=\sum_q \frac{1}{2i}\bra{\psi}[X_j,X_k]\ket{\psi}|\inp{\psi}{u_q}|^2=\frac{1}{2i}\langle\psi|[X_j,X_k]|\psi\rangle
\end{equation}

From the Ozawa's relation, we show in Appendix \ref{apdx:tightertwo} that
\begin{equation}
    w_1\epsilon_j^2+w_2\epsilon_k^2\geq\mathcal{E}_{Ozawa}\equiv \frac{1}{2}\left(\alpha-\sqrt{\alpha^2-\beta^2}\right),
\end{equation}
where $\alpha=w_1(\Delta X_j)^2+w_2(\Delta X_k)^2$, $\beta=\sqrt{w_1w_2}\|\sqrt{\rho}[X_j,X_k]\sqrt{\rho}\|_1$. For the comparison we choose $w_1=w_2=1$ and for normalized observables with $S_{\text{Re}}=I$ we have $(\Delta X_j)^2=1$, $\forall j$. For pure state, $\rho=|\psi\rangle\langle\psi|$, the Ozawa's bound is then given by
\begin{equation}
    \epsilon_j^2+\epsilon_k^2\geq\mathcal{E}_{Ozawa}(X_j,X_k)\equiv \frac{1}{2}\left(2-\sqrt{4-|\langle\psi|[X_j,X_k]|\psi\rangle|^2}\right)=1-\sqrt{1-\frac{1}{2}|\langle\psi|[X_j,X_k]|\psi\rangle|^2},
\end{equation}
By summing all pairs, we obtain
\begin{equation}\label{eq:suppsum}
    \sum_l\epsilon_l^2 \geq \frac{1}{n-1}\sum_{j,k} \left(1-\sqrt{1-\frac{1}{2}|\langle\psi|[X_j,X_k]|\psi\rangle|^2}\right).
\end{equation}
Since $1-\frac{1}{2}|\langle\psi|[X_j,X_k]|\psi\rangle|^2\leq 1$, we have $\sqrt{1-\frac{1}{2}|\langle\psi|[X_j,X_k]|\psi\rangle|^2}\geq 1-\frac{1}{2}|\langle\psi|[X_j,X_k]|\psi\rangle|^2$ and 
\begin{equation}
    \sum_{j,k}\left(1-\sqrt{1-\frac{1}{2}|\langle\psi|[X_j,X_k]|\psi\rangle|^2}\right)\leq \sum_{j,k}\frac{1}{2}|\langle\psi|[X_j,X_k]|\psi\rangle|^2=\frac{1}{2}\|\tilde{S}_{\text{Im}}\|_{F}^2.  
\end{equation}
The bound in Eq.(\ref{eq:suppSre}) is then tighter than the bound in Eq.(\ref{eq:suppsum}) when 
\begin{equation}
    \left(\sqrt{\|\tilde{S}_{\text{Im}}\|_{F}+1}-1\right)^2 \geq \frac{1}{2(n-1)}\|\tilde{S}_{\text{Im}}\|_{F}^2\geq \frac{1}{n-1}\sum_{j,k} (1-\sqrt{1-\frac{1}{2}|\langle\psi|[X_j,X_k]|\psi\rangle|^2}).  
\end{equation}
The first inequality is equivalent to 
\begin{eqnarray}
    \aligned
    &\sqrt{\|\tilde{S}_{\text{Im}}\|_{F}+1}-1\geq \frac{1}{\sqrt{2(n-1)}}\|\tilde{S}_{\text{Im}}\|_F\\
    \Leftrightarrow \quad &\frac{1}{\sqrt{2(n-1)}}(\|\tilde{S}_{\text{Im}}\|_F+1)-\frac{1}{\sqrt{2(n-1)}}-\sqrt{\|\tilde{S}_{\text{Im}}\|_{F}+1}+1 \leq 0\\
    \Leftrightarrow \quad &(\|\tilde{S}_{\text{Im}}\|_F+1)-\sqrt{2(n-1)}\sqrt{\|\tilde{S}_{\text{Im}}\|_{F}+1}+\sqrt{2(n-1)}-1\leq 0\\
    \Leftrightarrow \quad &\left(\sqrt{\|\tilde{S}_{\text{Im}}\|_F+1}-\sqrt{\frac{n-1}{2}}\right)^2-\frac{n-1}{2}+\sqrt{2(n-1)}-1\leq 0\\
    \Leftrightarrow \quad & \left(\sqrt{\|\tilde{S}_{\text{Im}}\|_F+1}-\sqrt{\frac{n-1}{2}}\right)^2\leq \left(\sqrt{\frac{n-1}{2}}-1\right)^2 \\
    \Leftrightarrow \quad & \sqrt{\frac{n-1}{2}}-\left(\sqrt{\frac{n-1}{2}}-1\right) \leq\sqrt{\|\tilde{S}_{\text{Im}}\|_F+1}\leq \sqrt{\frac{n-1}{2}}+\left(\sqrt{\frac{n-1}{2}}-1\right)\\
    \Leftrightarrow \quad & 1\leq \|\tilde{S}_{\text{Im}}\|_F+1\leq (\sqrt{2(n-1)}-1)^2\\
    \Leftrightarrow \quad & 0\leq \|\tilde{S}_{\text{Im}}\|_F\leq 2(n-1)-2\sqrt{2(n-1)}, 
    \endaligned
\end{eqnarray}
this always holds when $n\geq 5$ since from $S_{\text{Re}}=I$ and $S_{\text{Re}}+i\tilde{S}_{\text{Im}}\geq 0$ we have  $\|\tilde{S}_{\text{Im}}\|_F\leq \|S_{\text{Re}}\|_F=\sqrt{n}$, and $\sqrt{n}\leq 2(n-1)-2\sqrt{2(n-1)}$ when $n\geq 5$. Thus for $n\geq 5$ observables on pure states, the bound in Eq.(\ref{eq:suppSre}) is always tighter than the simple summation of Ozawa's relation for all pairs of the observables. We note that for pure states, $\mathcal{E}_A=\mathcal{E}_{Ozawa}$, the bound in Eq.(\ref{eq:suppSre}) is thus also tighter than the simple summation of $\mathcal{E}_A$ for all pairs. 

\section{Comparison of the SDP bound for more than two observables with the Ozawa's bound on mixed states}\label{apdx:comparSDPOzawa}
\noindent
For general $n$ observables, $\{X_1,\cdots, X_n\}$, we can use the results on two observables to show that the SDP bound for $\sum_{i=1}^n \epsilon_i^2$ is always tighter than the bound obtained by simply summing up the Ozawa's relation for all $\epsilon_j^2+\epsilon_k^2$. 

First note that for any pair of observables, we have $\mathcal{E}_0(X_j, X_k)\geq\sum_q\lambda_q\mathcal{E}_{|\phi_q\rangle}(X_j,X_k)$, and $\sum_q\lambda_q\mathcal{E}_{|\phi_q\rangle}(X_j,X_k)$ is tighter than the bound obtained from the Ozawa's relation. Thus the SDP bound for two observables is already tighter than the bound obtained from the Ozawa's relation. As $\sum_{i=1}^n \epsilon_i^2=\frac{1}{n-1}\sum_{j\neq k}(\epsilon_j^2+\epsilon_k^2)$, we can use the bound on two observables to obtain a bound on $\sum_{i=1}^n \epsilon_i^2$ as $\sum_{i=1}^n \epsilon_i^2\geq \frac{1}{n-1}\sum_{j\neq k}\mathcal{E}_0(X_j,X_k)$. This is tighter than the bound by summing up the Ozawa's relation for all pairs since each $\mathcal{E}_0(X_j,X_k)$ is tighter. To show the SDP bound for $n$ observables is tighter than the bound obtained by summing up the Ozawa's relation, we just need to show  
\begin{equation}\label{eq:SDPOzawa}
	\mathcal{E}_0(X_1,\cdots, X_n)\geq \frac{1}{n-1}\sum_{j\neq k}\mathcal{E}_0(X_j,X_k).
\end{equation}
By taking $W=I_n$ with $I_n$ as $n\times n$ Identity matrix, we have
\begin{eqnarray}
	\aligned
	&\mathcal{E}_0(X_1,\cdots, X_n)\\
	=&\min_{\mathbb{S},\{R_j\}_{j=1}^n}\Tr[(I\otimes\rho)(\mathbb{S}-\mathbb{R}\mathbb{X}^{\dagger}-\mathbb{X}\mathbb{R}^{\dagger}+\mathbb{X}\mathbb{X}^{\dagger})\\
	&\quad|\mathbb{S}\geq \mathbb{R}\mathbb{R}^{\dagger},\mathbb{S}_{jk}=\mathbb{S}_{kj}=\mathbb{S}_{jk}^{\dagger}, R_j=R_j^{\dagger}]\\
	=&\Tr[(I\otimes\rho)(\mathbb{S}^*-\mathbb{R}^*\mathbb{X}^{\dagger}-\mathbb{X}\mathbb{R}^{*\dagger }+\mathbb{X}\mathbb{X}^{\dagger})]\\
	=&\frac{1}{n-1}\sum_{j,k}\Tr[(I^{(j,k)}\otimes\rho)(\mathbb{S}^*-\mathbb{R}^*\mathbb{X}^{\dagger}-\mathbb{X}\mathbb{R}^{* \dagger}+\mathbb{X}\mathbb{X}^{\dagger})],
	\endaligned
\end{eqnarray}
here $\mathbb{S}^*$ and $\mathbb{R}^*$ denote the optimal operators that achieve the minimum for $n$ observables, $I^{(j,k)}$ denote the diagonal matrix with its $j$th and $k$th diagonal element equal to 1 while others equal to 0. When $W=I^{(j,k)}$, only $2\times 2$ blocks with indexes $(j,k)$ contribute and we have
\begin{eqnarray}
	\aligned
	\Tr[&(I^{(j,k)}\otimes\rho)(\mathbb{S}^*-\mathbb{R}^*\mathbb{X}^{\dagger}-\mathbb{X}\mathbb{R}^{* \dagger }+\mathbb{X}\mathbb{X}^{\dagger})]\\
	=\Tr[&(I_2\otimes\rho)(\mathbb{S}^{*(j,k)}-\mathbb{R}^{*(j,k)}\mathbb{X}^{\dagger (j,k)}\\
	&-\mathbb{X}^{(j,k)}\mathbb{R}^{ *(j,k) \dagger}+\mathbb{X}^{(j,k)}\mathbb{X}^{\dagger(j,k)})]
	\endaligned
\end{eqnarray}
here $I_2$ is $2\times 2$ Identity matrix, $\mathbb{S}^{*(j,k)}=\begin{pmatrix}
	\mathbb{S}_{jj}^* & \mathbb{S}_{jk}^*\\
	\mathbb{S}_{kj}^* & \mathbb{S}_{kk}^*
\end{pmatrix}$, $\mathbb{R}^{*(j,k)}=\begin{pmatrix} R_j^* & R_k^* \end{pmatrix}^\dagger$, $\mathbb{X}^{(j,k)}=\begin{pmatrix} X_j & X_k \end{pmatrix}^\dagger$. Note that $\mathbb{S}_{jk}^*=\mathbb{S}_{kj}^*=\mathbb{S}_{jk}^{*\dagger}$ for each $(j,k)$, $R_j^*=R_j^{ *\dagger}$ for each $j$, and from $\mathbb{S}^*\geq \mathbb{R}^*\mathbb{R}^{* \dagger}$, we have
\begin{equation}
	\begin{pmatrix}
		\mathbb{S}_{jj}^* & \mathbb{S}_{jk}^*\\
		\mathbb{S}_{kj}^* & \mathbb{S}_{kk}^*
	\end{pmatrix}\geq
	\begin{pmatrix}
		R_{j}^*R_{j}^* & R_{j}^*R_{k}^*\\
		R_{k}^*R_{j}^* & R_{k}^*R_{k}^*
	\end{pmatrix},\ \text{for each}\ (j,k).
\end{equation} 
Thus 
$\Tr[(I_2\otimes\rho)(\mathbb{S}^{*(j,k)}-\mathbb{R}^{*(j,k)}\mathbb{X}^{\dagger (j,k)}
-\mathbb{X}^{(j,k)}\mathbb{R}^{ *(j,k) \dagger}+\mathbb{X}^{(j,k)}\mathbb{X}^{\dagger(j,k)})]
\geq \mathcal{E}_0(X_j,X_k)$ since $\mathcal{E}_0(X_j,X_k)= \min_{\mathbb{S},\mathbb{R}} \Tr[(I_2\otimes\rho)(\mathbb{S}-\mathbb{R}\mathbb{X}^{\dagger}-\mathbb{X}\mathbb{R}^{\dagger}+\mathbb{X}\mathbb{X}^{\dagger})$,
where $\mathbb{X}=\begin{pmatrix} X_j & X_k \end{pmatrix}^\dagger$ and the minimum is taken over all $\mathbb{S}=\begin{pmatrix}
	\mathbb{S}_{jj} & \mathbb{S}_{jk}\\
	\mathbb{S}_{kj} & \mathbb{S}_{kk}
\end{pmatrix}$ and $\mathbb{R}=\begin{pmatrix} R_j & R_k \end{pmatrix}^\dagger$ with $\mathbb{S}\geq \mathbb{R}\mathbb{R}^{\dagger},\mathbb{S}_{jk}=\mathbb{S}_{kj}=\mathbb{S}_{jk}^{\dagger}, R_j=R_j^{\dagger}$. Thus 
\begin{eqnarray}
	\aligned
	&\mathcal{E}_0(X_1,\cdots, X_n)\\
	=&\frac{1}{n-1}\sum_{j\neq k}\Tr[(I^{(j,k)}\otimes\rho)(\mathbb{S}^*-\mathbb{R}^*\mathbb{X}^{\dagger}-\mathbb{X}\mathbb{R}^{* \dagger}+\mathbb{X}\mathbb{X}^{\dagger})]\\
	\geq &\frac{1}{n-1}\sum_{j\neq k} \mathcal{E}_0(X_j,X_k).  
	\endaligned
\end{eqnarray}
Since $\mathcal{E}_0(X_j,X_k)$ is tighter than the Ozawa's relation, the SDP bound is thus always tighter than the bound obtained from the simple summation of the Ozawa's relation for all pairs.

\section{Examples}\label{apdx:example}

In this section, we demonstrate the applications of (i) the bounds of $\Tr(S_{\text{Re}}^{-1}Q_{\text{Re}})$; (ii) $\mathcal{E}_A$ that can be analytically obtained and (iii) $\mathcal{E}_0$ that can be computed with SDP in specific examples.

\subsection{Example: Error tradeoff relation for three observables on a qubit}

\noindent
\textit{Bounds obtained from $\Tr(S_{\text{Re}}^{-1}Q_{\text{Re}})$:}

Consider the approximate measurement of the three Pauli operators $\frac{1}{2}\sigma_x$, $\frac{1}{2}\sigma_y$, $\frac{1}{2}\sigma_z$ for a general spin-1/2 state $\rho=\frac{1}{2}(I+r_x\sigma_x+r_y\sigma_y+r_z\sigma_z)$, where $|r_x|^2+|r_y|^2+|r_z|^2\leq 1$.
In this case $X_1=\frac{1}{2}\sigma_x$, $X_2=\frac{1}{2}\sigma_y$, $X_3=\frac{1}{2}\sigma_z$, we have  $(S_{\text{Re}})_{jk}=\frac{1}{2}\Tr(\rho\{X_j,X_k\})=\frac{1}{2}\Tr(\rho\cdot\frac{1}{2}\delta_{jk} I)=\frac{1}{4}\delta_{jk}$ for $1\leq j,k\leq 3$.
Thus $S_{\text{Re}}=\frac{1}{4} I$, the error tradeoff relation is then given by
\begin{equation}\label{eq:example1_Sim}
    \sum_{j=1}^{3}\epsilon_j^2=\Tr(Q_{\text{Re}})\geq \frac{1}{4}\left(\sqrt{4\|\tilde{S}_{\text{Im}}\|_{F}+1}-1\right)^2,
\end{equation}
here $\tilde{S}_{\text{Im}}$ is the imaginary part of $\tilde{S}=\sum_q\tilde{S}_{u_q}$ with each $\tilde{S}_{u_q}$ equals to either $S_{u_q}$ or $\bar{S}_{u_q}$, where $(S_{u_q})_{jk}=\bra{u_q}(\sqrt{\rho}\otimes \sqrt{\sigma})(X_jX_k\otimes I)(\sqrt{\rho}\otimes \sqrt{\sigma})\ket{u_q}$ with $\sigma=|\xi_0\rangle\langle \xi_0|$. 
The tightest bound is obtained by maximizing $\|\tilde{S}_{\text{Im}}\|_{F}$ over all choices of $\{\ket{u_q}\}$ such that $\sum_q\ket{u_q}\bra{u_q}= I$. Here instead of maximizing over all $\{|u_q\rangle\}$, we consider the maximization over $\{\ket{u_1}\otimes\ket{\xi_l}, \ket{u_2}\otimes\ket{\xi_l}|l=0,1,\cdots, d_A-1\}$, with $\{\ket{u_1},\ket{u_2}\}$ and $\{\ket{\xi_l}|l=0,1,\cdots, d_A-1\}$ being orthonormal basis of the system and ancilla, respectively.  
With such choices we have
\begin{eqnarray}
    \aligned
    (S_{u_{ql}})_{jk}&=\bra{\xi_l}\bra{u_{q}}(\sqrt{\rho}\otimes \sqrt{\sigma})(X_jX_k\otimes I)(\sqrt{\rho}\otimes \sqrt{\sigma})\ket{u_{q}}\ket{\xi_l}\\
    &=\bra{u_q}\sqrt{\rho}X_jX_k\sqrt{\rho}\ket{u_q}\delta_{0l}.
    \endaligned
\end{eqnarray}
The optimization is now over $\{|u_1\rangle, |u_2\rangle\}$ such that $|u_1\rangle\langle u_1|+|u_2\rangle\langle u_2|=I_S$ with $I_S$ as the Identity on the system, together with the choices of taking the transpose on $(S_{u_q})_{jk}=\bra{u_q}\sqrt{\rho}X_jX_k\sqrt{\rho}\ket{u_q}$, $q=1,2$. The optimization leads to a lower bound on the minimal error of approximating $\{X_j\}$ with any measurement on $\rho$. The bound could be further tightened by optimizing over all choices of $\{|u_q\rangle\}$ in the space of system and ancilla.

Before making the optimization, we emphasize that any choice leads to a valid bound. A simple choice is just to choose $\{\ket{u_1},\ket{u_2}\}$ as $\{\ket{0},\ket{1}\}$, and we are free to make the choices of taking the transpose on $\{S_{u_0},S_{u_1}\}$. A direct calculation on $\{S_{u_0},S_{u_1}\}$ gives 
\begin{equation}
    \max \|\tilde{S}_{\text{Im}}\|_{F}^2=\max\left\{
    \|S_{u_0,\text{Im}}+S_{u_1,\text{Im}}\|_F^2, \|S_{u_0,\text{Im}}+S_{u_1,\text{Im}}^{T}\|_F^2
    \right\}=\max\left\{\frac{1}{8}(r_x^2+r_y^2+r_z^2),\frac{1}{8}(1-r_x^2-r_y^2)\right\},
\end{equation}
which leads to a tradeoff relation
\begin{equation}\label{eq:simplechoice}
    \sum_{j=1}^{3}\epsilon_j^2\geq \frac{1}{4}\left(\sqrt{4\max \|\tilde{S}_{\text{Im}}\|_{F}+1}-1\right)^2=\frac{1}{4}\left(\sqrt{4\times\frac{1}{2\sqrt{2}}\max\left\{\sqrt{r_x^2+r_y^2+r_z^2},\sqrt{1-r_x^2-r_y^2}\right\}+1}-1\right)^2.
\end{equation}
    

We now consider the maximization of $\|\tilde{S}_{\text{Im}}\|_{F}$ over the choices of 
$\{\ket{u_1},\ket{u_2}\}$. 
In this case we have
\begin{equation}\label{eq:optqubit}
    \max_{\{\ket{u_q}\}} \|\tilde{S}_{\text{Im}}\|_F^2=\max\left\{\sum_{jk}|w_{jk}|^2,\lambda_{\max}(S^{xx})\right\},
\end{equation}
where $w_{jk}=\Tr(\mathcal{X}_{jk})$ with $\mathcal{X}_{jk}=\frac{1}{2i}\sqrt{\rho}[X_j,X_k]\sqrt{\rho}$, and $S^{xx}=\sum_{jk}x_{jk}x_{jk}^{T}$ with $x_{jk}$ as a 3-dimensional vector with its $r$th entry given by $x_{jk}^{(r)}=\Tr(\mathcal{X}_{jk}\sigma_r)$, $1\leq r\leq 3$,
$\lambda_{\max}(S^{xx})$ denotes the largest eigenvalue of $S^{xx}$.
We will show that for any $\rho=\frac{1}{2}(I+r_x\sigma_x+r_y\sigma_y+r_z\sigma_z)$, we have $\sum_{jk}|w_{jk}|^2=\frac{1}{8}\left(r_x^2+r_y^2+r_z^2\right)\leq \frac{1}{8}$ and $\lambda_{\max}(S^{xx})=\frac{1}{8}$. This then leads to a bound as
\begin{equation}
    \sum_{j=1}^{3}\epsilon_j^2\geq \frac{1}{4}\left(\sqrt{4\max_{\{\ket{u_q}\}}\|\tilde{S}_{\text{Im}}\|_{F}+1}-1\right)^2=\frac{1}{4}\left(\sqrt{4\times\frac{1}{2\sqrt{2}}+1}-1\right)^2=\frac{1}{4}\left(\sqrt{\sqrt{2}+1}-1\right)^2,
\end{equation}
which is generally tighter than Eq.(\ref{eq:simplechoice}). 

We now sketch the procedure of the optimization to get the above bound. First note that the entries of $\tilde{S}_{\text{Im}}$ are given by
\begin{equation}
    (\tilde{S}_{\text{Im}})_{jk}=\sum_{q=1}^2 (\tilde{S}_{u_q,\text{Im}})_{jk}=\sum_{q=1}^2 (-1)^{s_q}\frac{1}{2i}\bra{u_q}\sqrt{\rho}[X_j, X_k]\sqrt{\rho}\ket{u_q},
\end{equation}
where $s_q=0$ if $\tilde{S}_{u_q}=S_{u_q}$, $s_q=1$ if $\tilde{S}_{u_q}=\bar{S}_{u_q}$.
The Frobenius norm of $\tilde{S}_{\text{Im}}$ can then be written as
\begin{equation}
    \begin{aligned}
        \|\tilde{S}_{\text{Im}}\|_F^2&=\sum_{jk}|(\tilde{S}_{\text{Im}})_{jk}|^2=\sum_{jk}\left|\sum_q (-1)^{s_q}\frac{1}{2i}\bra{u_q}\sqrt{\rho}[X_j, X_k]\sqrt{\rho}\ket{u_q}\right|^2\\
        &=\sum_{jk}\left|\Tr\left(\left(\sum_q (-1)^{s_q}M_q\right)\frac{1}{2i}\sqrt{\rho}[X_j, X_k]\sqrt{\rho}\right)\right|^2\\
        &=\sum_{jk}\left|\Tr\left(\left(\sum_q (-1)^{s_q}M_q\right)\mathcal{X}_{jk}\right)\right|^2,
    \end{aligned}
\end{equation}
here $M_q=\ket{u_q}\bra{u_q}$ and $\mathcal{X}_{jk}=\frac{1}{2i}\sqrt{\rho}[X_j, X_k]\sqrt{\rho}$.
The optimization can thus be reformulated as
\begin{equation}
    \begin{aligned}
        \max_{\{M_q\}}\ &\sum_{jk}\left|\Tr\left(\mathcal{X}_{jk}\sum_{q=1}^2 (-1)^{s_q}M_q\right)\right|^2\\
        \text{subject to}\ &\sum_q M_q= I,\ M_q\geq 0,\ M_q^2=M_q,
    \end{aligned}
\end{equation}
where $s_q\in\{0,1\},\forall q\in \{1,2\}$.
For the qubit system, we introduce the Hermitian basis $\{\lambda_r\}_{r=0}^3=\{\frac{1}{\sqrt{2}} I,\frac{1}{\sqrt{2}}\sigma_1,\frac{1}{\sqrt{2}}\sigma_2,\frac{1}{\sqrt{2}}\sigma_3\}$ such that $\Tr(\lambda_r\lambda_s)=\delta_{rs}$ for $0\leq r,s\leq 3$. The Hermitian operators $\mathcal{X}_{jk}$ and $M_q$ can then be vectorized as
\begin{equation}
    \mathcal{X}_{jk}=\sum_{r=0}^3 \Tr(\mathcal{X}_{jk}\lambda_r)\lambda_r=\frac{1}{2}\left[\Tr(\mathcal{X}_{jk}) I+\sum_{r=1}^3\Tr(\mathcal{X}_{jk}\sigma_r)\sigma_r\right]=\frac{1}{2}\left(w_{jk} I+\sum_{r=1}^3x_{jk}^{(r)}\sigma_r\right),
\end{equation}
\begin{equation}
    M_q=\sum_{r=0}^3 \Tr(M_q\lambda_r)\lambda_r=\frac{1}{2}\left[\Tr(M_q) I+\sum_{r=1}^3\Tr(M_q\sigma_r)\sigma_r\right]=\frac{1}{2}\left(w_{q} I+\sum_{r=1}^3x_{q}^{(r)}\sigma_r\right),
\end{equation}
where we have defined $w_{jk}=\Tr(\mathcal{X}_{jk})$, $x_{jk}^{(r)}=\Tr(\mathcal{X}_{jk}\sigma_r)$, $w_q=\Tr(M_q)=1$, $x_q^{(r)}=\Tr(M_q\sigma_r)$. Let $x_q=(x_q^{(1)},x_q^{(2)},x_q^{(3)})$ as a three-dimensional vectors, from $M_q\geq 0$, we can obtain $\sum_{r=1}^3\left(x_q^{(r)}\right)^2\le 1$, i.e., $\|x_q\|\le 1$, and from $M_q^2=M_q$, we have $\|x_q\|=1$. 
The optimization can thus be reformulated as
\begin{equation}
    \begin{aligned}
        \max_{\{x_q\}}\ &\frac{1}{4}\sum_{jk}\left|w_{jk}\left[\sum_q (-1)^{s_q}\right]+\sum_{r=1}^3x_{jk}^{(r)}\left[\sum_q(-1)^{s_q}x_q^{(r)}\right]\right|^2\\
        \text{subject to}\ &\sum_qx_q=\mathbf{0},\ \|x_q\|=1,
    \end{aligned}
\end{equation}
where $s_q\in\{0,1\},\forall q\in \{1,2\}$. 

When $s_1=s_2=0$ or $s_1=s_2=1$,
\begin{equation}
    \max_{\{\ket{u_q}\}} \|\tilde{S}_{\text{Im}}\|_F^2=\max_{\{x_q\}}\ \frac{1}{4}\sum_{jk}\left|2w_{jk}+\sum_{r=1}^3x_{jk}^{(r)}\left(x_1^{(r)}+x_2^{(r)}\right)\right|^2=\sum_{jk}|w_{jk}|^2.
\end{equation}

When $s_1=0$, $s_2=1$ or $s_1=1$, $s_2=0$,
\begin{equation}\
    \begin{aligned}
        \max_{\{\ket{u_q}\}} \|\tilde{S}_{\text{Im}}\|_F^2 &= \max_{\{x_q\}}\ \frac{1}{4}\sum_{jk}\left|\sum_{r=1}^3x_{jk}^{(r)}\left(x_1^{(r)}-x_2^{(r)}\right)\right|^2\\
        &=\max_{\{x_1\}}\ \frac{1}{4}\sum_{jk}\left|2\sum_{r=1}^3x_{jk}^{(r)}x_1^{(r)}\right|^2=\max_{\{x_1\}}\ \sum_{jk}\sum_{rs}x_{jk}^{(r)}x_1^{(r)}x_{jk}^{(s)}x_1^{(s)}\\
        &=\max_{\{x_1\}}\ \sum_{rs}x_1^{(r)}\left(\sum_{jk}x_{jk}^{(r)}x_{jk}^{(s)}\right)x_1^{(s)}\\
        &=\max_{\{x_1\}}\ x_1^{T} S^{xx}x_1\\
        &=\lambda_{\max}(S^{xx}).
    \end{aligned}
\end{equation}
where we have defined a $3\times 3$ symmetric matrix $S^{xx}$ with its $rs$-th entry as $(S^{xx})_{rs}=\sum_{jk}x_{jk}^{(r)}x_{jk}^{(s)}$, i.e., $S^{xx}=\sum_{jk}x_{jk}x_{jk}^{T}$. $\lambda_{\max}(S^{xx})$ then denotes the largest eigenvalue of the matrix $S^{xx}$.
The optimal $x_1$ is given by the eigenvector of $S^{xx}$ which corresponds to the maximal eigenvalue, which further gives the optimal choice of $\{\ket{u_q}\}$.

Combining the two cases together, we have
\begin{equation}
    \max_{\{\ket{u_q}\}} \|\tilde{S}_{\text{Im}}\|_F^2=\max\left\{\sum_{jk}|w_{jk}|^2,\lambda_{\max}(S^{xx})\right\}.
\end{equation}

By substituting $\rho$ and $X_1=\frac{1}{2}\sigma_x$, $X_2=\frac{1}{2}\sigma_y$, $X_3=\frac{1}{2}\sigma_z$ in the definition of $w_{jk}$, we have
\begin{equation}
    \sum_{jk}|w_{jk}|^2=\sum_{jk} \left|\frac{1}{2i}\Tr(\rho[X_j,X_k])\right|^2=\frac{1}{8}\left(r_x^2+r_y^2+r_z^2\right).
\end{equation}

To calculate $\lambda_{\max}(S^{xx})$, note that for  $\rho=\frac{1}{2}(I+r_x\sigma_x+r_y\sigma_y+r_z\sigma_z)$, we can equivalently write $r_x=\lambda\sin\theta\cos\varphi$, $r_y=\lambda\sin\theta\sin\varphi$, and $r_z=\lambda\cos\theta$ with $|\lambda|\leq 1$.
$\rho$ then has an eigendecomposition as $\rho=\frac{1+\lambda}{2}\ket{\psi_1}\bra{\psi_1}+\frac{1-\lambda}{2}\ket{\psi_2}\bra{\psi_2}$ with $\ket{\psi_1}=\begin{pmatrix}
    e^{-i\varphi}\cos\frac{\theta}{2}\\
    \sin\frac{\theta}{2}
\end{pmatrix}$, $\ket{\psi_2}=\begin{pmatrix}
    e^{-i\varphi}\sin\frac{\theta}{2}\\
    -\cos\frac{\theta}{2}
\end{pmatrix}$ and $\inp{\psi_1}{\psi_2}=0$.
Thus
\begin{equation}\label{eq:rhosqrt}
    \begin{aligned}
        \sqrt{\rho}&=\sqrt{\frac{1+\lambda}{2}}\ket{\psi_1}\bra{\psi_1}+\sqrt{\frac{1-\lambda}{2}}\ket{\psi_2}\bra{\psi_2}
        =\frac{1}{2}\begin{pmatrix}
            p+q\cos\theta & qe^{-i\varphi}\sin\theta\\
            qe^{i\varphi}\sin\theta & p-q\cos\theta
        \end{pmatrix},
    \end{aligned}
\end{equation}
where we have defined $p=\sqrt{\frac{1+\lambda}{2}}+\sqrt{\frac{1-\lambda}{2}}$, $q=\sqrt{\frac{1+\lambda}{2}}-\sqrt{\frac{1-\lambda}{2}}$.
Then we have $x_{11}=x_{22}=x_{33}=\mathbf{0}$, and
\begin{equation}
    \begin{aligned}
        x_{12}&=-x_{21}=\frac{1}{8}\begin{pmatrix}
            q^2\cos\varphi\sin 2\theta\\
            q^2\sin\varphi\sin 2\theta\\
            p^2+q^2\cos 2\theta
        \end{pmatrix},\quad
        x_{13}=-x_{31}=-\frac{1}{8}\begin{pmatrix}
            q^2\sin 2\varphi\sin^2\theta\\
            p^2-q^2(\cos^2\theta+\cos 2\varphi\sin^2\theta)\\
            q^2\sin\varphi\sin 2\theta
        \end{pmatrix}\\
        x_{23}&=-x_{32}=\frac{1}{8}\begin{pmatrix}
            p^2-q^2(\cos^2\theta-\cos 2\varphi\sin^2\theta)\\
            q^2\sin 2\varphi\sin^2\theta\\
            q^2\cos\varphi\sin 2\theta
        \end{pmatrix}.
    \end{aligned}
\end{equation}
A few lines of calculation then gives
\begin{equation}
    S^{xx}=\sum_{j,k=1}^3 x_{jk}x_{jk}^{T}=\frac{1}{8}
    \begin{pmatrix}
        1-r_y^2-r_z^2 & r_xr_y & r_zr_x\\
        r_xr_y & 1-r_z^2-r_x^2 & r_yr_z\\
        r_zr_x & r_yr_z & 1-r_x^2-r_y^2
    \end{pmatrix},
\end{equation}
where we have represented $\lambda$, $\theta$, $\varphi$ with the original notations $r_x$, $r_y$, $r_z$.
The largest eigenvalue of $S^{xx}$ is $\frac{1}{8}$, with corresponding eigenvectors $x_1=\frac{1}{\sqrt{r_x^2+r_y^2+r_z^2}}(r_x\ r_y\ r_z)^{T}$ for $r_x^2+r_y^2+r_z^2\neq 0$. In this case the optimal choice of $\ket{u_1}$ is the pure state with $x_1$ as its bloch vector, and the optimal $\ket{u_2}$ is the orthogonal complement of $\ket{u_1}$.

Thus for any $\rho=\frac{1}{2}(I+r_x\sigma_x+r_y\sigma_y+r_z\sigma_z)$, we have $\sum_{jk}|w_{jk}|^2=\frac{1}{8}\left(r_x^2+r_y^2+r_z^2\right)\leq \frac{1}{8}$ and $\lambda_{\max}(S^{xx})=\frac{1}{8}$, Eq.(\ref{eq:example1_Sim}) and Eq.(\ref{eq:optqubit}) then lead to 
\begin{equation}
    \sum_{j=1}^{3}\epsilon_j^2\geq \frac{1}{4}\left(\sqrt{4\max_{\{\ket{u_q}\}}\|\tilde{S}_{\text{Im}}\|_{F}+1}-1\right)^2=\frac{1}{4}\left(\sqrt{4\times\frac{1}{2\sqrt{2}}+1}-1\right)^2=\frac{1}{4}\left(\sqrt{\sqrt{2}+1}-1\right)^2.
\end{equation}
This bounds the minimal error of approximately implementing $\{\frac{1}{2}\sigma_x, \frac{1}{2}\sigma_y,\frac{1}{2}\sigma_z\}$ with a single measurement on $\rho$.

\noindent
\textit{Bounds obtained from $\mathcal{E}_A$ and the SDP $\mathcal{E}_0$:}

Next we calculate the analytical bound $\mathcal{E}_A$ for each pairs of the observables.
For the ease of comparison, here we choose specific values of $\theta=\frac{\pi}{2}$, $\phi=0$, such that $\rho=\frac{1}{2}(I+\lambda\sigma_x)$.
For the simultaneous measurement of $(X_1,X_2)$, it is straightforward to obtain that
\begin{equation}
    \sqrt{\rho}[X_1,X_2]\sqrt{\rho}=\frac{i}{4}\sqrt{1-\lambda^2}\begin{pmatrix}1 & 0\\0 & -1\end{pmatrix}.
\end{equation}
By choosing $\{\ket{u_q}\}$ as the eigenvectors of $\sqrt{\rho}[X_1,X_2]\sqrt{\rho}$, i.e., $\{\ket{u_q}\}=\{\ket{0},\ket{1}\}$, the bound $\mathcal{E}_A$ can be directly calculated as
\begin{equation}
    \epsilon_1^2+\epsilon_2^2\geq \mathcal{E}_A(X_1,X_2)= \frac{1}{4}(1-\lambda^2).
\end{equation}
Similarly for the simultaneous measurement of $(X_2,X_3)$ and $(X_3,X_1)$, the bound $\mathcal{E}_A$ can be calculated as
\begin{equation}
    \begin{aligned}
        \epsilon_2^2+\epsilon_3^2&\geq \mathcal{E}_A(X_2,X_3)= \frac{1}{4},\\
        \epsilon_3^2+\epsilon_1^2&\geq \mathcal{E}_A(X_3,X_1)= \frac{1}{4}(1-\lambda^2).
    \end{aligned}
\end{equation}
For the simultaneous measurement of $(X_1,X_2,X_3)$, a direct summation of the bounds for each pairs of the observables gives
\begin{equation}
    \epsilon_1^2+\epsilon_2^2+\epsilon_3^2\geq\frac{1}{2}(\mathcal{E}_A(X_1,X_2)+\mathcal{E}_A(X_2,X_3)+\mathcal{E}_A(X_3,X_1))=\frac{1}{8}+\frac{1}{4}(1-\lambda^2).
\end{equation}

The SDP $\mathcal{E}_0$ also gives a bound for $(X_1,X_2,X_3)$ directly, which can be computed efficiently.
In Fig. \ref{fig:example1-qubit}, we make comparisons among the three bounds, (i) bound that obtained from $\Tr(S_{\text{Re}}^{-1}Q_{\text{Re}})$; (ii) the simple summation of the analytical bound $\mathcal{E}_A$ for all pairs and (iii) $\mathcal{E}_0$ that can be computed with SDP. It can be seen that the SDP bound is the tightest among the three bounds.
We note that although the simple summation of $\mathcal{E}_A$ for each pairs of the observables is tighter than $\Tr(S_{\text{Re}}^{-1}Q_{\text{Re}})$ in this example, there are cases the bound obtained from $\Tr(S_{\text{Re}}^{-1}Q_{\text{Re}})$ can be tighter, as we have shown in the previous section.
\begin{figure}
    \centering
    \includegraphics[width=0.4\textwidth]{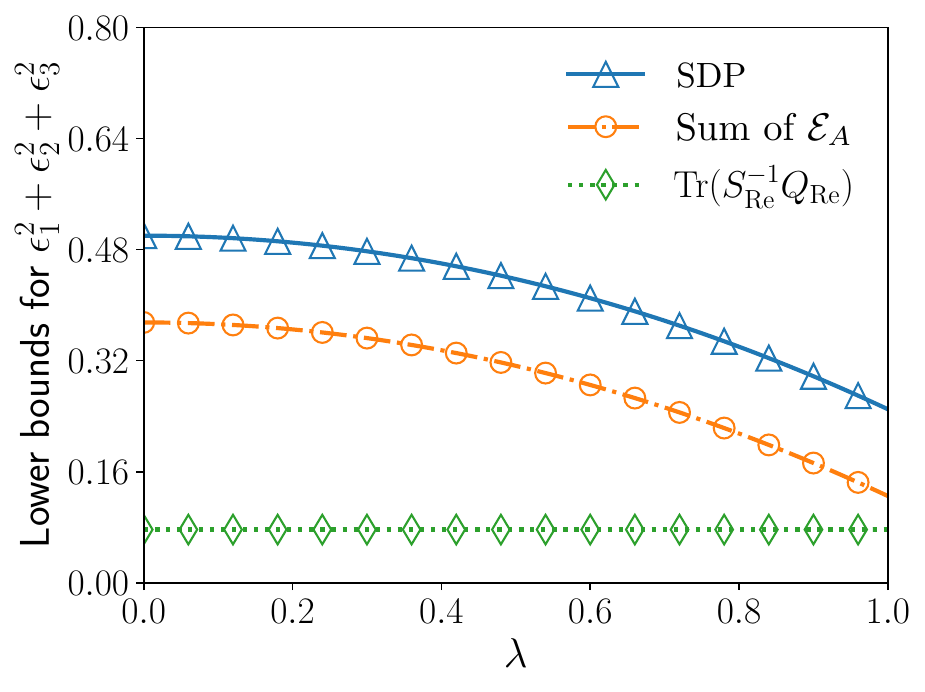}
    \caption{Lower bounds for the errors of the simultaneous measurement of $(X_1,X_2,X_3)$ given by (i) bound that obtained from $\Tr(S_{\text{Re}}^{-1}Q_{\text{Re}})$; (ii) the simple summation of the analytical bound $\mathcal{E}_A$ for all pairs and (iii) $\mathcal{E}_0$ that can be computed with SDP.}
    \label{fig:example1-qubit}
\end{figure}

\subsection{Example: Tradeoff relation for the estimation of multiple parameters in a qubit}
    
\noindent
\textit{Bounds obtained from $\Tr(S_{\text{Re}}^{-1}Q_{\text{Re}})$:}

The error-tradeoff relation for multiple observables can be used to obtain tradeoff relation in multiparameter quantum estimation. Here we consider the simultaneous estimation of the three parameters $(\lambda, \theta, \varphi)$ in $\rho=\frac{1}{2} (I+r_x\sigma_x+r_y\sigma_y+r_z\sigma_z)$, where $r_x=\lambda\sin\theta\cos\varphi$, $r_y=\lambda\sin\theta\sin\varphi$, and $r_z=\lambda\cos\theta$, here $|\lambda|\leq 1$.

For any measurement on $\rho$, the classical Fisher information matrix, $F_C$, is bounded as 
\begin{equation}\label{apdx:precision_bound}
    \Tr[F_Q^{-1}F_C]\leq 3-\left(\sqrt{\|F_Q^{-\frac{1}{2}}\tilde{S}_{\text{Im}}F_Q^{-\frac{1}{2}}\|_F+1}-1\right)^2,
\end{equation}
The SLDs for $(\lambda, \theta, \varphi)$ are given by
\begin{equation}
    L_{\lambda}=\frac{1}{1-\lambda^2}(\vec{n}\cdot\vec{\sigma}-\lambda I),\quad
    L_{\theta}=\lambda(\partial_{\theta}\vec{n}\cdot\vec{\sigma}),\quad
    L_{\varphi}=\lambda(\partial_{\varphi}\vec{n}\cdot\vec{\sigma}),
\end{equation}
here $\vec{n}=(\sin\theta\cos\varphi, \sin\theta\sin\varphi, \cos\theta)$, $\vec{\sigma}=(\sigma_x, \sigma_y, \sigma_z)$,
and the quantum Fisher information matrix is
\begin{equation}
    F_Q=\begin{pmatrix}
        \frac{1}{1-\lambda^2} & 0 & 0\\
        0 & \lambda^2 & 0\\
        0 & 0 & \lambda^2\sin^2\theta
    \end{pmatrix}.
\end{equation}
We can use Eq.(\ref{eq:optqubit}) to maximize $\tilde{S}_{\text{Im}}$ over $\{\ket{u_q}\}$. Note that under the reparameterization which makes $F_Q= I$, we have $F_Q^{-\frac{1}{2}}\tilde{S}_{\text{Im}}F_Q^{-\frac{1}{2}}=\tilde{S}_{\text{Im}}$. The SLD operators under this reparametrization are given by $\tilde{L}_{k}=\sum_{j}(F_Q^{-\frac{1}{2}})_{jk}L_j$, i.e.,
\begin{equation}
    \tilde{L}_{\lambda}=\frac{1}{\sqrt{1-\lambda^2}}(\vec{n}\cdot\vec{\sigma}-\lambda I),\quad
    \tilde{L}_{\theta}=\partial_{\theta}\vec{n}\cdot\vec{\sigma},\quad
    \tilde{L}_{\varphi}=\frac{1}{|\sin\theta|}(\partial_{\varphi}\vec{n}\cdot\vec{\sigma}).
\end{equation}
From Eq.(\ref{eq:rhosqrt}) we can obtain $\sqrt{\rho}=\frac{1}{2}(p I+q\vec{n}\cdot{\sigma})$, thus we have
\begin{equation}
    \begin{aligned}
        \mathcal{X}_{12}=\frac{1}{2i}\sqrt{\rho}[\tilde{L}_{\lambda},\tilde{L}_{\theta}]\sqrt{\rho}&=\frac{1}{2\sin\theta}(\partial_{\varphi}\vec{n}\cdot\vec{\sigma}),\\
        \mathcal{X}_{23}=\frac{1}{2i}\sqrt{\rho}[\tilde{L}_{\theta},\tilde{L}_{\varphi}]\sqrt{\rho}&=\frac{|\sin\theta|}{2\sin\theta}\left(\lambda I+\vec{n}\cdot\vec{\sigma}\right),\\
        \mathcal{X}_{31}=\frac{1}{2i}\sqrt{\rho}[\tilde{L}_{\varphi},\tilde{L}_{\lambda}]\sqrt{\rho}&=\frac{|\sin\theta|}{2\sin\theta}\left(\partial_{\theta}\vec{n}\cdot\vec{\sigma}\right),
    \end{aligned}
\end{equation}
which gives 
\begin{equation}
    x_{12}=-x_{21}=\frac{1}{\sin\theta}\partial_{\varphi}\vec{n},\quad
    x_{23}=-x_{32}=\frac{|\sin\theta|}{\sin\theta}\vec{n},\quad
    x_{13}=-x_{31}=-\frac{|\sin\theta|}{\sin\theta}\partial_{\theta}\vec{n}.
\end{equation}
Note that $x_{11}=x_{22}=x_{33}=\mathbf{0}$, it is then easy to obtain $S^{xx}=\sum_{j,k=1}^3 x_{jk}x_{jk}^{T}=2 I$. In this case $\{\ket{u_q}\}$ can be chosen arbitrarily and we always have $\lambda_{\max}(S^{xx})=2$.
It is also straightforward to obtain $\sum_{jk}|w_{jk}|^2=\sum_{jk}|\Tr(\mathcal{X}_{jk})|^2=2\lambda^2\leq 2$, thus
\begin{equation}
    \max_{\{\ket{u_q}\}} \|F_Q^{-\frac{1}{2}}\tilde{S}_{\text{Im}}F_Q^{-\frac{1}{2}}\|_F=\sqrt{2}.
\end{equation}
This leads to a tradeoff relation for the estimation of $\{\lambda, \theta,\varphi\}$ as
\begin{equation}
    \Tr[F_Q^{-1}F_C]\leq 3-\left(\sqrt{\max_{\{\ket{u_q}\}} \|F_Q^{-\frac{1}{2}}\tilde{S}_{\text{Im}}F_Q^{-\frac{1}{2}}\|_F+1}-1\right)^2=3-\left(\sqrt{\sqrt{2}+1}-1\right)^2.
\end{equation}

\noindent
\textit{Bounds obtained from $\mathcal{E}_A$ and the SDP $\mathcal{E}_0$:}

By doing summation over the analytical bound $\mathcal{E}_A$ for each pairs of $(\tilde{L}_j,\tilde{L}_k)$, we directly have
\begin{equation}
    \Tr[F_Q^{-1}F_C]\leq n-\frac{1}{2(n-1)}\|C\|_F^2,
\end{equation}
with the $jk$-th entry of $C$ given as $(C)_{jk}=\sqrt{\mathcal{E}_A(\tilde{L}_j,\tilde{L}_k)}$.
Again for the ease of comparison, we choose specific values of $\theta=\frac{\pi}{2}$, $\phi=0$.
Substituting $\tilde{L}_{\lambda}$, $\tilde{L}_{\theta}$, $\tilde{L}_{\varphi}$ in $\mathcal{E}_A(\tilde{L}_j,\tilde{L}_k)$, we directly have $\mathcal{E}_A(\tilde{L}_{\lambda},\tilde{L}_{\theta})=\mathcal{E}_A(\tilde{L}_{\theta},\tilde{L}_{\varphi})=\mathcal{E}_A(\tilde{L}_{\varphi},\tilde{L}_{\lambda})=1$, and
\begin{equation}
    \Tr[F_Q^{-1}F_C]\leq 3-\frac{1}{4}\|C\|_F^2=\frac{3}{2}.
\end{equation}
By substituting $\tilde{L}_{\lambda}$, $\tilde{L}_{\theta}$, $\tilde{L}_{\varphi}$, the SDP $\mathcal{E}_0$ also gives a bound directly as $\Tr[F_Q^{-1}F_C]\leq n-\mathcal{E}_0=1$, which is the tightest among all bounds and coincides with the Gill-Massar bound for qubit states.

\subsection{Example: Tradeoff relation under collective measurement}

While the tradeoff relation is obtained under local measurements, it can also be used to obtain tradeoff relations for collective measurements on $p$ copies of the states by simply replacing $\rho$ with $\rho_{x}^{\otimes p}$.  
As an example we consider the simultaneous estimation of the three parameters $(\lambda,\theta,\varphi)$ in a state $\rho=\frac{1}{2} (I+r_x\sigma_x+r_y\sigma_y+r_z\sigma_z)$, where $r_x=\lambda\sin\theta\cos\varphi$, $r_y=\lambda\sin\theta\sin\varphi$, and $r_z=\lambda\cos\theta$, $|\lambda|\leq 1$. But different from previous section, here we also allow collective measurement on two-copies of the state, $\rho\otimes\rho$.

To obtain the tradeoff relation under the collective measurement on 2-copies of the state, we can simply treat $\rho\otimes\rho$ as a larger single state. Its SLDs are given by $L_{j2}=L_j\otimes I+ I\otimes L_j$ with $j=\lambda,\theta,\varphi$ and the
quantum Fisher information matrix is $F_{Q2}=2F_{Q}$ with $F_Q$ as the quantum Fisher information matrix of a single $\rho$.
    
\noindent
\textit{Bounds obtained from $\Tr(S_{\text{Re}}^{-1}Q_{\text{Re}})$:}

If we choose $\{\ket{u_q}\}$ as the computational basis with $\ket{u_0}=\ket{00}$, $\ket{u_1}=\ket{01}$, $\ket{u_2}=\ket{10}$, $\ket{u_3}=\ket{11}$, the corresponding $S_{u_q,\text{Im}2}$ are given by
\begin{equation}
    \begin{aligned}
        S_{u_0,\text{Im}2}&=
        \begin{pmatrix}
            0 & 0 & -\frac{\lambda(1+\lambda\cos\theta)\sin^2\theta}{2\sqrt{1-\lambda^2}}\\
            0 & 0 & -\frac{1}{2}\lambda^2\sin\theta(1+\lambda\cos\theta)(\cos\theta+\lambda)\\
            \frac{\lambda(1+\lambda\cos\theta)\sin^2\theta}{2\sqrt{1-\lambda^2}} & \frac{1}{2}\lambda^2\sin\theta(1+\lambda\cos\theta)(\cos\theta+\lambda) & 0
        \end{pmatrix}\\
        S_{u_1,\text{Im}2}&=S_{u_2,\text{Im}2}=
        \begin{pmatrix}
            0 & 0 & \frac{\lambda^2\cos\theta\sin^2\theta}{2\sqrt{1-\lambda^2}}\\
            0 & 0 & -\frac{1}{2}\lambda^3\sin^3\theta\\
            -\frac{\lambda^2\cos\theta\sin^2\theta}{2\sqrt{1-\lambda^2}} & \frac{1}{2}\lambda^3\sin^3\theta & 0
        \end{pmatrix}\\
        S_{u_3,\text{Im}2}&=
        \begin{pmatrix}
            0 & 0 & \frac{\lambda(1-\lambda\cos\theta)\sin^2\theta}{2\sqrt{1-\lambda^2}}\\
            0 & 0 & \frac{1}{2}\lambda^2\sin\theta(1-\lambda\cos\theta)(\cos\theta-\lambda)\\
            -\frac{\lambda(1-\lambda\cos\theta)\sin^2\theta}{2\sqrt{1-\lambda^2}} & -\frac{1}{2}\lambda^2\sin\theta(1-\lambda\cos\theta)(\cos\theta-\lambda) & 0
        \end{pmatrix}
    \end{aligned}
\end{equation}
The optimal $\|F_{Q2}^{-\frac{1}{2}}\tilde{S}_{\text{Im}2}F_{Q2}^{-\frac{1}{2}}\|_F$ is thus given by the maximum of the following values,
\begin{equation}
    \begin{aligned}
        \|F_{Q2}^{-\frac{1}{2}}(S_{u_0,\text{Im}2}+S_{u_1,\text{Im}2}+S_{u_2,\text{Im}2}+S_{u_3,\text{Im}2})F_{Q2}^{-\frac{1}{2}}\|_F&=\sqrt{2}|\lambda|,\\
        \|F_{Q2}^{-\frac{1}{2}}(S_{u_0,\text{Im}2}+S_{u_1,\text{Im}2}^{T}+S_{u_2,\text{Im}2}^{T}+S_{u_3,\text{Im}2})F_{Q2}^{-\frac{1}{2}}\|_F&=\sqrt{2}|\lambda\cos\theta|,\\
        \|F_{Q2}^{-\frac{1}{2}}(S_{u_0,\text{Im}2}+S_{u_1,\text{Im}2}+S_{u_2,\text{Im}2}+S_{u_3,\text{Im}2}^{T})F_{Q2}^{-\frac{1}{2}}\|_F&=\sqrt{\frac{1}{2}(1+\lambda^2)(1+\lambda^2\cos^2\theta)+\lambda^3\cos\theta\sin^2\theta},\\
        \|F_{Q2}^{-\frac{1}{2}}(S_{u_0,\text{Im}2}+S_{u_1,\text{Im}2}^{T}+S_{u_2,\text{Im}2}^{T}+S_{u_3,\text{Im}2}^{T})F_{Q2}^{-\frac{1}{2}}\|_F&=\sqrt{\frac{1}{2}(1+\lambda^2)(1+\lambda^2\cos^2\theta)-\lambda^3\cos\theta\sin^2\theta}.
    \end{aligned}
\end{equation}
By simply replacing $F_Q$ and $\tilde{S}_{\text{Im}}$ with $F_{Q2}$ and $\tilde{S}_{\text{Im}2}$ in Eq.(\ref{apdx:precision_bound}), we can get
\begin{equation}
    \begin{aligned}
        \frac{1}{2}\Tr[F_Q^{-1}F_{C2}]&\leq 3-(\sqrt{\|F_{Q2}^{-\frac{1}{2}}\tilde{S}_{\text{Im}2}F_{Q2}^{-\frac{1}{2}}\|_F+1}-1)^2\\
        &= 3-\left(\sqrt{\max\left\{\sqrt{2}|\lambda|, \sqrt{\frac{1}{2}(1+\lambda^2)(1+\lambda^2\cos^2\theta)+|\lambda^3\cos\theta\sin^2\theta|}\right\}+1}-1\right)^2,
    \end{aligned}
\end{equation}
here $F_{C2}$ denotes the classical Fisher information matrix under the collective measurement on $\rho\otimes \rho$, $\frac{1}{2}F_{C2}$ can be regarded as the average Fisher information on each $\rho$. This bounds the achievable precision limit under the collective measurements on 2 copies of the quantum state.
Specifically for $\theta=\frac{\pi}{2}$, we have
\begin{equation}
    \frac{1}{2}\Tr[F_Q^{-1}F_{C2}]\leq 3-\left(\sqrt{\max\left\{\sqrt{2}|\lambda|, \sqrt{\frac{1}{2}(1+\lambda^2)}\right\}+1}-1\right)^2.
\end{equation}

\noindent
\textit{Bounds obtained from $\mathcal{E}_A$ and the SDP $\mathcal{E}_0$:}

Under the reparameterization that $F_{Q2}=2F_Q=I$, by replacing $(X_j,X_k)$ with $(\tilde{L}_{j2},\tilde{L}_{k2})$, the bound $\mathcal{E}_A$ gives
\begin{equation}
    \epsilon_j^2+\epsilon_k^2=1-(F_{C2})_{jj}+1-(F_{C2})_{kk}\geq \mathcal{E}_A(\tilde{L}_{j2},\tilde{L}_{k2}).
\end{equation}
By doing summations over each pairs of $(j,k)$, we directly have
\begin{equation}
    \frac{1}{2}\Tr[F_Q^{-1}F_{C2}]\leq n-\frac{1}{2(n-1)}\|C_2\|_F^2,
\end{equation}
with the $jk$-th entry of $C_2$ given as $(C_2)_{jk}=\sqrt{\mathcal{E}_A(\tilde{L}_{j2},\tilde{L}_{k2})}$.
Again for the ease of comparison, we choose specific values of $\theta=\frac{\pi}{2}$, $\phi=0$.
Substituting $\tilde{L}_{\lambda 2}$, $\tilde{L}_{\theta 2}$, $\tilde{L}_{\varphi 2}$ in $\mathcal{E}_A(\tilde{L}_{j2},\tilde{L}_{k2})$, we can directly obtain a bound.
By substituting $\tilde{L}_{\lambda 2}$, $\tilde{L}_{\theta 2}$, $\tilde{L}_{\varphi 2}$, the SDP $\mathcal{E}_0$ also gives a bound directly as $\frac{1}{2}\Tr[F_Q^{-1}F_{C2}]\leq n-\mathcal{E}_0$.
Fig. \ref{fig:example2-twoqubit} then illustrate the comparison among the three bounds with $\lambda$ varies from 0 to 1.

\begin{figure}
    \centering
    \includegraphics[width=0.4\textwidth]{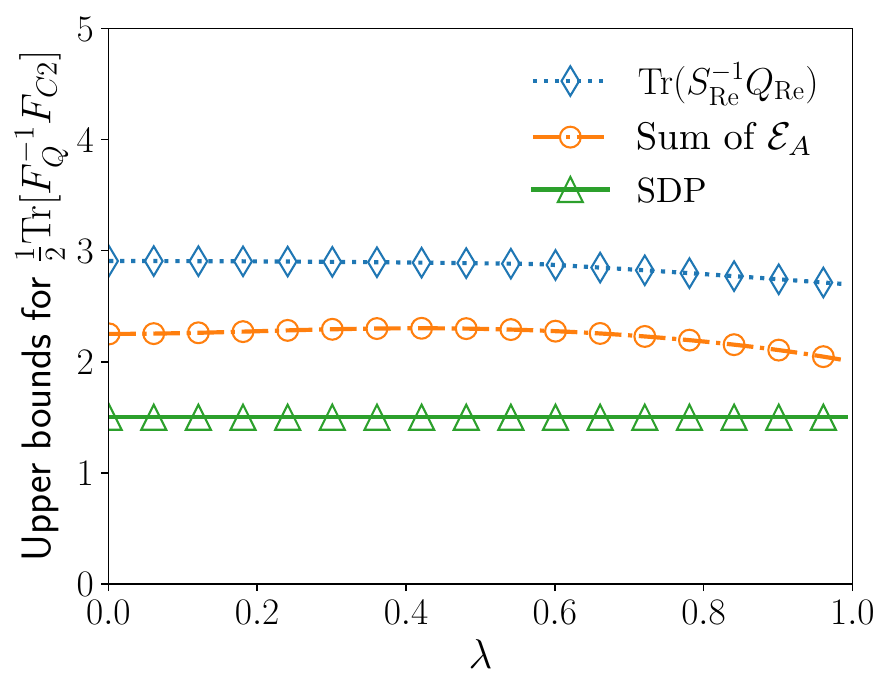}
    \caption{Upper bounds for $\frac{1}{2}\Tr[F_Q^{-1}F_{C2}]$ given by (i) bound that obtained from $\Tr(S_{\text{Re}}^{-1}Q_{\text{Re}})$; (ii) the simple summation of the analytical bound $\mathcal{E}_A$ for all pairs and (iii) $\mathcal{E}_0$ that can be computed with SDP.}
    \label{fig:example2-twoqubit}
\end{figure}

\subsection{Example: Tradeoff relation for the estimation of multiple parameters in three qubits}

We provide an example with three qubits to show that the analytical bound obtained from $\Tr(S_{\text{Re}}^{-1}Q_{\text{Re}})$ can be tighter than the simple summation over the bounds $\mathcal{E}_A$ for each pairs of observables. 

We consider a parameterized pure state $\ket{\psi}=\sin\theta_0\ket{\psi_1}+\cos\theta_0 e^{i\phi_0} \ket{\psi_2}$, where
\begin{equation}
    \begin{aligned}
        \ket{\psi_1}=\sin\theta_1\sin\theta_2\ket{001}+\sin\theta_1\cos\theta_2e^{i\phi_1}\ket{010}+\cos\theta_1e^{i\phi_2}\ket{100},\\
        \ket{\psi_2}=\sin\theta_3\sin\theta_4\ket{110}+\sin\theta_3\cos\theta_4e^{i\phi_3}\ket{101}+\cos\theta_3e^{i\phi_4}\ket{011}.
    \end{aligned}
\end{equation}

We first consider the estimation of $\{\theta_0,\theta_1,\theta_2,\phi_1,\phi_2\}$ in $\ket{\psi}=\sin\theta_0\ket{\psi_1}+\cos\theta_0 e^{i\phi_0} \ket{\psi_2}$. In this case $F_Q=\begin{pmatrix}
    F_{Q1} & \mathbf{0}\\
    \mathbf{0} & F_{Q2}
\end{pmatrix}$ with
\begin{equation}
    \begin{aligned}
        F_{Q1}&=\operatorname{diag}\{4,4\sin^2\theta_0,4\sin^2\theta_0\sin^2\theta_1\},\\
        F_{Q2}&=\begin{pmatrix}
            4\sin^2\theta_0\sin^2\theta_1\cos^2\theta_2(1-\sin^2\theta_0\sin^2\theta_1\cos^2\theta_2) & -\sin^4 \theta_0\sin^2 2\theta_1\cos^2\theta_2\\
            -\sin^4 \theta_0\sin^2 2\theta_1\cos^2\theta_2 & 4\sin^2\theta_0\cos^2\theta_1(1-\sin^2\theta_0\cos^2\theta_1)
        \end{pmatrix},
    \end{aligned}
\end{equation}
and $S_{\text{Im}}=\begin{pmatrix}
    \mathbf{0} & S_{\text{Im}1}^{\dagger}\\
    S_{\text{Im}1} & \mathbf{0}
\end{pmatrix}$ where
\begin{equation}
    S_{\text{Im}1}=2\begin{pmatrix}
        \sin 2\theta_0\sin^2\theta_1\cos^2\theta_2 & \sin^2\theta_0\sin 2\theta_1\cos^2\theta_2 & -\sin^2\theta_0\sin^2 \theta_1\sin 2\theta_2\\
        \sin 2\theta_0\cos^2\theta_1 & -\sin^2\theta_0\sin 2\theta_1 & 0 
    \end{pmatrix}.
\end{equation}
Thus
\begin{equation}
    F_Q^{-\frac{1}{2}}S_{\text{Im}}F_Q^{-\frac{1}{2}}=
    \begin{pmatrix}
        F_{Q1}^{-\frac{1}{2}} & \mathbf{0}\\
        \mathbf{0} & F_{Q2}^{-\frac{1}{2}}
    \end{pmatrix}
    \begin{pmatrix}
        \mathbf{0} & S_{\text{Im}1}^{\dagger}\\
        S_{\text{Im}1} & \mathbf{0}
    \end{pmatrix}
    \begin{pmatrix}
        F_{Q1}^{-\frac{1}{2}} & \mathbf{0}\\
        \mathbf{0} & F_{Q2}^{-\frac{1}{2}}
    \end{pmatrix}=
    \begin{pmatrix}
        \mathbf{0} & F_{Q1}^{-\frac{1}{2}}S_{\text{Im}1}^{\dagger}F_{Q2}^{-\frac{1}{2}}\\
        F_{Q2}^{-\frac{1}{2}} S_{\text{Im}1} F_{Q1}^{-\frac{1}{2}} & \mathbf{0}
    \end{pmatrix},
\end{equation}
and
\begin{equation}
    \|F_Q^{-\frac{1}{2}}S_{\text{Im}}F_Q^{-\frac{1}{2}}\|_F^2=2\|F_{Q2}^{-\frac{1}{2}} S_{\text{Im}1} F_{Q1}^{-\frac{1}{2}}\|_F^2=2\Tr\left(F_{Q2}^{-1}S_{\text{Im}1} F_{Q1}^{-1}S_{\text{Im}1}^{\dagger}\right)=4.
\end{equation}
We then have
\begin{equation}
    \Tr[F_Q^{-1}F_C]\leq n-\left(\sqrt{\|F_Q^{-\frac{1}{2}}\tilde{S}_{\text{Im}}F_Q^{-\frac{1}{2}}\|_F+1}-1\right)^2=5-(\sqrt{3}-1)^2\approx 4.464.
\end{equation}
Compared with the simple summation of the analytical bound for pairs of the observables, which cannot be tighter than $n-\frac{1}{2(n-1)}\|F_Q^{-\frac{1}{2}}S_{\text{Im}}F_Q^{-\frac{1}{2}}\|_F^2=5-\frac{1}{8}\times 4=\frac{9}{2}$, the bound obtained from $\Tr(S_{\text{Re}}^{-1}Q_{\text{Re}})$ is strictly tighter.
The tight bound can be obtained by computing the SDP $\mathcal{E}_0$.
Specifically, for $\theta_j=\pi/4$, $1\leq j\leq 4$ and $\phi_k=\pi/4$, $0\leq k\leq 4$, we compare the three bounds with $\theta_0$ varies from $0$ to $\pi/2$.
As illustrated in Fig. \ref{fig:example4-threequbit}(a), the analytical bound obtained from $\Tr(S_{\text{Re}}^{-1}Q_{\text{Re}})$ is tighter than the simple summation of $\mathcal{E}_A$ for all pairs, while the tight bound is given by $\mathcal{E}_0$ that can be computed with SDP, which is approximately $\Tr[F_Q^{-1}F_C]\leq 3$.

\begin{figure}
    \centering
    \includegraphics[width=0.8\textwidth]{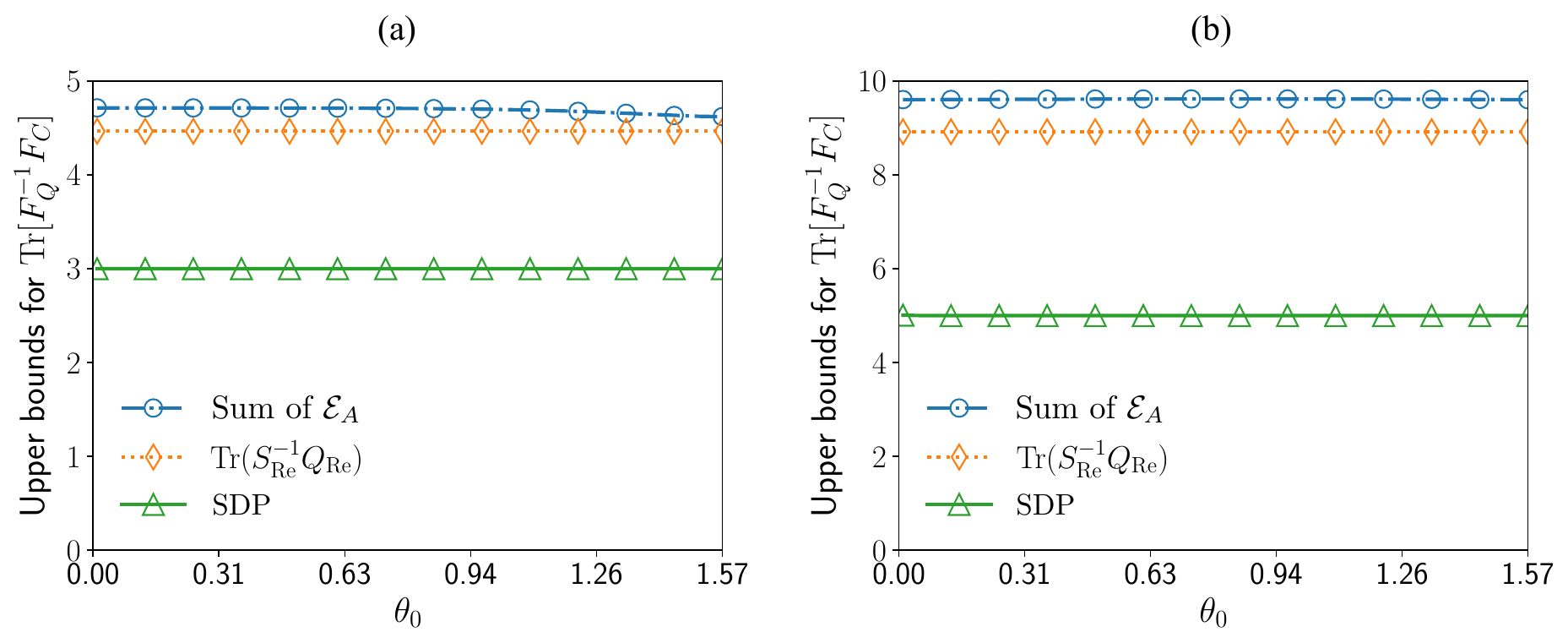}
    \caption{Upper bounds for $\Tr[F_Q^{-1}F_{C}]$ of (a) estimation of $n=5$ parameters $\{\theta_0,\theta_1,\theta_2,\phi_1,\phi_2\}$, (b) estimation of $n=10$ parameters $\{\theta_0,\theta_1,\theta_2,\theta_3,\theta_4,\phi_0,\phi_1,\phi_2,\phi_3,\phi_4\}$.
        Here the bounds are given by (i) bound that obtained from $\Tr(S_{\text{Re}}^{-1}Q_{\text{Re}})$; (ii) the simple summation of the analytical bound $\mathcal{E}_A$ for all pairs and (iii) $\mathcal{E}_0$ that can be computed with SDP.}
    \label{fig:example4-threequbit}
\end{figure}

For the estimation of the full set, $\{\theta_0,\theta_1,\theta_2,\theta_3,\theta_4,\phi_0,\phi_1,\phi_2,\phi_3,\phi_4\}$, we can obtain the bound numerically. 
Specifically, for $\theta_j=\pi/4$ and $\phi_k=\pi/4$, $0\leq j,k\leq 4$, we have $F_Q=\begin{pmatrix}
    F_{Q1} & \mathbf{0}\\
    \mathbf{0} & F_{Q2}
\end{pmatrix}$ where
\begin{equation}
    F_{Q1}=\operatorname{diag}\{4,2,1,2,1\},
    F_{Q2}=\begin{pmatrix}
        1 & -\frac{1}{4} & -\frac{1}{2} & \frac{1}{4} & \frac{1}{2}\\
        -\frac{1}{4} & \frac{7}{16} & -\frac{1}{8} & -\frac{1}{16} & -\frac{1}{8}\\
        -\frac{1}{2} & -\frac{1}{8} & \frac{3}{4} & -\frac{1}{8} & -\frac{1}{4}\\
        \frac{1}{4} & -\frac{1}{16} & -\frac{1}{8} & \frac{7}{16} & -\frac{1}{8}\\
        \frac{1}{2} & -\frac{1}{8} & -\frac{1}{4} & -\frac{1}{8} & \frac{3}{4}
    \end{pmatrix},
\end{equation}
and $S_{\text{Im}}=\begin{pmatrix}
    \mathbf{0} & S_{\text{Im}1}^{\dagger}\\
    S_{\text{Im}1} & \mathbf{0}
\end{pmatrix}$ with
\begin{equation}
    S_{\text{Im}1}=\begin{pmatrix}
        -2 & 0 & 0 & 0 & 0\\
        \frac{1}{2} & \frac{1}{2} & -\frac{1}{2} & 0 & 0\\
        1 & -1 & 0 & 0 & 0\\
        -\frac{1}{2} & 0 & 0 & \frac{1}{2} & -\frac{1}{2}\\
        -1 & 0 & 0 & -1 & 0
    \end{pmatrix}.
\end{equation}
We then have
\begin{equation}
    \|F_Q^{-\frac{1}{2}}S_{\text{Im}}F_Q^{-\frac{1}{2}}\|_F^2=2\|F_{Q2}^{-\frac{1}{2}} S_{\text{Im}1} F_{Q1}^{-\frac{1}{2}}\|_F^2=2\Tr\left(F_{Q2}^{-1}S_{\text{Im}1} F_{Q1}^{-1}S_{\text{Im}1}^{\dagger}\right)=10,
\end{equation}
and
\begin{equation}
    \Tr[F_Q^{-1}F_C]\leq n-\left(\sqrt{\|F_Q^{-\frac{1}{2}}\tilde{S}_{\text{Im}}F_Q^{-\frac{1}{2}}\|_F+1}-1\right)^2=10-\left(\sqrt{\sqrt{10}+1}-1\right)^2\approx 8.918.
\end{equation}
Compared with the simple summation of the analytical bound for pairs of the observables, which cannot be tighter than $n-\frac{1}{2(n-1)}\|F_Q^{-\frac{1}{2}}S_{\text{Im}}F_Q^{-\frac{1}{2}}\|_F^2=10-\frac{1}{18}\times 10=\frac{85}{9}\approx 9.444$, the bound obtained from $\Tr(S_{\text{Re}}^{-1}Q_{\text{Re}})$ is strictly tighter.
The tight bound can be obtained by computing the SDP $\mathcal{E}_0$.
For $\theta_j=\pi/4$, $1\leq j\leq 4$ and $\phi_k=\pi/4$, $0\leq k\leq 4$, we compare the three bounds with $\theta_0$ varies from $0$ to $\pi/2$.
As illustrated in Fig. \ref{fig:example4-threequbit}(b), the analytical bound obtained from $\Tr(S_{\text{Re}}^{-1}Q_{\text{Re}})$ is tighter than the simple summation of $\mathcal{E}_A$ for all pairs, while the tight bound is given by $\mathcal{E}_0$ that can be computed with SDP, which is approximately $\Tr[F_Q^{-1}F_C]\leq 5$.

\end{widetext}

\end{document}